\newtheorem{assumptionD}{Assumption}
\newtheorem{assumptionM}{Assumption}
\newtheorem{lemma}{Lemma}
\newtheorem{proposition}{Proposition}
\theoremstyle{definition} % Upright text
\newtheorem*{example*}{Example}
\title{Merger Analysis with Unobserved Prices\footnote{I thank John Asker, Matthew Backus, Chris Conlon, Diego Cussen, Xiao Dong, Karam Kang, Marc Luppino, Chris Metcalf, Nathan Miller, Scott Orr, Devesh Raval, Ted Rosenbaum, David Schmidt, Gloria Sheu, Eddie Watkins, Brett Wendling, two anonymous referees for the FTC Working Paper Series, and the participants of the workshops at the Federal Trade Commission, Korea Fair Trade Commission, ITAM, and 2024 IIOC in Boston for helpful discussion and comments. Much of this article was written during the author’s tenure at the U.S. Federal Trade Commission; the views expressed in this article are those of the author and do not necessarily represent those of the Federal Trade Commission or any of its Commissioners. This article was originally circulated as ``Merger Analysis with Latent Price.'' All errors are mine.} }
\author{Paul S. Koh\footnote{School of Economics, Yonsei University. 50, Yonsei-ro, Seodaemun-gu, Seoul, Republic of Korea. Email: \texttt{paulkoh9@gmail.com}.}}
\date{August 30, 2025}
\begin{document}

\maketitle
\begin{abstract}
Standard empirical tools for merger analysis assume price data, which are often unavailable. I characterize sufficient conditions for identifying the unilateral effects of mergers without price data using the first-order approach and merger simulation. Data on merging firms' revenues, margins, and revenue diversion ratios are sufficient to identify their gross upward pricing pressure indices and compensating marginal cost reductions. Standard discrete-continuous demand assumptions facilitate the identification of revenue diversion ratios as well as the feasibility of merger simulation in terms of percentage change in price. I apply the framework to the Albertsons/Safeway (2015) and Staples/Office Depot (2016) mergers.
    
    \bigskip
    \noindent \textbf{Keywords}: Merger, unilateral effect, diversion ratio, upward pricing pressure, merger simulation
\end{abstract} \clearpage
\section{Introduction}

\subsection{Motivation}
A horizontal merger is said to generate unilateral effects when it reduces competition between the merging parties and enables them to increase prices. Standard empirical tools for evaluating mergers typically assume that price (and quantity) data are available \citep{davis2009quantitative, valletti2021mergers, miller2021quantitative}. Merger simulations use price data to estimate or calibrate demand functions. The first-order approach that measures unilateral effects via upward pricing pressures and compensating marginal cost reductions requires price data as either direct or indirect inputs to the formulas and for estimating demand slopes and curvatures.

%%% Original %%%
% However, researchers often have difficulty accessing reliable price and quantity data. First, price and quantity data may not exist. Second, even if the data exists, the researcher may not have access. Third, even if the researcher can access the data, it may be costly to process it. For example, if a merger involves retail firms (e.g., supermarket chains) whose outlets carry thousands of non-overlapping items with prices that frequently vary due to complex promotion activities, constructing relevant price indices can be non-trivial and time-consuming. The lack of ``clean'' price data often creates considerable challenges for economists and antitrust agencies in predicting the price and welfare effects of mergers.

%%% Revised %%%
However, researchers often face significant challenges in accessing reliable price and quantity data. In some cases, the data may not exist due to limited record-keeping or firms’ unwillingness to share competitively sensitive data. Furthermore, even if the researcher can access the data, it may be costly to process it. As a result, antitrust economists frequently lack the price and quantity data required for standard demand and merger analysis. While tools like quantity diversion ratios can still be used without price data when quantity data are available, predicting price effects remains difficult without additional information.

\subsection{Main Findings}
This article develops an empirical framework for estimating the unilateral effects of horizontal mergers without price data, extending the scope of applicability of the first-order approach and merger simulation. I adopt the standard empirical Bertrand-Nash multiproduct oligopoly framework. Unlike conventional merger analysis, however, I study the identifiability of unilateral effects using revenue and margin data---metrics that firms routinely track in their accounting and managerial practices---while treating prices and quantities as unobserved.\footnote{Firms routinely track revenues and margins in the ordinary course of their accounting and managerial practices because these measures are central to financial reporting, tax compliance, and internal performance evaluation. Public companies disclose them regularly in annual reports, SEC filings, and earnings calls. Industry trade publications and analyst reports also tend to cite market shares in revenue terms and margin comparisons across firms, rather than price and quantity distributions.} By doing so, my framework bypasses the usual assumption that prices and quantities must be observed separately, offering a feasible approach to merger analysis when collecting such data is costly.

My key idea is to focus on \emph{revenue diversion ratios} instead of \emph{quantity diversion ratios}. Whereas quantity diversion ratios, which track how consumers shift in terms of units purchased, have long been used to assess product substitutability since \citet{shapiro1995mergers}, revenue diversion ratios, which reflect shifts in consumer spending, have received little attention. I describe the firm-side model and definition of revenue diversion ratios in Section \ref{section:2.model}.

In Section \ref{section:3.identification.of.unilateral.effects}, I show that data on merging parties' revenues, margins, and revenue diversion ratios are sufficient to identify their gross upward pricing pressure indices and compensating marginal cost reductions. Using the definition of revenue diversion ratios, I can express firms' first-order conditions and gross upward pricing pressure statistics as functions of their margins, revenue diversion ratios, and own-price demand elasticities. In turn, the own-price demand elasticities of products can be identified from the owner firm's margins and revenue diversion ratios data via the profit maximization condition.\footnote{The standard empirical approach uses estimates of price elasticities to identify margins via the profit maximization conditions. Here, I reverse their roles and use margins to identify own-price elasticities.} Thus, the gross upward pricing pressure statistics are identifiable from the revenues, margins, and revenue diversion ratios of the merging firms. The analyst can translate gross upward pricing pressure indices to first-order merger price and welfare effects if pass-through rates are available \citep{jaffe2013first}. Analogous arguments show that the same data assumption identifies compensating marginal cost reductions. 

Like quantity diversion ratios, revenue diversion ratios can be estimated using internal firm documents, quasi-experiments, surveys, or parametric assumptions, with revenues as the outcome variable. I show that standard discrete-continuous demand assumptions facilitate identification, much like how discrete choice models aid in identifying quantity-based diversion. Rather than assuming unit-inelastic demand, I model consumers as allocating their budgets by making a discrete choice for each unit of expenditure. Under the standard additive random utility framework, this yields expenditure shares consistent with the GEV class and allows application of familiar tools from the discrete choice literature. When consumers have a homogeneous responsiveness to price---a strong but standard assumption in merger analysis---revenue diversion ratios can be identified using \citet{hotz1993conditional}'s inversion, which maps observed market shares to latent utilities. I illustrate this in Section \ref{section:4.consumer.demand} using CES preferences, where expenditure shares take a multinomial logit (softmax) form, enabling identification from consumer spending to merging firms' products or second-choice data. Appendix \ref{subsection:revenue.diversion.ratios.under.discrete.continuous.demand} extends the approach to a broader class of discrete-continuous demand models.

Next, I turn to merger simulation and provide sufficient conditions for conducting it without price data, thereby complementing the existing literature. Under a discrete-continuous demand assumption, I demonstrate that while post-merger price levels are not identifiable, the percentage change in prices relative to pre-merger levels is. The key is to express post-merger pricing conditions as known functions of percentage price deviations, which is possible under the assumption of log-linearity in price. However, merger simulation requires observing all competitors' revenues and margins, making it more data-intensive than the first-order approach. In practice, the analyst can recover non-merging firms' margins using demand assumptions and data on merging firms' margins. Section \ref{section:5.merger.simulation} illustrates the implementation using CES preferences, and Appendix \ref{subsection:merger.simulation.under.discrete.continuous.demand} extends the logic to a broader class of discrete-continuous models.

I illustrate the usefulness of my methodology with two empirical applications, both of which show how revenue-based measures can ease the burden of collecting product-level price and quantity data for firms with thousands of products, by aggregating sales into a single revenue index at the firm or store level and treating the underlying price and quantity components as latent. First, in Section \ref{section:6.empirical.application}, I apply the framework to the Staples/Office Depot merger, which was proposed in 2014 but eventually blocked in 2016. This empirical example is chosen to illustrate the simplicity of my approach. In its complaint, the FTC claimed a cluster market for consumable office supplies that includes many individual items. The nature of the claimed market makes it challenging to construct price data that fit standard empirical frameworks. My framework does not require price data. Using publicly available data on the firms' revenues, margins, and market shares, I evaluate the merger's unilateral effects using the first-order approach and merger simulation.  All approaches predict either the merger produces substantial harm or requires significant cost reductions to offset upward pricing pressures. 

Next, in Section \ref{section:7.empirical.application.II}, I apply the framework to analyze the Albertsons/Safeway merger, the largest grocery merger in US history. This empirical example illustrates the scalability of my approach. The Federal Trade Commission approved the merger in 2015, conditional on divesting 168 stores. I estimate a spatially aggregated nested-CES demand model using cross-sectional data on store revenues to estimate revenue diversion ratios. I then calculate store-level merger price/welfare effects for thousands of stores before and after the divestiture. I find that the FTC-mandated divestiture significantly reduced annual consumer harm. I also estimate the distribution of cost efficiencies required to offset upward pricing pressures. My results inform how the FTC assigned cost-efficiency credits to the parties' stores during the merger investigation.
\subsection{Relationship to the Literature}
This article primarily contributes to the merger analysis literature by developing an empirical framework for analyzing horizontal mergers without price data. It offers four main contributions relative to prior work. First, it complements the literature on empirical merger tools based on easily computed sufficient statistics \citep{werden1996robust, farrell2010antitrust, jaffe2013first, affeldt2013upward, miller2013using, weyl2013pass, brito2018unilateral, miller2021quantitative}. Building on the standard Bertrand-Nash framework, I provide a novel identification strategy for estimating first-order unilateral effects using \emph{revenue diversion ratios}---a topic that has received little attention, with the exception of \citet{caradonna2023mergers}, who study merger-induced entry under (nested) multinomial and CES demand. My identification arguments relax the data requirements in \citet{werden1996robust}, \citet{farrell2010antitrust} and \citet{jaffe2013first}, who assumed prices and quantities are observed separately (along with margins). 

Second, this article contributes to a large body of research on merger simulation \citep{hausman1994competitive, werden1994effects, nevo2000mergers, epstein2001merger, werden2002antitrust}. I show that I can calculate percentage changes in price without estimating a complete set of demand function parameters. My results are similar to those of \citet{miller2014modeling, miller2017modeling}, which show that percentage changes in markups can be identified from merging firms' market shares in procurement settings.

Third, this article contributes to a body of research that finds the value of employing discrete-continuous demand assumptions for theoretical and empirical merger analysis \citep{anderson1987ces, anderson1988ces, anderson1992discrete, bjornerstedt2016does, nocke2018multiproduct, nocke2022concentration, taragin2022antitrust, caradonna2023mergers, nocke2023aggregative, garrido2024aggregative}. Employing a discrete-continuous demand assumption facilitates the identification of revenue diversion ratios and merger simulation when price data are absent. Specifically, assuming consumers' expenditures to products are generated from an additive random utility model for every unit of budget allows the analyst to apply the standard econometric results from the GEV discrete choice literature. While a common approach to calculating gross upward pricing pressure indices in the absence of price data has been to use revenues as proxies for quantities or to assume that prices of merging firms' products are approximately equal \citep{ferguson2023economics}, I show that such ad hoc assumptions are unnecessary and may generate substantial bias. This article also relates to a body of research that recognizes how the CES demand assumption allows the analyst to estimate production functions with revenue data when prices and quantities are unobserved \citep{klette1996inconsistency, de2011product, grieco2016production, gandhi2020identification}.

Finally, my empirical applications complement the existing body of research on retail merger retrospectives and divestiture remedies \citep{smith2004supermarket, hosken2016can, allain2017retail, thomassen2017multi, hosken2018retail, ellickson2020measuring}. My empirical applications are new in the literature. The closest to my work are \citet{smith2004supermarket} and \citet{ellickson2020measuring}, which also study supermarket mergers. \citet{smith2004supermarket} also uses profit margins data and equilibrium pricing conditions to analyze supermarket competition but develops a structural model tailored to fit consumer shopping patterns data in the supermarket industry. In contrast, my framework follows the standard empirical Bertrand-Nash pricing model. \citet{ellickson2020measuring} develops a spatial demand framework that overcomes the absence of price data in the grocery competition setting but does not directly calculate merger price effects. My approach can calculate merger price effects under the same set of assumptions.

To clarify its scope, my framework is most useful for analyzing standard Bertrand-Nash competition settings where the analyst lacks separate price and quantity data but observes revenues and margins. Retail mergers---such as the recent Kroger/Albertsons merger \citep{ftc2024kroger}, which resembles my Albertsons/Safeway empirical example---are well-suited applications. The framework also helps guide analysts on the types of data needed for merger analysis, showing that traditional price and quantity data may be bypassed if percentage price changes can be used in place of price levels. The analyst can choose between the standard framework and my framework depending on the data availability scenario. I do not claim that estimating revenue diversion ratios is always feasible or inherently simpler than estimating quantity diversion ratios, but only that they are often more tractable when standard tools are impractical due to data constraints. While price data is often unavailable in various settings, this paper focuses on cases where Bertrand-Nash competition is a reasonable approximation. The framework does not extend to other important models---such as those involving bargaining, auctions, nonlinear or dynamic pricing, or collusion---which merit further study but are beyond the scope of this work.
\subsection{Outline}

The rest of the article is organized as follows. In Section \ref{section:2.model}, I describe the firm-side model and introduce the concept of revenue diversion ratios. In Section \ref{section:3.identification.of.unilateral.effects}, I establish identification conditions for gross upward pricing pressure indices and compensating marginal cost reductions. In Section \ref{section:4.consumer.demand}, I characterize consumer demand assumptions that facilitate the estimation of revenue diversion ratios. In Section \ref{section:5.merger.simulation}, I show merger simulation is feasible. In Sections \ref{section:6.empirical.application} and \ref{section:7.empirical.application.II}, I apply the proposed methodology to evaluate the Staples/Office Depot merger (2016) and the Albertsons/Safeway merger (2015). Finally, I conclude in Section \ref{section:99.conclusion}. All proofs are in Appendix \ref{section:proof}. Online Appendix can be found at the \href{https://www.pskoh.com}{author's website}.

\section{Model \label{section:2.model}}
In this section, I describe the firm-side model. I also introduce the concept of revenue diversion ratio and discuss its relationship with quantity diversion ratios.
\subsection{Setup}
The firm-side model primitives are summarized by a tuple $\langle \mathcal{J}, \mathcal{F}, (c_j)_{j \in \mathcal{J}} \rangle$, where $\mathcal{J}$ is the set of products, $\mathcal{F}$ is the set of multiproduct firms, and $c_j \in \mathbb{R}_+$ specifies the product $j$'s (constant) marginal cost. Set $\mathcal{F}$ forms a partition over $\mathcal{J}$, specifying firms' ownership over products. The set of products owned by firm $F \in \mathcal{F}$ is denoted $\mathcal{J}_F \subseteq \mathcal{J}$. I assume the profit from product $j \in \mathcal{J}$ is $\pi_j = (p_j - c_j) q_j$ where $p_j \in \mathbb{R}_+$ and $q_j = q_j(p)$ denote price and quantity demanded, respectively; I omit fixed costs for notational convenience. I also assume that the products are substitutes. I use $m_j \equiv (p_j - c_j)/p_j$ to denote relative margins.
\subsection{Firm's Problem}
The multiproduct firms engage in a Bertrand-Nash pricing game. Each firm $F \in \mathcal{F}$ maximizes its total profit $\sum_{j \in \mathcal{J}_F} \pi_j$ with respect to a vector of prices $(p_j)_{j \in \mathcal{J}_F}$. Normalizing the first-order conditions to be quasilinear in margins yields
\begin{equation}\label{equation:optimal.pricing.equation.2}
    -\epsilon_{jj}^{-1} - m_j + \sum_{l \in \mathcal{J}_F \backslash j} m_l D_{j \to l} \frac{p_l}{p_j} = 0,
\end{equation}
where $\epsilon_{jj} \equiv \frac{\partial q_j}{\partial p_j} \frac{p_j}{q_j}$ is the own-price elasticity of demand, and 
\begin{equation}\label{equation:quantity.diversion.ratio}
    D_{j \to l} \equiv - \frac{\partial q_l / \partial p_j}{\partial q_j / \partial p_j}
\end{equation}
is the quantity diversion ratio from product $j$ to product $l$. I defer the review and derivations of the firms' optimal pricing equations, gross upward pricing pressure indices, and compensating marginal cost reductions to Online Appendix \ref{section:B.review.of.unilateral.effects.analysis}, as these are standard results.
\subsection{Revenue Diversion Ratios and Their Properties}
\paragraph*{Definition}
Standard empirical frameworks for mergers' unilateral effects analysis have focused on \emph{quantity diversion ratios} \eqref{equation:quantity.diversion.ratio} as the key statistics for measuring substitutability between products \citep{shapiro1995mergers, farrell2010antitrust, conlon2021empirical}. To establish the identifiability of merger price effects without price data, I will rewrite price elasticities and gross upward pricing pressure indices in terms of \emph{revenue diversion ratios}. Revenue diversion ratio from product $j$ to $k$ is defined as
\begin{equation}\label{equation:revenue.diversion.ratio}
    D_{j \to k}^R \equiv - \frac{\partial R_k / \partial p_j}{\partial R_j / \partial p_j},
\end{equation}
where $R_l\equiv p_l \cdot q_l(p)$ is product $l$'s revenue. It measures the substitutability between two products by studying how \emph{revenue} shifts from one product to another following a unilateral price increase. Under standard regularity conditions, revenue diversion ratios are non-negative in equilibrium.\footnote{First, $\partial R_k / \partial p_j \geq 0$ if $j$ and $k$ are substitutes. Second, $\partial R_j / \partial p_j <0$ if and only if $\epsilon_{jj} < -1$, which holds in any Bertrand-Nash equilibrium. In sum, $D_{j \to k}^R > 0$. The only exception is the revenue diversion ratio for a self-pair $D_{j \to j}^R \equiv -1 < 0$.} 

\paragraph*{Relationship to Quantity Diversion Ratios}
Revenue and quantity diversion ratios are different, but they are closely related. Let $\epsilon_{jj} =  \frac{\partial q_j}{\partial p_j} \frac{p_j}{q_j}$ and $\epsilon_{jj}^R =  \frac{\partial R_j}{\partial p_j} \frac{p_j}{R_j}$ denote the own-price elasticity of demand and the own-price elasticity of revenue, respectively. Similarly, let $\epsilon_{kj} =  \frac{\partial q_k}{\partial p_j} \frac{p_j}{q_k}$ and $\epsilon_{kj}^R =  \frac{\partial R_k}{\partial p_j} \frac{p_j}{R_k}$ denote the cross-price elasticities. The following lemma summarizes their relationship.

\begin{lemma}[Relationship between revenue-based and quantity-based measures] \label{lemma:relationship} 
For an arbitrary pair of products $j$ and $k$, 
\begin{enumerate}
    \item \label{lemma:relationship.item.1} $D_{j \to k}^R = - \frac{\epsilon_{kj}^R}{\epsilon_{jj}^R} \frac{R_k}{R_j}$ and $D_{j \to k} = - \frac{\epsilon_{kj}}{\epsilon_{jj}} \frac{q_k}{q_j}$;
    \item \label{lemma:relationship.item.2} $\epsilon_{jj}^R = \epsilon_{jj} + 1$, and $\epsilon_{kj}^R = \epsilon_{kj}$ for $j \neq k$;
    \item \label{lemma:relationship.item.3} $(1 + \epsilon_{jj}^{-1}) D_{j \to k}^R = D_{j \to k} \frac{p_k}{p_j}$ for $j \neq k$, and $D_{j \to j}^R = D_{j \to j} \equiv -1$.
\end{enumerate}
\end{lemma}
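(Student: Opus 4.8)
The plan is to prove the three items by unwinding the definitions and applying the product rule to $R_l = p_l q_l(p)$, taking the claims in the order that lets each build on the previous one. Since item \ref{lemma:relationship.item.3} is an algebraic consequence of items \ref{lemma:relationship.item.1} and \ref{lemma:relationship.item.2}, I would establish the first two and then combine them.

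First, for item \ref{lemma:relationship.item.1}, I would rewrite each diversion ratio in terms of its defining elasticities. From $\epsilon_{kj}^R = (\partial R_k/\partial p_j)(p_j/R_k)$ I get $\partial R_k/\partial p_j = \epsilon_{kj}^R R_k/p_j$, and likewise $\partial R_j/\partial p_j = \epsilon_{jj}^R R_j/p_j$. Substituting into the definition $D_{j \to k}^R = -(\partial R_k/\partial p_j)/(\partial R_j/\partial p_j)$ cancels the common factor $p_j$ and yields $D_{j \to k}^R = -(\epsilon_{kj}^R/\epsilon_{jj}^R)(R_k/R_j)$. The identical manipulation with $q$ in place of $R$ gives the quantity version. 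This step is purely definitional and needs only $\partial R_j/\partial p_j \neq 0$, which holds away from the revenue-maximizing price.

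Second, for item \ref{lemma:relationship.item.2}, I would differentiate $R_j = p_j q_j$ by the product rule. For the own derivative, $\partial R_j/\partial p_j = q_j + p_j(\partial q_j/\partial p_j)$; multiplying by $p_j/R_j = 1/q_j$ gives $\epsilon_{jj}^R = 1 + \epsilon_{jj}$. For the cross derivative with $j \neq k$, the price $p_k$ is held fixed, so $\partial R_k/\partial p_j = p_k(\partial q_k/\partial p_j)$; multiplying by $p_j/R_k = p_j/(p_k q_k)$ gives $\epsilon_{kj}^R = \epsilon_{kj}$. The only thing to watch is that $p_k$ does not vary with $p_j$ in the cross case, so no extra additive term appears.

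Finally, for item \ref{lemma:relationship.item.3}, I would combine the two previous results. Substituting $\epsilon_{kj}^R = \epsilon_{kj}$, $\epsilon_{jj}^R = \epsilon_{jj}+1$, and $R_k/R_j = (p_k q_k)/(p_j q_j)$ into the item-\ref{lemma:relationship.item.1} expression for $D_{j \to k}^R$ gives $D_{j \to k}^R = -[\epsilon_{kj}/(\epsilon_{jj}+1)](p_k q_k)/(p_j q_j)$. Multiplying through by $(1+\epsilon_{jj}^{-1}) = (\epsilon_{jj}+1)/\epsilon_{jj}$ collapses the factor $\epsilon_{jj}+1$ and leaves $-(\epsilon_{kj}/\epsilon_{jj})(q_k/q_j)(p_k/p_j)$, which is exactly $D_{j \to k}(p_k/p_j)$ by the quantity identity in item \ref{lemma:relationship.item.1}. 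The self-pair claim $D_{j \to j}^R = D_{j \to j} \equiv -1$ is immediate, since numerator and denominator coincide in each definition. There is no genuine obstacle here: the lemma is bookkeeping, and the only subtlety is the asymmetric treatment of own versus cross derivatives---the extra $+1$ in the own-elasticity relation is precisely what propagates into the $(1+\epsilon_{jj}^{-1})$ wedge between revenue and quantity diversion.
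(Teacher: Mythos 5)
Your proof is correct and follows essentially the same route as the paper's: item \ref{lemma:relationship.item.1} by definitional rewriting, item \ref{lemma:relationship.item.2} by the product rule on $R_l = p_l q_l$, and the same regularity caveats ($\partial R_j/\partial p_j \neq 0$, $\epsilon_{jj} \neq 0$) left implicit. The only cosmetic difference is that you obtain item \ref{lemma:relationship.item.3} by composing items \ref{lemma:relationship.item.1} and \ref{lemma:relationship.item.2}, whereas the paper recomputes it directly from the definition of $D_{j \to k}^R$ with an indicator $\mathbb{I}_{j=k}$ handling the self-pair case in one formula; the underlying algebra is identical.
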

Lemma \ref{lemma:relationship} uses the algebraic definition of diversion ratios and elasticities and is thus independent of the underlying demand model. Lemma \ref{lemma:relationship}.\ref{lemma:relationship.item.1} relates diversion ratios to own-/cross-price elasticities. Lemma \ref{lemma:relationship}.\ref{lemma:relationship.item.2} shows how the own-/cross-price elasticities are related. Finally, Lemma \ref{lemma:relationship}.\ref{lemma:relationship.item.3} shows how quantity diversion ratios may be substituted out for revenue diversion ratios.

Lemma \ref{lemma:relationship}.\ref{lemma:relationship.item.3} plays a key role in my identification arguments. The terms $D_{j \to k} \frac{p_k}{p_j}$ enter the firms' optimal pricing equation, gross upward pricing pressure indices, and compensating marginal cost reductions. Conventional approaches assume price and quantity data to calculate them. However, for $j \neq k$,
\[
D_{j \to k}\frac{p_k}{p_j} = - \frac{ \frac{\partial q_k}{\partial p_j} p_k}{ \frac{\partial q_j}{\partial p_j} p_j} =- \frac{\frac{\partial R_k}{\partial p_j}}{\frac{\partial R_j}{\partial p_j} - q_j} = \underbrace{\left( -\frac{\frac{\partial R_k}{\partial p_j}}{\frac{\partial R_j}{\partial p_j}}\right)}_{D_{j \to k}^R} 
\underbrace{ \left( \frac{\frac{\partial R_j}{\partial p_j} }{\frac{\partial R_j}{\partial p_j} - q_j} \right)}_{1 + \epsilon_{jj}^{-1}} .
\]
That is, the analyst can replace the term $D_{j \to k} \frac{p_k}{p_j}$ with the revenue diversion ratio $D_{j \to k}^R$, multiplied by an ``adjustment factor'' $(1 + \epsilon_{jj}^{-1})$. In the following sections, I show how to estimate revenue diversion ratios and own-price elasticities of demand based on revenue and margin data.

\paragraph*{Sum of Diversion Ratios Over Products} 
The summation property of diversion ratios depends on the underlying demand model. In pure discrete choice models, where total quantity is fixed, quantity diversion ratios from a product to all alternatives (including the outside option) sum to one \citep{conlon2021empirical}. However, revenue diversion ratios generally do not, since total market revenue may vary. Conversely, in discrete-continuous choice models with a fixed consumer budget, revenue diversion ratios sum to one, but quantity diversion ratios need not, as the total number of units consumed can vary. Note that in a standard pure discrete choice model, where the outside good is assumed to have a price of zero, the revenue diversion ratio to the outside option is necessarily zero.

%\todo{Is there a rigorous result that allows me to use uni-dimensional price?}

\section{Identification of Unilateral Effects \label{section:3.identification.of.unilateral.effects}}

In this section, I show how to identify first-order unilateral effects with the data on merging firms' revenues and margins.

\subsection{Assumptions on Data}
\begin{assumptionD}[Baseline data] \label{assumption:data}
    The analyst observes revenues and relative margins for the merging parties' products.
\end{assumptionD}
\begin{assumptionD}[Revenue diversion ratios data] \label{assumption:revenue.diversion.ratios.data}
The analyst observes the revenue diversion ratios for all pairs of the merging parties' products.
\end{assumptionD}
\begin{assumptionD}[Merger-specific efficiencies] \label{assumption:cost.savings}
The analyst observes the percentage decrease in marginal costs for the merging parties' products.
\end{assumptionD}
\begin{assumptionD}[Pass-through rates] \label{assumption:upp.approximates.true.price.effect} 
The analyst observes the merger pass-through rates.
\end{assumptionD}

Assumption \ref{assumption:data} requires product-level revenues and margins for the merging firms but does not require separate data on prices and quantities.\footnote{To enhance readability, I label assumptions about the data availability with the prefix D and those about model primitives with the prefix M.} Assumption \ref{assumption:revenue.diversion.ratios.data} departs from the standard focus on quantity diversion ratios by instead requiring revenue diversion ratios between the merging firms' products. Assumption \ref{assumption:cost.savings} requires the analyst to know the expected merger-specific marginal cost savings that can offset upward pricing incentives. Finally, Assumption \ref{assumption:upp.approximates.true.price.effect} is needed to translate upward pricing pressure into predicted price effects. While some of these assumptions may be demanding, or even more so than those underlying standard approaches based on price and quantity data, they provide a foundation for developing an alternative approach when traditional tools are inapplicable. I defer a more detailed discussion of the measurement challenges and practical implementation to Section \ref{section:unilateral.effects.discussion}.

\subsection{Identification of Gross Upward Pricing Pressure Index}

A primary measure of interest in this paper is the \emph{gross upward pricing pressure index (GUPPI)}, defined as the upward pricing pressure normalized by the pre-merger price to make the statistic unit-free, i.e., $\mathit{GUPPI}_j \equiv \mathit{UPP}_j / p_j$ \citep{salop2009updating, moresi2010use}. Consider a merger between two firms $A$ and $B$. As I derive in Online Appendix \ref{section:B.review.of.unilateral.effects.analysis}, the GUPPI associated with product $j \in \mathcal{J}_A$ can be written as
\begin{equation}\label{equation:gross.upward.pricing.pressure}
    \mathit{GUPPI}_j = \ddot{c}_j (1-m_j) + \sum_{k \in \mathcal{J}_B} m_k D_{j \to k} \frac{p_k}{p_j},
\end{equation}
where $\ddot{c}_j \equiv  (c_j^\text{post} - c_j^\text{pre})/ c_j^\text{pre} \in (-1,0)$ represents the percentage decrease in marginal cost from the pre-merger equilibrium due to merger-specific efficiencies; by convention, I omit the superscript ``pre'' when it is clear the object is being evaluated at the pre-merger equilibrium. Firm $B$'s GUPPIs are defined symmetrically.

I characterize GUPPIs as functions of data in Assumptions \ref{assumption:data}--\ref{assumption:cost.savings} as follows. Using Lemma \ref{lemma:relationship}.\ref{lemma:relationship.item.3}, I can rewrite the first-order conditions for an arbitrary firm $F \in \mathcal{F}$ (equation \eqref{equation:optimal.pricing.equation.2}) and the gross upward pricing pressure indices for product $j\in \mathcal{J}_A$ (equation \eqref{equation:gross.upward.pricing.pressure}) as
\begin{gather}
    -\epsilon_{jj}^{-1} - m_j + (1 + \epsilon_{jj}^{-1}) \sum_{l \in \mathcal{J}_F \backslash j} m_l D_{j \to l}^R = 0, \label{equation:first.order.conditions.rewritten} \\
    \mathit{GUPPI}_j = \ddot{c}_j ( 1- m_j) + (1 + \epsilon_{jj}^{-1}) \sum_{k \in \mathcal{J}_B} m_k D_{j \to k}^R, \label{equation:guppi.rewritten}
\end{gather}
respectively. Solving \eqref{equation:first.order.conditions.rewritten} for $\epsilon_{jj}$ characterizes the own-price elasticity of demand as a closed-form function of own-firm margins and revenue diversion ratios.
\begin{lemma}[Identification of own-price elasticities of demand] \label{lemma:own.price.elasticity}
Let $F \in \mathcal{F}$ be an arbitrary firm. For each product $j \in \mathcal{J}_F$,
    \begin{equation}\label{equation:own.price.elasticity}
    \epsilon_{jj} = - \frac{
    1 - \sum_{k \in \mathcal{J}_F \backslash j} m_k D_{j \to k}^R
    }{
    m_j - \sum_{k \in \mathcal{J}_F \backslash j} m_k D_{j \to k}^R
    }.
    \end{equation}
\end{lemma}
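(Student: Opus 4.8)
The plan is to derive equation \eqref{equation:own.price.elasticity} directly by algebraically solving the rewritten first-order condition \eqref{equation:first.order.conditions.rewritten} for $\epsilon_{jj}$. Since Lemma \ref{lemma:relationship}.\ref{lemma:relationship.item.3} has already converted the original pricing equation \eqref{equation:optimal.pricing.equation.2} into the form involving revenue diversion ratios, the remaining task is purely mechanical: isolate the elasticity term from an equation in which it appears both directly (as $-\epsilon_{jj}^{-1}$) and inside the adjustment factor $(1 + \epsilon_{jj}^{-1})$.

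First I would abbreviate the sum over the firm's other products, writing $S \equiv \sum_{k \in \mathcal{J}_F \backslash j} m_k D_{j \to k}^R$, so that \eqref{equation:first.order.conditions.rewritten} becomes $-\epsilon_{jj}^{-1} - m_j + (1 + \epsilon_{jj}^{-1}) S = 0$. Next I would expand the product and collect the terms containing $\epsilon_{jj}^{-1}$: the equation reads $-\epsilon_{jj}^{-1} - m_j + S + \epsilon_{jj}^{-1} S = 0$, which groups into $\epsilon_{jj}^{-1}(S - 1) + (S - m_j) = 0$. Solving gives $\epsilon_{jj}^{-1} = \dfrac{m_j - S}{S - 1} = -\dfrac{m_j - S}{1 - S}$, and inverting yields
\[
\epsilon_{jj} = -\frac{1 - S}{m_j - S},
\]
which is exactly \eqref{equation:own.price.elasticity} once $S$ is substituted back. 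I would also briefly note that the statement applies to every firm $F$ because \eqref{equation:first.order.conditions.rewritten} was established for an arbitrary $F \in \mathcal{F}$.

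There is no genuine obstacle here, as the result is a one-line rearrangement of an equation already in hand; the only points worth flagging are well-definedness and sign. I would remark that the denominator $m_j - S$ is nonzero precisely because the own-price elasticity is finite, and that under the regularity conditions discussed earlier (margins in $(0,1)$ and nonnegative revenue diversion ratios summing appropriately), the expression returns $\epsilon_{jj} < -1$, consistent with the Bertrand-Nash equilibrium requirement noted in the footnote to the definition of revenue diversion ratios. This sanity check confirms the formula delivers an elasticity in the economically admissible range rather than merely a formal solution.
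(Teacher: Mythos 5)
Your proposal is correct and takes exactly the same route as the paper, whose proof is the one-line statement that solving equation \eqref{equation:first.order.conditions.rewritten} for $\epsilon_{jj}$ yields \eqref{equation:own.price.elasticity}; you simply spell out that algebra explicitly, and your added remarks on sign and well-definedness are consistent with the paper's discussion. No gaps.
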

When firms produce only a single product, equation \eqref{equation:own.price.elasticity} reduces to the familiar single-product Lerner equation, $\varepsilon_{jj} = -1/m_j$.  While the standard demand estimation approach \`{a} la \citet{rosse1970estimating} recovers the markup $m_j$ from an estimate of the own-price elasticity $\varepsilon_{jj}$, I take the reverse approach. Viewed this way, Lemma \ref{lemma:own.price.elasticity} generalizes the idea to the multi-product setting, but clarifying that within-firm margins and revenue diversion ratios are required to identify the firm's own-price elasticities of demand.

Plugging in \eqref{equation:own.price.elasticity} into \eqref{equation:guppi.rewritten} expresses GUPPIs as known functions of merging firms' margins, revenue diversion ratios, and cost savings. Accordingly, my data assumptions allow for the identification of GUPPIs (but not unnormalized UPPs).
%\footnote{Once the own-price elasticities are identified, the cross-price elasticities can also be recovered using the relationships $D_{jk}^R = - \frac{\epsilon_{kj}^R}{\epsilon_{jj}^R} \frac{R_k}{R_j}$, $\epsilon_{jj}^R = \epsilon_{jj} + 1$, and $\epsilon_{kj}^R = \epsilon_{kj}$.}

\begin{proposition}[Identification of GUPPI] \label{proposition:guppis.are.identified}
Under Assumptions \ref{assumption:data},  \ref{assumption:revenue.diversion.ratios.data}, and \ref{assumption:cost.savings}, the gross upward pricing pressure indices are identified.
\end{proposition}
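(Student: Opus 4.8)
The plan is to show that the right-hand side of the GUPPI formula \eqref{equation:guppi.rewritten} can be written entirely in terms of quantities that Assumptions \ref{assumption:data}--\ref{assumption:cost.savings} render observable, so that each $\mathit{GUPPI}_j$ is pinned down as a known function of the data. The heavy lifting has essentially been done by Lemmas \ref{lemma:relationship} and \ref{lemma:own.price.elasticity}; the proposition is the assembly step. First I would fix a product $j \in \mathcal{J}_A$ (the argument for $j \in \mathcal{J}_B$ being symmetric) and recall the rewritten GUPPI
\[
\mathit{GUPPI}_j = \ddot{c}_j(1 - m_j) + (1 + \epsilon_{jj}^{-1}) \sum_{k \in \mathcal{J}_B} m_k D_{j \to k}^R.
\]
I would then observe that $\ddot{c}_j$ is given by Assumption \ref{assumption:cost.savings}, the margins $m_j$ and $m_k$ by Assumption \ref{assumption:data}, and the cross-firm revenue diversion ratios $D_{j \to k}^R$ by Assumption \ref{assumption:revenue.diversion.ratios.data}. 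Hence the only term not immediately observable is the adjustment factor $1 + \epsilon_{jj}^{-1}$.

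The key step is to eliminate $\epsilon_{jj}$ using Lemma \ref{lemma:own.price.elasticity}. Substituting the closed-form expression \eqref{equation:own.price.elasticity} (applied with $F = A$, so that the sum over $\mathcal{J}_F \backslash j$ uses only within-firm margins and within-firm revenue diversion ratios, both observable under Assumptions \ref{assumption:data} and \ref{assumption:revenue.diversion.ratios.data}), I would compute
\[
1 + \epsilon_{jj}^{-1} = 1 - \frac{m_j - \sum_{k \in \mathcal{J}_A \backslash j} m_k D_{j \to k}^R}{1 - \sum_{k \in \mathcal{J}_A \backslash j} m_k D_{j \to k}^R} = \frac{1 - m_j}{1 - \sum_{k \in \mathcal{J}_A \backslash j} m_k D_{j \to k}^R},
\]
which is a ratio of observables. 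Plugging this back yields $\mathit{GUPPI}_j$ as an explicit function of margins, revenue diversion ratios (both within-firm for $A$ and cross-firm to $B$), and the cost-saving parameter $\ddot{c}_j$, completing the identification for products in $\mathcal{J}_A$; the symmetric computation with the roles of $A$ and $B$ interchanged handles $j \in \mathcal{J}_B$.

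The main obstacle is not algebraic but a matter of verifying that the data genuinely suffice, i.e., that the within-firm diversion ratios entering \eqref{equation:own.price.elasticity} and the cross-firm diversion ratios entering the GUPPI summation are all covered by Assumption \ref{assumption:revenue.diversion.ratios.data}, which posits revenue diversion ratios ``for all pairs of the merging parties' products.'' I would make explicit that this phrase includes both the $A$-to-$A$ (and $B$-to-$B$) within-firm pairs needed for the elasticity inversion and the $A$-to-$B$ (and $B$-to-$A$) cross-firm pairs needed for the pricing-pressure term. A secondary point worth a remark is well-definedness: the denominator $1 - \sum_{k \in \mathcal{J}_A \backslash j} m_k D_{j \to k}^R$ must be nonzero, which follows from $\epsilon_{jj}$ being finite and, in a Bertrand-Nash equilibrium, $\epsilon_{jj} < -1$ (so the adjustment factor $1 + \epsilon_{jj}^{-1}$ lies in $(0,1)$ and the denominator is strictly positive). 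With these points noted, the proof reduces to the substitution displayed above.
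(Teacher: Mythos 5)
Your proof is correct and takes essentially the same route as the paper's: the paper's own (one-line) proof observes that $\ddot{c}_j$, margins, and revenue diversion ratios are given by Assumptions \ref{assumption:data}--\ref{assumption:cost.savings}, and that $\epsilon_{jj}$ is a known function of within-firm margins and revenue diversion ratios via Lemma \ref{lemma:own.price.elasticity}. Your explicit simplification $1 + \epsilon_{jj}^{-1} = (1-m_j)\big/\bigl(1 - \sum_{k \in \mathcal{J}_A \backslash j} m_k D_{j \to k}^R\bigr)$, the check that Assumption \ref{assumption:revenue.diversion.ratios.data} covers both within-firm and cross-firm pairs, and the well-definedness remark are correct elaborations of that same argument.
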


Unlike the standard approaches, Proposition \ref{proposition:guppis.are.identified} leverages the firms' profit maximization conditions to identify GUPPI. When prices, costs, and quantity diversion ratios are observed---as in \citet{farrell2010antitrust}---upward pricing pressures and thus GUPPIs can be directly calculated, leaving no role for the firms' first-order conditions. However, when prices are unobserved, the analyst can instead use the first-order conditions to recover the merging firms' own-price elasticities via Lemma \ref{lemma:own.price.elasticity}. Combined with margins and revenue diversion ratios, these elasticities identify the GUPPIs. 

When prices are unobserved, economists often approximate GUPPI by assuming a revenue proxy for quantities to compute quantity diversion ratios and that $p_k / p_j \approx 1$ \citep{ferguson2023economics}. However, as I demonstrate through my empirical application in Section \ref{section:6.empirical.application}, these approximations are not innocuous. My identification results show that such ad hoc assumptions are unnecessary.

\subsection{Identification of Merger Price Effects}
GUPPIs are informative of the signs of price effects, but translating GUPPIs to merger price effects requires pass-through rates \citep{farrell2010antitrust}. Based on a first-order approximation argument, \citet{jaffe2013first} shows merger price effects can be approximated as $\ddot{p} \approx M \cdot \mathit{GUPPI}$, where $\ddot{p} = (\ddot{p}_j)_{j \in \mathcal{J}_A \cup \mathcal{J}_B}$ is the vector of percentage change in prices due to merger, and $M$ is the \emph{merger pass-through matrix}.\footnote{Note that my merger pass-through rate matrix differs slightly from \citet{jaffe2013first}'s since I use GUPPIs instead of UPPs. Let $\Lambda$ be a diagonal matrix of $1/p_j$'s. \citet{jaffe2013first} shows $\Delta p  \approx M^* \cdot \mathit{UPP}$. Since $\mathit{GUPPI} = \Lambda \mathit{UPP}$ and $\ddot{p} = \Lambda \Delta p$, I get $\ddot{p} = \Lambda \Delta p = \Lambda M^* \mathit{UPP} = \Lambda M^* \Lambda^{-1} \Lambda \mathit{UPP} = \Lambda M^* \Lambda^{-1} \mathit{GUPPI}$, so $M = \Lambda M^* \Lambda^{-1}$.} Thus, once GUPPIs are known, the knowledge of $M$ identifies the merger price effects up to first-order approximation.

\begin{proposition}[UPP and merger price effects] \label{proposition:upp.and.merger.price.effects}
    Under Assumptions \ref{assumption:data},  \ref{assumption:revenue.diversion.ratios.data}, \ref{assumption:cost.savings}, and \ref{assumption:upp.approximates.true.price.effect}, the merger price effects in percentage terms are identified up to first-order approximation.
\end{proposition}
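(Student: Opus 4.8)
The plan is to treat this as a short composition of Proposition \ref{proposition:guppis.are.identified} with the first-order approximation of \citet{jaffe2013first}, so that the entire argument stays in ``percentage space'' and never invokes price levels. First I would invoke Proposition \ref{proposition:guppis.are.identified}: under Assumptions \ref{assumption:data}, \ref{assumption:revenue.diversion.ratios.data}, and \ref{assumption:cost.savings}, each $\mathit{GUPPI}_j$ is a known function of observed revenues, margins, revenue diversion ratios, and cost savings via equations \eqref{equation:guppi.rewritten} and \eqref{equation:own.price.elasticity}, so the full vector $\mathit{GUPPI} = (\mathit{GUPPI}_j)_{j \in \mathcal{J}_A \cup \mathcal{J}_B}$ is identified.

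Next I would bring in the Jaffe--Weyl relationship. Granting their first-order approximation $\Delta p \approx M^* \cdot \mathit{UPP}$ in price-level/dollar terms, the task is to re-express it in the percentage representation used throughout the paper. Writing $\Lambda = \mathrm{diag}(1/p_j)$, the definitions $\mathit{GUPPI} = \Lambda\, \mathit{UPP}$ and $\ddot{p} = \Lambda\, \Delta p$ give $\ddot{p} = \Lambda \Delta p \approx \Lambda M^* \mathit{UPP} = (\Lambda M^* \Lambda^{-1})(\Lambda\, \mathit{UPP}) = M \cdot \mathit{GUPPI}$, where $M = \Lambda M^* \Lambda^{-1}$ is the GUPPI-based merger pass-through matrix of the footnote. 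This step is a pure diagonal conjugation and uses only the two definitions plus the Jaffe--Weyl linearization, so it is algebraic once that approximation is taken as given.

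Finally, Assumption \ref{assumption:upp.approximates.true.price.effect} supplies $M$ directly, so the right-hand side $M \cdot \mathit{GUPPI}$ is pinned down entirely by observables, which yields $\ddot{p}$ up to the stated first-order approximation. I do not expect a genuine obstacle here, since the proposition is essentially a corollary; the care required is bookkeeping about \emph{which} pass-through object is assumed known. The point is that the conjugation folds the price ratios $p_k/p_j$ into $M$, so as long as Assumption \ref{assumption:upp.approximates.true.price.effect} is read as assuming the GUPPI-based matrix $M$ (rather than the dollar-based $M^*$) is observed, no price levels ever re-enter the argument, and the conclusion follows by composing the identification of GUPPIs with the assumed pass-through matrix.
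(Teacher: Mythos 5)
Your proposal is correct and takes essentially the same route as the paper: the paper's own proof is a two-line composition of Proposition \ref{proposition:guppis.are.identified} (GUPPIs identified from Assumptions \ref{assumption:data}, \ref{assumption:revenue.diversion.ratios.data}, \ref{assumption:cost.savings}) with Assumption \ref{assumption:upp.approximates.true.price.effect} translating GUPPIs into percentage price effects via \citet{jaffe2013first}. The diagonal-conjugation step you spell out, $M = \Lambda M^* \Lambda^{-1}$ with the reading that the observed pass-through object is the GUPPI-based $M$ rather than the dollar-based $M^*$, is exactly the content of the paper's footnote defining its merger pass-through matrix, so your extra bookkeeping matches the paper's intended interpretation rather than departing from it.
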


\subsection{Identification of Welfare Effects}
The first-order approach to merger analysis yields a vector of predicted price changes by the merging firms $(\ddot{p}_j)_{j \in \mathcal{J}_A \cup \mathcal{J}_B}$, holding the prices of non-merging firms fixed. To translate these predicted price effects into welfare implications without imposing a specific functional form for demand, I adopt the following first-order approximation approach.

First, I approximate the total consumer welfare effect as 
\begin{equation}\label{equation:total.welfare.effect}
    \Delta \mathit{CS} = \sum_{j \in \mathcal{J}_A \cup \mathcal{J}_B} \Delta \mathit{CS}_j,
\end{equation}
which sums the product-specific consumer surplus changes $\Delta \mathit{CS}_j$, each based on a \emph{ceteris paribus} price change of product $j$. Measure \eqref{equation:total.welfare.effect} ignores the cross-price effects but is popular since the additive separability allows for a simple calculation \citep{araar2019prices}.\footnote{When consumer preferences are non-homothetic, calculating the welfare impact of multiple price changes can be complicated because the value of the integral may depend on the sequence of price changes---a challenge known in the literature as the \emph{path-dependency problem} \citep{chipman1980compensating}. A common workaround is to approximate the total welfare effect as the sum of welfare changes from individual price changes, thereby avoiding this issue.} The impact on producer surplus is measured analogously.

Next, following \citet{jaffe2013first}, I approximate each consumer surplus variation due to a \emph{ceteris paribus} price increase of product $j$ as $\Delta \mathit{CS}_j \approx -\Delta p_j q_j^0$, where $\Delta p_j = (p_j^1 - p_j^0)>0$. The producer surplus variation (i.e., the impact on the profit of product $j$'s producer) is $\Delta \mathit{PS}_j = (\Delta p_j - \Delta c_j)q_j^1 + (p_j^0 - c_j^0)(q_j^1 - q_j^0)$, where $\Delta c_j = (c_j^1 - c_j^0) < 0$ accounts for the possible marginal cost savings. Based on the first-order approximation $q_j^1 \approx q_j^0 + (\frac{\partial q_j}{\partial p_j}\vert_{p_j=p_j^0})(p_j^1 - p_j^0) = q_j^0(1 + \epsilon_{jj} \ddot{p}_j)$, the following lemma expresses each $\Delta \mathit{CS}_j$ and $\Delta \mathit{PS}_j$ as functions of statistics that are known or identifiable under my previous data assumptions.

\begin{lemma}[Approximation of welfare effects]\label{lemma:approximation.of.welfare.effects}
The first-order approximations to the changes in consumer surplus and producer surplus associated with a price increase of product $j$ are $\Delta \mathit{CS}_j \approx - \ddot{p}_j R_j$ and $\Delta \mathit{PS}_j \approx (\ddot{p}_j - \ddot{c}_j (1-m_j)) R_j ( 1 + \epsilon_{jj} \ddot{p}_j) + \epsilon_{jj} R_j \ddot{p}_j m_j$, respectively.
\end{lemma}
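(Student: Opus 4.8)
The plan is to prove both approximations by direct substitution, since all the economic content has already been set up above: the envelope-type approximation $\Delta \mathit{CS}_j \approx -\Delta p_j q_j^0$, the accounting identity for $\Delta \mathit{PS}_j$, and the first-order quantity expansion $q_j^1 \approx q_j^0(1 + \epsilon_{jj}\ddot{p}_j)$ are all given. What remains is to recast these in terms of the statistics that are known or identifiable under Assumptions \ref{assumption:data}--\ref{assumption:cost.savings}. The identities I would invoke repeatedly are $\Delta p_j = \ddot{p}_j\, p_j^0$, $\Delta c_j = \ddot{c}_j\, c_j^0$, $R_j = p_j^0 q_j^0$, together with the margin bookkeeping $p_j^0 - c_j^0 = m_j p_j^0$ and hence $c_j^0 = (1-m_j) p_j^0$.

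For the consumer-surplus claim, I would substitute $\Delta p_j = \ddot{p}_j p_j^0$ into $\Delta \mathit{CS}_j \approx -\Delta p_j q_j^0$ and recognize the product $p_j^0 q_j^0$ as $R_j$, which gives $\Delta \mathit{CS}_j \approx -\ddot{p}_j R_j$ immediately.

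For the producer-surplus claim, I would expand the two terms of $\Delta \mathit{PS}_j$ separately. In the first term $(\Delta p_j - \Delta c_j) q_j^1$, I rewrite the price--cost wedge as $\Delta p_j - \Delta c_j = p_j^0\big(\ddot{p}_j - \ddot{c}_j(1-m_j)\big)$ using $c_j^0 = (1-m_j)p_j^0$, replace $q_j^1$ by $q_j^0(1 + \epsilon_{jj}\ddot{p}_j)$, and factor out $p_j^0 q_j^0 = R_j$ to get $\big(\ddot{p}_j - \ddot{c}_j(1-m_j)\big) R_j (1 + \epsilon_{jj}\ddot{p}_j)$. In the second term $(p_j^0 - c_j^0)(q_j^1 - q_j^0)$, I substitute $p_j^0 - c_j^0 = m_j p_j^0$ and $q_j^1 - q_j^0 \approx q_j^0 \epsilon_{jj}\ddot{p}_j$, again factoring $R_j$, to obtain $\epsilon_{jj} R_j \ddot{p}_j m_j$. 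Summing the two terms reproduces the stated expression.

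Since every step is a substitution or a collection of common factors, there is no genuine analytical obstacle here. The only care required is bookkeeping: keeping track of which objects are evaluated at the pre- versus post-merger equilibrium, and applying the first-order approximation only to $q_j^1$ while retaining the resulting product $(\ddot{p}_j - \ddot{c}_j(1-m_j))(1+\epsilon_{jj}\ddot{p}_j)$ as written, rather than further linearizing it by discarding the $\epsilon_{jj}\ddot{p}_j^{\,2}$-type cross term—the lemma statement deliberately keeps this factor intact.
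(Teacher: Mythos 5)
Your proposal is correct and follows essentially the same route as the paper's own proof: both start from the approximation $\Delta \mathit{CS}_j \approx -\Delta p_j q_j^0$ and the accounting identity for $\Delta \mathit{PS}_j$, apply the linearization $q_j^1 \approx q_j^0(1+\epsilon_{jj}\ddot{p}_j)$, and use the same bookkeeping identities $R_j = p_j^0 q_j^0$, $c_j^0 = (1-m_j)p_j^0$, and $\Delta c_j = \ddot{c}_j c_j^0$. The only difference is cosmetic grouping (the paper expands $\Delta \mathit{PS}_j$ into three terms before substituting, while you keep $(\Delta p_j - \Delta c_j)q_j^1$ intact), and your closing remark about not further linearizing the $(1+\epsilon_{jj}\ddot{p}_j)$ factor matches the paper's treatment exactly.
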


Since the own-price elasticities can be identified using data on revenues, margins, and revenue diversion ratios, the same data suffices to identify changes in consumer and producer surplus, given estimates of the merger-induced price effects $(\ddot{p}_j)_{j \in \mathcal{J}_A \cup \mathcal{J}_B}$.

\begin{proposition}[Identification of welfare effects]\label{proposition:consumer.welfare}
Under Assumptions \ref{assumption:data}, \ref{assumption:revenue.diversion.ratios.data}, \ref{assumption:cost.savings}, and  \ref{assumption:upp.approximates.true.price.effect}, the welfare effects of a merger---measured in terms of consumer surplus and producer surplus---are identified up to a first-order approximation. 
\end{proposition}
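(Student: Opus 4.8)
The plan is to assemble the result from ingredients already established, since Lemma \ref{lemma:approximation.of.welfare.effects} has reduced each product-level welfare change to a finite list of statistics. Concretely, I would first invoke Lemma \ref{lemma:approximation.of.welfare.effects} to write $\Delta \mathit{CS}_j \approx -\ddot{p}_j R_j$ and $\Delta \mathit{PS}_j \approx (\ddot{p}_j - \ddot{c}_j(1-m_j)) R_j (1 + \epsilon_{jj}\ddot{p}_j) + \epsilon_{jj} R_j \ddot{p}_j m_j$ for each $j \in \mathcal{J}_A \cup \mathcal{J}_B$, and then verify that every object appearing on the right-hand sides is identified under Assumptions \ref{assumption:data}--\ref{assumption:upp.approximates.true.price.effect}.

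The verification proceeds statistic by statistic. Revenues $R_j$ and relative margins $m_j$ are observed directly under Assumption \ref{assumption:data}, and the percentage cost reductions $\ddot{c}_j$ are observed under Assumption \ref{assumption:cost.savings}. The own-price elasticities $\epsilon_{jj}$ are identified by Lemma \ref{lemma:own.price.elasticity}, which expresses each $\epsilon_{jj}$ as a closed-form function of the own firm's margins and within-firm revenue diversion ratios---both available for every merging-party product under Assumptions \ref{assumption:data} and \ref{assumption:revenue.diversion.ratios.data}, since Assumption \ref{assumption:revenue.diversion.ratios.data} supplies $D_{j\to k}^R$ for all pairs of the merging parties' products (in particular the within-firm pairs that Lemma \ref{lemma:own.price.elasticity} requires). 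Finally, the percentage price changes $\ddot{p}_j$ are identified up to first-order approximation by Proposition \ref{proposition:upp.and.merger.price.effects}, which combines the GUPPIs identified in Proposition \ref{proposition:guppis.are.identified} with the merger pass-through matrix supplied by Assumption \ref{assumption:upp.approximates.true.price.effect}.

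Having shown that each $\Delta \mathit{CS}_j$ and $\Delta \mathit{PS}_j$ is identified up to first-order approximation, I would conclude by aggregating via equation \eqref{equation:total.welfare.effect}: the total consumer surplus change is $\Delta \mathit{CS} = \sum_{j} \Delta \mathit{CS}_j$ and the producer surplus change is defined analogously, both summed over $j \in \mathcal{J}_A \cup \mathcal{J}_B$ while holding non-merging prices fixed. Since a finite sum of identified quantities is identified, the claim follows, and crucially the aggregation requires no data beyond the merging parties' own revenues, margins, and revenue diversion ratios.

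The step I expect to require the most care is not any single calculation but keeping the layered approximations straight. There are two distinct layers: $\ddot{p}_j$ is itself only a first-order approximation (in the sense of \citet{jaffe2013first}) to the true percentage price change, and the surplus formulas of Lemma \ref{lemma:approximation.of.welfare.effects} are first-order approximations to the true surplus integrals. The honest statement is therefore that the approximating expressions are \emph{exactly} identified and that they approximate the true welfare effects to first order; the proof should make explicit that ``identified up to a first-order approximation'' refers to this composite, not to exact identification of the welfare levels. No deeper obstacle arises, because the aggregation in \eqref{equation:total.welfare.effect} is additively separable by construction, so the path-dependency concern is sidestepped and no cross-price terms or non-merging-firm primitives enter.
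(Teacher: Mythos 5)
Your proposal is correct and follows essentially the same route as the paper's own (much terser) proof: establish that the own-price elasticities and percentage price changes are identified (via Lemma \ref{lemma:own.price.elasticity} and Proposition \ref{proposition:upp.and.merger.price.effects}), then plug these together with the observed revenues, margins, and cost efficiencies into the expressions of Lemma \ref{lemma:approximation.of.welfare.effects} and sum via equation \eqref{equation:total.welfare.effect}. Your added care in distinguishing the two layers of approximation (the Jaffe--Weyl approximation behind $\ddot{p}_j$ and the first-order surplus formulas) is a useful clarification of what ``identified up to a first-order approximation'' means, but it does not change the argument.
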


Two remarks are in order. First, one can refine the estimate of consumer surplus loss by calculating the area of the trapezoid under a linearly approximated demand curve. This yields the approximation $\Delta \mathit{CS}_j^* \approx -(\Delta p_j q_j^0 + \frac{1}{2} \Delta p_j \Delta q_j) \approx - \ddot{p}_j R_j ( 1 + \frac{1}{2} \epsilon_{jj} \ddot{p}_j)$. Similarly, an upper bound on consumer surplus loss can be calculated as $\Delta \mathit{CS}_j^{**} \approx - \Delta p_j q_j^1 \approx - \ddot{p}_j R_j (1 + \epsilon_{jj} \ddot{p}_j)$. All three measures of consumer surplus loss from a ceteris paribus price change are identified under the same data assumptions and satisfy the ordering $\Delta \mathit{CS}_j < \Delta \mathit{CS}_j^* < \Delta \mathit{CS}_j^{**} < 0$.

Second, while the above relies on a first-order approximation to avoid imposing a specific functional form on demand, more precise welfare estimates are possible under a functional form assumption. For example, as shown in Online Appendix \ref{sec:identification.of.compensating.variation.under.ces.preferences}, compensating variation associated with a vector of price changes can be calculated exactly under a CES preference assumption. This parallels the standard consumer surplus calculation under logit demand, where simultaneous price changes are captured through changes in the inclusive value rather than treated additively.

\subsection{Identification of Compensating Marginal Cost Reductions}
I show how the \emph{compensating marginal cost reductions (CMCR)}---defined as the percentage decrease in the merging parties' costs that would leave the pre-merger prices unchanged after the merger \citep{werden1996robust}---are identified under the previous data assumptions. Let $(c_j^1)_{j \in \mathcal{J}_A \cup \mathcal{J}_B}$ be the vector of post-merger marginal costs that leave the merging parties' prices unchanged after the merger. Let $\ddot{c}_j = (c_j^1 - c_j^0)/c_j^0$ for $j \in \mathcal{J}_A \cup \mathcal{J}_B$ denote the compensating marginal cost reductions, represented in percentage terms relative to pre-merger marginal costs. The following lemma characterizes compensating marginal cost reductions as functions of pre-merger own-price elasticities, margins, and revenue diversion ratios.

\begin{lemma}[CMCR]\label{lemma:cmcr}
The compensating marginal cost reductions for each product $j \in \mathcal{J}_A \cup \mathcal{J}_B$ is equal to
\begin{equation}\label{equation:cmcr.calculation.from.margin}
    \ddot{c}_j = \frac{m_j^0 - m_j^1}{1 - m_j^0},
\end{equation}
where the post-merger margins $m_j^1$'s are implicitly defined by the post-merger first-order conditions
\begin{equation}\label{equation:post.merger.foc.for.cmcr} 
    - \epsilon_{jj}^{-1} - m_j^1 + (1 + \epsilon_{jj}^{-1})\sum_{l \in \mathcal{J}_A \backslash j} m_l^1 D_{j \to l}^R + (1 + \epsilon_{jj}^{-1}) \sum_{k \in \mathcal{J}_B } m_k^1 D_{j \to k}^R = 0.
\end{equation}
\end{lemma}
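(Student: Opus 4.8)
The plan is to separate the claim into the two logically distinct parts that correspond to the two displayed equations in the statement. The first, \eqref{equation:cmcr.calculation.from.margin}, is a purely definitional identity that follows from the fact that the CMCR is \emph{defined} to hold the merging parties' prices fixed at their pre-merger levels. The second, \eqref{equation:post.merger.foc.for.cmcr}, characterizes the implied post-merger margins through the merged firm's first-order conditions. I would establish the first by direct substitution and the second by specializing the rewritten first-order conditions \eqref{equation:first.order.conditions.rewritten} to the merged entity, taking care that all demand-side objects are evaluated at the (unchanged) pre-merger prices.

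First I would prove \eqref{equation:cmcr.calculation.from.margin}. By the definition of the CMCR, the post-merger marginal costs $(c_j^1)$ are precisely those that leave each merging product's price at its pre-merger level, so $p_j^1 = p_j^0 \equiv p_j$ for every $j \in \mathcal{J}_A \cup \mathcal{J}_B$. Using the definition $m_j = (p_j - c_j)/p_j$, I can write $c_j^0 = p_j(1 - m_j^0)$ and $c_j^1 = p_j(1 - m_j^1)$. Substituting these into $\ddot{c}_j = (c_j^1 - c_j^0)/c_j^0$ cancels the common factor $p_j$ and yields $\ddot{c}_j = (m_j^0 - m_j^1)/(1 - m_j^0)$, which is exactly \eqref{equation:cmcr.calculation.from.margin}.

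Next I would characterize the post-merger margins. After the merger, a single firm owns $\mathcal{J}_A \cup \mathcal{J}_B$ and chooses all of these prices jointly, so its profit-maximizing prices satisfy the rewritten first-order conditions \eqref{equation:first.order.conditions.rewritten} with the ownership set $\mathcal{J}_F = \mathcal{J}_A \cup \mathcal{J}_B$. The critical observation is that because the CMCR holds prices fixed at their pre-merger values, the own-price elasticities $\epsilon_{jj}$ and revenue diversion ratios $D_{j \to l}^R$ appearing in these conditions---which are functions of the price vector---coincide with their pre-merger counterparts and are thus identifiable via Lemma \ref{lemma:own.price.elasticity}. Writing the first-order condition for $j \in \mathcal{J}_A$ with the post-merger margins $m_l^1$ and splitting the sum over $(\mathcal{J}_A \cup \mathcal{J}_B) \backslash j$ into the two ownership groups $\mathcal{J}_A \backslash j$ and $\mathcal{J}_B$ delivers \eqref{equation:post.merger.foc.for.cmcr}; the argument for $j \in \mathcal{J}_B$ is symmetric.

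The main subtlety, rather than any single hard computation, is the bookkeeping around which equilibrium each object is evaluated at. The derivation hinges on recognizing that fixing prices by construction freezes the demand-side primitives $\epsilon_{jj}$ and $D_{j \to l}^R$, so that the only quantities differing between the pre- and post-merger first-order conditions are the margins, which absorb the entire effect of the cost reductions. Once this is made explicit, both equations follow by substitution and by specializing \eqref{equation:first.order.conditions.rewritten} to the merged ownership structure.
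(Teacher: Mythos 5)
Your proposal is correct and follows essentially the same route as the paper's proof: equation \eqref{equation:cmcr.calculation.from.margin} is obtained by the definitional identity $p_j^1 = p_j^0$ combined with $c_j = p_j(1-m_j)$, and equation \eqref{equation:post.merger.foc.for.cmcr} by writing the merged entity's first-order conditions \eqref{equation:first.order.conditions.rewritten} with ownership set $\mathcal{J}_A \cup \mathcal{J}_B$ while evaluating elasticities and revenue diversion ratios at their (unchanged) pre-merger values. Your explicit remark that fixing prices freezes the demand-side primitives, so only the margins differ across the two regimes, is exactly the bookkeeping the paper relies on, stated slightly more carefully.
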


Thus, the analyst can calculate the CMCRs by first solving for post-merger margins $m^1 = (m_j^1)_{j \in \mathcal{J}_A \cup \mathcal{J}_B}$ implied by the system of linear equations in \eqref{equation:post.merger.foc.for.cmcr} (evaluating the own-price elasticities and diversion ratios at their pre-merger values) and plugging the solutions to \eqref{equation:cmcr.calculation.from.margin}.

\begin{proposition}[Identification of CMCR]\label{proposition:cmcr}
Under Assumptions \ref{assumption:data} and \ref{assumption:revenue.diversion.ratios.data}, the compensating marginal cost reductions are identified.
\end{proposition}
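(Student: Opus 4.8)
The plan is to combine the explicit CMCR formula from Lemma \ref{lemma:cmcr} with the elasticity identification result from Lemma \ref{lemma:own.price.elasticity}, and then argue that the implicit system defining the post-merger margins admits a unique solution expressible in terms of observables. Under Assumption \ref{assumption:data} the analyst directly observes each pre-merger margin $m_j^0$, and under Assumption \ref{assumption:revenue.diversion.ratios.data} every within-merger revenue diversion ratio $D_{j \to k}^R$ is observed. I would first note that Assumption \ref{assumption:cost.savings} is not needed here: the CMCR is precisely the cost reduction to be recovered, so it is an output of the exercise rather than an input, which is why the proposition invokes only Assumptions \ref{assumption:data} and \ref{assumption:revenue.diversion.ratios.data}.

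First I would invoke Lemma \ref{lemma:own.price.elasticity}, applied firm-by-firm to the pre-merger ownership structure, to identify each pre-merger own-price elasticity $\epsilon_{jj}$ as a closed-form function of the pre-merger margins and within-firm revenue diversion ratios. Since both are observed, each $\epsilon_{jj}$ is identified. Because the post-merger conditions \eqref{equation:post.merger.foc.for.cmcr} are evaluated at the pre-merger elasticities and diversion ratios, this step supplies every coefficient appearing in that system. I would then treat \eqref{equation:post.merger.foc.for.cmcr} as a linear system in the unknown post-merger margins $m^1 = (m_j^1)_{j \in \mathcal{J}_A \cup \mathcal{J}_B}$. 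Collecting terms, the system takes the form $(I - \Omega)\, m^1 = b$, where the off-diagonal entries are $\Omega_{jl} = (1 + \epsilon_{jj}^{-1}) D_{j \to l}^R$ for $l \neq j$, the diagonal entries vanish, and $b_j = -\epsilon_{jj}^{-1}$. All entries of $\Omega$ and $b$ are identified by the previous step, so once $I - \Omega$ is shown to be invertible one obtains $m^1 = (I - \Omega)^{-1} b$. Substituting each component into \eqref{equation:cmcr.calculation.from.margin} then delivers $\ddot{c}_j$ as a function of observables, completing the identification.

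The main obstacle is establishing that $I - \Omega$ is nonsingular, so that the post-merger margins are uniquely determined rather than merely implicitly constrained. I would argue this from the economic structure rather than from genericity. In any pre-merger Bertrand-Nash equilibrium the own-price elasticity satisfies $\epsilon_{jj} < -1$, so the adjustment factor obeys $1 + \epsilon_{jj}^{-1} \in (0,1)$; combined with non-negative revenue diversion ratios whose within-market sum does not exceed one (strictly less whenever an outside option or a non-merging product absorbs part of the diverted revenue), each off-diagonal row sum $\sum_{l \neq j} \Omega_{jl} = (1 + \epsilon_{jj}^{-1}) \sum_{l \neq j} D_{j \to l}^R$ falls strictly below one. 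Hence $I - \Omega$ is strictly diagonally dominant and therefore invertible. This pins down $m^1$ uniquely, and with it the CMCRs, purely from the data in Assumptions \ref{assumption:data} and \ref{assumption:revenue.diversion.ratios.data}.
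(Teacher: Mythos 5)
Your main line of argument is the same as the paper's own proof: under Assumptions \ref{assumption:data} and \ref{assumption:revenue.diversion.ratios.data}, the pre-merger own-price elasticities are identified via Lemma \ref{lemma:own.price.elasticity}, so every coefficient in the system \eqref{equation:post.merger.foc.for.cmcr} is known; one then solves that linear system for the post-merger margins $m^1$ and substitutes into \eqref{equation:cmcr.calculation.from.margin}. That is precisely what the paper does (stated more tersely), and your observation that Assumption \ref{assumption:cost.savings} plays no role here is also correct.

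Where you go beyond the paper is in trying to establish that $I - \Omega$ is nonsingular, and that is where you have a gap. Your diagonal dominance argument needs $\sum_{l \neq j}\Omega_{jl} = (1+\epsilon_{jj}^{-1})\sum_{l\neq j} D_{j\to l}^R < 1$, which you justify by asserting that within-market revenue diversion ratios sum to at most one. That claim is not model-free. The paper is explicit in Section \ref{section:2.model} that the summation property depends on the underlying demand model: revenue diversion ratios sum to one only in discrete-continuous models with fixed budgets; in pure discrete choice models, or under general demand, they need not, and can individually exceed one. To see the failure concretely, apply Lemma \ref{lemma:relationship}.\ref{lemma:relationship.item.3}: $\Omega_{jl} = (1+\epsilon_{jj}^{-1})D_{j\to l}^R = D_{j\to l}\,p_l/p_j$, so your off-diagonal row sum equals $\sum_{l\neq j} D_{j\to l}\,p_l/p_j$. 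Even when the quantity diversion ratios sum below one, this quantity is unbounded if rival products carry sufficiently higher prices (e.g., $D_{j\to l}=0.5$ and $p_l/p_j = 10$ give a row sum of $5$), so strict diagonal dominance can genuinely fail under the proposition's assumptions, which include neither Assumption \ref{assumption:ces.utility} nor any other demand restriction. To be fair, the proposition inherits this issue from Lemma \ref{lemma:cmcr}, which only says the post-merger margins are ``implicitly defined'' by \eqref{equation:post.merger.foc.for.cmcr}: the paper never establishes uniqueness either, so its identification claim is implicitly conditional on the linear system having a unique solution. Your instinct to close that hole is good, but the argument you supply is only valid in settings---such as the CES/discrete-continuous framework of Section \ref{section:4.consumer.demand}---where revenue diversion ratios to all alternatives, including the outside option, sum to one; it cannot be invoked at the level of generality of this proposition.
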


\citet{werden1996robust} shows that CMCRs are identified when prices, quantities, margins, and quantity diversion ratios are observed. In contrast, I show that CMCRs can be identified using revenues, margins, and revenue diversion ratios. This represents a weaker data requirement, since Werden’s assumptions imply mine, but not the reverse. The key distinction is that, although CMCRs are rooted in firms’ first-order conditions, Werden’s formula does not explicitly use them.

% \begin{remark}
% Setting $\mathit{GUPPI}_j = 0$ and solving for $\psi_j$ gives the critical value of cost-reduction for product $j$:
% \begin{equation}\label{equation:mc.reduction.to.offset.upp}
%     \psi_j^* = \left( 1 - m_j \right)^{-1} \left( \sum_{k \in \mathcal{J}_g} m_k D_{jk} \frac{p_k}{p_j} \right).
% \end{equation}
% The analyst can report \eqref{equation:mc.reduction.to.offset.upp} to gauge the reduction in marginal costs required to offset the upward pricing pressure. Note, however, that \eqref{equation:mc.reduction.to.offset.upp} differs from CMCR. Upward pricing pressure assumes other firms' marginal costs remain fixed. In contrast, CMCR assumes the costs decrease simultaneously and solves simultaneous equations.\footnote{\citet{farrell2010antitrust} endorses upward pricing pressure on grounds of simplicity and transparency while acknowledging that compensating marginal cost reductions can be more accurate.}
% \end{remark}
\subsection{Meeting Data Assumptions \label{section:unilateral.effects.discussion}}

My approach avoids the standard reliance on price and quantity data by instead using information on the merging firms' revenues, margins, and revenue diversion ratios.  However, measurement challenges are common in practice. I therefore briefly discuss how the data requirements  in Assumptions \ref{assumption:data}--\ref{assumption:upp.approximates.true.price.effect} can be addressed.

\paragraph{Assumption \ref{assumption:data}: Measuring Revenues and Margins} Antitrust authorities can often measure revenues and margins with reasonable accuracy during merger reviews by accessing firms' financial records \citep{farrell2010antitrust}. When this approach is infeasible, researchers may turn to alternatives such as using information from public filings (see, e.g., \citet{smith2004supermarket, ellickson2020measuring}) or estimating markups through production function or cost-share methods (see, e.g., \citet{de2011product, loecker2012markups, gandhi2020identification, de2022hitchhiker, raval2023testing, kirov2023measuring}). When product-level margins are difficult to estimate, firm-level margins can serve as a useful proxy, especially under CES preferences, which imply firms set uniform relative margins across their products \citep{nocke2018multiproduct}. 

\paragraph*{Assumption \ref{assumption:revenue.diversion.ratios.data}: Measuring Revenue Diversion Ratios} 

A key distinction of my framework is that it relies on the availability of revenue diversion ratios, rather than the price and quantity data used in standard approaches. The approaches for measuring revenue diversion ratios are methodologically similar to those used for quantity diversion ratios but rely on different data. I outline several approaches as follows. First, as noted in \citet{farrell2010antitrust}, firms often track diversions in the ordinary course of business, such as identifying competitors they lose or gain customers from. Antitrust authorities can use these internal documents and financial records to infer diversion patterns. While \citet{farrell2010antitrust}'s framework centers on quantity diversion ratios, their statement regarding the accessibility of diversion data is equally applicable to revenue diversion ratios, as firms commonly track their performance in terms of revenue.

Second, analysts can estimate revenue diversion ratios using econometric analysis of quasi-experiments---such as product entries or exits---that reveal how revenues shift across products \citep{conlon2021empirical}. For example, in \emph{FTC v. Kroger}, the FTC's economic expert, Dr. Nicholas Hill, used a 2022 strike at King Soopers (a Kroger banner) in Denver, which boosted sales at nearby Safeway (an Albertsons banner) stores, to supplement his market share-based diversion estimates between the merging parties \citep{ftc2024kroger}.

Third, revenue diversion ratios can be estimated using consumer surveys that ask how spending would shift under hypothetical price or product changes---a method frequently used by the UK CMA \citep{cma2011good, cma2017retail}. While surveys may lower the cost of measuring quantity diversion ratios, revenue-based measures remain valuable for two reasons. First, they better capture substitution when quantities are ambiguous---for example, if shoppers change their basket composition rather than store visits. Second, quantity ratios alone are insufficient for identifying unilateral effects when prices are unobserved, as GUPPI calculations require relative prices.

Finally, revenue diversion ratios can be estimated from cross-sectional consumer expenditure data using a discrete-continuous demand framework. I demonstrate this under CES preferences in Section \ref{section:4.consumer.demand} and extend it to a broader class of models in Appendix \ref{subsection:revenue.diversion.ratios.under.discrete.continuous.demand} assuming homogeneous price responsiveness. While “price diversion ratios” (from small price changes) and “forced diversion ratios” (from product removals) may differ, they tend to align when consumer price sensitivity is relatively uniform \citep{conlon2021empirical}. In fact, under CES demand—as in the logit case—I show that the two coincide.

\paragraph{Assumption \ref{assumption:cost.savings}: Measuring Cost Efficiencies} Merger-specific cost efficiencies are challenging to predict, as they depend on forward-looking claims that are often unverifiable and may reflect overly optimistic assumptions by the merging firms. Antitrust authorities typically draw on firms' financial records or credit some default values of efficiencies, e.g., assuming a 5\% reduction in marginal costs for each product \citep{farrell2010antitrust}. A potential alternative is to estimate firms' production functions to predict merger-specific efficiencies \citep{grieco2018brewed, khmelnitskaya2024identifying}, but applying this approach without price and quantity data remains an open challenge.

\paragraph{Assumption \ref{assumption:upp.approximates.true.price.effect}: Measuring Merger Pass-through Rates} The literature has long recognized the difficulty of estimating pass-through rates, as they depend on demand curvature \citep{farrell2010antitrust, jaffe2013first, mackay2014bias, miller2016pass, miller2017pass, miller2021quantitative}. Nonetheless, when reliable pass-through data are unavailable and demand estimation is infeasible, researchers have proposed two practical strategies to address this challenge. One approach is to impose a functional form on demand, which allows pass-through rates to be expressed as known functions of observables; I adopt this approach in my first empirical application in Section \ref{section:6.empirical.application} by assuming CES demand.\footnote{In Online Appendix \ref{section:pass.through.matrix.under.ces.preference}, I express merger pass-through matrix as known functions of merging firms' revenues and margins under a CES preference assumption.} Alternatively, one can simply assume $M \approx I$, which implies $\ddot{p}_j \approx \mathit{GUPPI}_j$ \citep{miller2017upward, dutra2020antitrust}; I use this strategy in my second empirical application in Section \ref{section:7.empirical.application.II}.\footnote{In Online Appendix \ref{section:accuracy.of.upward.pricing.pressure}, I provide simulation evidence that upward pricing pressure is a conservative predictor of the true merger price effect when the demand is CES, which complements \citet{miller2017upward}'s findings.}

% Our goal is to relax data requirements to identify gross upward pricing pressure indices and welfare effects from a merger. Our approach is reminiscent of the sufficient statistics approach because we characterize objects that are sufficient to identify GUPPIs and welfare effects, which need not be estimated from structural models. We also suggest reduced-form approaches or how ordinary course documents may be used to identify the welfare effects.
\section{Consumer Demand and Revenue Diversion Ratios \label{section:4.consumer.demand}}

This section shows that assuming CES demand \citep{spence1976product, dixit1977monopolistic} facilitates the identification of revenue diversion ratios using only consumer expenditure data for the merging firms. Appendix \ref{subsection:revenue.diversion.ratios.under.discrete.continuous.demand} extends the identification argument to a broader class of discrete-continuous demand models.

\subsection{Assumptions on Consumer Preference and Data}

I specify consumer-side model primitives as a tuple $\langle \mathcal{I}, (\mathcal{C}_i, U_i, B_i)_{i \in \mathcal{I}} \rangle$, where $\mathcal{I}$ is the set of consumers, $\mathcal{C}_i$ is consumer $i$'s consideration set, $U_i: \mathbb{R}^{\dim (\mathcal{C}_i)} \to \mathbb{R}$ is consumer $i$'s utility function, and $B_i \in \mathbb{R}_+$ is $i$'s budget.\footnote{While consumers’ consideration sets can be omitted from the model primitives without loss of generality, making them explicit is conceptually and computationally useful. In my grocery merger application in Section \ref{section:7.empirical.application.II}, consideration sets are determined by the distance between consumers’ residences and grocery stores.} I assume consideration sets always include an outside option $j = 0$. To simplify the exposition, I assume that the analyst knows every consumer's consideration set $\mathcal{C}_i$ and budget $B_i$, although the analyst may need to estimate them in practice.

Each consumer $i \in \mathcal{I}$ maximizes utility $U_i$ with respect to consumption vector $q_i \equiv (q_{ij})_{j \in \mathcal{C}_i}$ subject to a budget constraint $\sum_{j \in \mathcal{C}_i} p_j q_{ij} \leq B_i$. The Marshallian demand function induces consumer $i$'s expenditure on each product $j$ as $e_{ij} = \alpha_{ij}B_i$, where $\alpha_{ij} \in [0,1]$ represents the expenditure share of the budget allocated to product $j$. Product $j$'s revenue is the sum of consumers' expenditure on the product, $R_j = \int_{i \in \mathcal{S}_j} e_{ij} d\mu$, where $i \in \mathcal{S}_j$ ($i$ is a shopper of $j$) if and only if $j \in \mathcal{C}_i$ ($i$ considers $j$), and $\mu$ is a measure on the set of consumers.\footnote{For instance, \citet{ellickson2020measuring} estimate a spatial demand model of grocery competition, where each $i \in \mathcal{I}$ is a census tract and each $j \in \mathcal{J}$ is a grocery store. Consumers consider all stores within 15 miles, defining $\mathcal{C}_i$. Grocery budgets are set at 13\% of tract income. With a counting measure $\mu$, total revenue of store $j$ simplifies to $R_j = \sum_{i \in \mathcal{S}_j} e_{ij}$.}

I maintain the following assumptions on data and consumers' preferences.

\begin{assumptionD}[Consumer expenditure share data]\label{assumption:budget.and.consumer.expenditure.data} The analyst observes consumers' expenditure shares on the merging firms' products.
\end{assumptionD}

\begin{assumptionM}[CES utility]\label{assumption:ces.utility}
Consumers' preferences can be described by a constant elasticity of substitution (CES) function $U_i(q_i) = A \left( \sum_{j \in \mathcal{C}_i } \beta_{ij}^{\frac{1}{\eta}}q_{ij}^{\frac{\eta-1}{\eta}}  \right)^{\frac{\eta}{\eta-1}}$ with parameters $A, \eta, \beta_{ij}>0$.\footnote{Parameter $A$ is an arbitrary scaling factor (which I do not use), $\eta$ is the elasticity of substitution between products, and $\beta_{ij}$ is the product $j$'s quality perceived by consumer $i$. Note that (i) if $\eta \to \infty$, $U_i (q_i) \to A \sum_{j \in \mathcal{C}_i} q_{ij}$; (ii) if $\eta \to 1$, $U_i(q_i) \to (A /\prod_{j \in \mathcal{C}_i} \beta_{ij}^{\beta_{ij}}) \prod_{j \in \mathcal{C}_i} q_{ij}^{\beta_{ij}}$; and (iii) if $\eta \to 0$, $U_i(q_i) \to A \min_{j \in \mathcal{C}_i} \{q_{ij}/\beta_{ij}\}$. }
\end{assumptionM}

Assumption \ref{assumption:budget.and.consumer.expenditure.data} outlines the data requirements for the forthcoming identification result and, importantly, does not require information on consumer spending for non-merging firms. It can be satisfied using the merging parties' transaction data, consumer surveys on budget allocation, or a parametric estimation of consumer expenditure functions \citep{hosken2016horizontal, reynolds2008use, ellickson2020measuring}.\footnote{For instance, in the recent Kroger/Albertsons merger, the FTC's expert used loyalty card data from both firms to nonparametrically estimate consumer expenditure shares at the census block-group level. The defendants' expert instead relied on a parametric model developed by \citet{ellickson2020measuring} to estimate such shares.} Assumption \ref{assumption:ces.utility} formalizes the CES demand structure, where $\beta_{ij}$ captures the appeal (or quality) of product $j$ to consumer $i$, and $\eta$ denotes the elasticity of substitution, which also reflects how sensitive consumers are to prices. 

\subsection{Identification of Revenue Diversion Ratios Under CES Preference}

\paragraph{Identification with Consumer Expenditure Data}
Assumption \ref{assumption:ces.utility} yields the following characterizations of expenditure shares, own-price elasticities, and revenue diversion ratios, which will allow me to leverage Assumption \ref{assumption:budget.and.consumer.expenditure.data}.

\begin{lemma}[Characterization under CES preference] \label{lemma:revenue.diversion.ratio.under.ces.utility} Under Assumption \ref{assumption:ces.utility}, consumers' expenditure shares, own-price elasticities of revenue, and revenue diversion ratios ($j \neq k$) are
\begin{align}
    \alpha_{ij} & = \frac{\exp(u_{ij})}{\sum_{k \in \mathcal{C}_i} \exp(u_{ik})}, &&\text{where } u_{ij} = \log \beta_{ij} + (1-\eta) \log p_j; \label{equation:ces.expenditure.share} \\
    \epsilon_{jj}^R & = \int_{i \in \mathcal{S}_j} \overline{w}_{ij} \epsilon_{i, jj}^e d\mu,  &&\text{where } \overline{w}_{ij} = \frac{\alpha_{ij} B_i}{\int_{\tilde{i} \in \mathcal{S}_j} \alpha_{\tilde{i}j} B_{\tilde{i}} d\mu}, \; \epsilon_{jj}^e = (1-\eta) (1-\alpha_{ij}) ; \label{equation:ces.own.price.elasticity.of.revenue} \\
    D_{j \to k}^R & = \int_{i \in \mathcal{S}_k} w_{ij} D_{i, j \to k}^e d\mu, &&\text{where } w_{ij} = \frac{\alpha_{ij} (1- \alpha_{ij})B_i}{\int_{\tilde{i} \in \mathcal{S}_j} \alpha_{\tilde{i}j}(1 - \alpha_{\tilde{i}j}) B_{\tilde{i}} d\mu}, D_{i, j\to k}^e =  \frac{\alpha_{ik}}{1 - \alpha_{ij}}. \label{equation:ces.revenue.diversion.ratio}
\end{align}
\end{lemma}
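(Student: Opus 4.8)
The plan is to prove the three displays in sequence, working up from the individual consumer's optimization to the aggregate revenue objects, since \eqref{equation:ces.own.price.elasticity.of.revenue} and \eqref{equation:ces.revenue.diversion.ratio} are both weighted averages of individual-level statistics built from \eqref{equation:ces.expenditure.share}.

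First I would solve consumer $i$'s problem $\max U_i(q_i)$ subject to $\sum_{j \in \mathcal{C}_i} p_j q_{ij} = B_i$ (the budget binds by monotonicity). The first-order conditions give the ratios $q_{ij}/q_{ik} = (\beta_{ij}/\beta_{ik})(p_j/p_k)^{-\eta}$, so the expenditure $p_j q_{ij}$ is proportional to $\beta_{ij} p_j^{1-\eta}$ and hence $\alpha_{ij} = \beta_{ij} p_j^{1-\eta}/\sum_{k \in \mathcal{C}_i} \beta_{ik} p_k^{1-\eta}$. Writing $\beta_{ij} p_j^{1-\eta} = \exp(\log \beta_{ij} + (1-\eta)\log p_j) = \exp(u_{ij})$ puts this in the softmax form of \eqref{equation:ces.expenditure.share}. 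This step is routine CES algebra.

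Next I would read off the individual expenditure elasticities by log-differentiating the softmax. Since $e_{ij} = \alpha_{ij} B_i$ and $B_i$ does not depend on prices, $\epsilon_{i,jj}^e = \partial \log \alpha_{ij}/\partial \log p_j$. Differentiating $\log \alpha_{ij} = u_{ij} - \log \sum_k \exp(u_{ik})$ and using $\partial u_{ij}/\partial \log p_j = (1-\eta)$ yields the own elasticity $\epsilon_{i,jj}^e = (1-\eta)(1 - \alpha_{ij})$ and, for $k \neq j$, the cross response $\partial \log \alpha_{ik}/\partial \log p_j = -(1-\eta)\alpha_{ij}$ — the familiar logit cross-derivative structure.

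The substantive step is aggregation via $R_j = \int_{i \in \mathcal{S}_j} e_{ij}\,d\mu$. Differentiating under the integral and dividing by $R_j$ expresses $\epsilon_{jj}^R$ as the expenditure-weighted average $\int \overline{w}_{ij}\epsilon_{i,jj}^e\,d\mu$ with $\overline{w}_{ij} \propto \alpha_{ij}B_i$, giving \eqref{equation:ces.own.price.elasticity.of.revenue}. For the diversion ratio I would compute $\partial R_k/\partial p_j$ and $\partial R_j/\partial p_j$ from the individual derivatives: both integrands carry the common factor $(1-\eta)/p_j$, with the cross term proportional to $\alpha_{ij}\alpha_{ik}B_i$ and the own term proportional to $\alpha_{ij}(1-\alpha_{ij})B_i$. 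Forming $D_{j \to k}^R = -(\partial R_k/\partial p_j)/(\partial R_j/\partial p_j)$ per \eqref{equation:revenue.diversion.ratio} cancels the common factor, and rearranging identifies the ratio of integrals as $\int w_{ij} D_{i,j\to k}^e\,d\mu$ with $w_{ij} \propto \alpha_{ij}(1-\alpha_{ij})B_i$ and $D_{i,j\to k}^e = \alpha_{ik}/(1-\alpha_{ij})$, which is \eqref{equation:ces.revenue.diversion.ratio}.

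The main obstacle I anticipate is bookkeeping rather than conceptual. I must track which consumers' shares actually respond to $p_j$ — only those with $j \in \mathcal{C}_i$ — which the weights handle automatically since $\alpha_{ij} = 0$ otherwise, so the cross integrand $\alpha_{ij}\alpha_{ik}B_i$ is supported on $\mathcal{S}_j \cap \mathcal{S}_k$. I must also explain why the diversion weights $w_{ij}$ differ from the elasticity weights $\overline{w}_{ij}$: the distinction arises because the diversion ratio normalizes by $\partial R_j/\partial p_j$, whose integrand is $\propto \alpha_{ij}(1-\alpha_{ij})$, whereas the revenue elasticity normalizes by $R_j$, whose integrand is $\propto \alpha_{ij}$. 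Verifying that the proposed weighted-average form algebraically reproduces the raw ratio of integrals is the one genuinely nontrivial computation.
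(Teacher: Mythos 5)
Your proposal is correct and follows essentially the same route as the paper's proof: softmax expenditure shares from the CES first-order conditions, individual-level derivatives $\partial e_{ij}/\partial p_j = \alpha_{ij}(1-\alpha_{ij})\frac{1-\eta}{p_j}B_i$ and $\partial e_{ik}/\partial p_j = -\alpha_{ij}\alpha_{ik}\frac{1-\eta}{p_j}B_i$, then aggregation by differentiating $R_j=\int_{i\in\mathcal{S}_j}e_{ij}\,d\mu$ under the integral (Leibniz's rule) and canceling the common factor $\frac{1-\eta}{p_j}$ to obtain the weighted-average forms. The only cosmetic difference is that you derive the softmax form explicitly, whereas the paper omits it as standard.
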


Lemma \ref{lemma:revenue.diversion.ratio.under.ces.utility} establishes key characterizations of CES demand that simplify identification. First, each consumer's expenditure shares can be written as softmax functions of utility indices $(u_{ij})_{j \in \mathcal{C}_i}$, which is reminiscent of the choice probability formula under logit demand. Next, the own-price elasticities of revenue are weighed sums of individual-level own-price elasticities of expenditure $\epsilon_{i,jj}^e \equiv \frac{\partial e_{ij}}{\partial p_j} \frac{p_j}{e_{ij}}$, which in turn depends on the value of $\eta$. Similarly, revenue diversion ratios can be expressed as weighted sums of individual-level expenditure diversion ratios $D_{i,j \to k}^e \equiv - \frac{\partial e_{ik} / \partial p_j}{\partial e_{ij} / \partial p_j}$.\footnote{A logit-demand analog appears in \citet{hosken2016horizontal}. If there is a single representative consumer in the market, then the diversion ratio formula collapses to $D_{j \to k}^R = \frac{\alpha_k}{1 - \alpha_j}$, so the merging firms' product-specific market shares (in revenues) are sufficient to identify their revenue diversion ratios. } In both cases, the weights are known functions of consumers' expenditure shares. Lemma \ref{lemma:revenue.diversion.ratio.under.ces.utility} is consistent with the literature's finding that CES and logit demands share similar simplification properties as well as limitations (see, e.g., \citet{anderson1989demand, nocke2018multiproduct, dube2022discrete}).\footnote{Like logit, CES imposes restrictive substitution patterns, as cross-price elasticities depend only on the product receiving the price change \citep{nevo2011empirical, berry2021foundations, conlon2021empirical}. This makes CES less suitable for capturing rich substitution patterns. In my analysis of the Albertsons-Safeway merger (Section \ref{section:7.empirical.application.II}), I use nested CES preferences to balance tractability and flexibility.}

The diversion ratio formula \eqref{equation:ces.revenue.diversion.ratio} implies that revenue diversion ratios can be identified from consumer expenditure share data.
\begin{proposition}[Identification under CES preference] \label{proposition:identification.of.diversion.ratios.under.ces.utility}
Suppose consumers' preferences satisfy Assumption \ref{assumption:ces.utility}. Then, if Assumption \ref{assumption:budget.and.consumer.expenditure.data} holds, the revenue diversion ratios between all merging firms' products are identified.
\end{proposition}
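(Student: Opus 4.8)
The plan is to obtain the identification essentially for free from the revenue-diversion characterization in Lemma~\ref{lemma:revenue.diversion.ratio.under.ces.utility}, and then to audit the right-hand side of equation~\eqref{equation:ces.revenue.diversion.ratio} to confirm that every quantity it contains is observed under Assumption~\ref{assumption:budget.and.consumer.expenditure.data} or maintained as known in the setup of Section~\ref{section:4.consumer.demand}.

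First I would fix an arbitrary ordered pair $(j,k)$ of distinct merging firms' products and display the formula supplied by \eqref{equation:ces.revenue.diversion.ratio},
\[
D_{j \to k}^R = \int_{i \in \mathcal{S}_k} w_{ij}\, D_{i,j\to k}^e \, d\mu,
\qquad
w_{ij} = \frac{\alpha_{ij}(1-\alpha_{ij})B_i}{\int_{\tilde i \in \mathcal{S}_j}\alpha_{\tilde i j}(1-\alpha_{\tilde i j})B_{\tilde i}\, d\mu},
\qquad
D_{i,j\to k}^e = \frac{\alpha_{ik}}{1-\alpha_{ij}}.
\]
The decisive observation is that the structural primitives $\eta$ and $(\beta_{ij})$ have cancelled out of this expression: both the weights and the integrand depend on preferences only through the expenditure shares themselves. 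Hence I need neither to recover the elasticity of substitution nor to invert for the quality indices, and---most importantly---I never require any expenditure share other than those on $j$ and $k$.

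Second I would verify that the inputs are available. The shares $\alpha_{ij}$ and $\alpha_{ik}$ are observed for every relevant consumer by Assumption~\ref{assumption:budget.and.consumer.expenditure.data}, because $j$ and $k$ are merging firms' products; the budgets $B_i$, the consideration sets $\mathcal{C}_i$ (and therefore the shopper sets $\mathcal{S}_j=\{i:j\in\mathcal{C}_i\}$ and $\mathcal{S}_k$), and the consumer measure $\mu$ are all treated as known in the model environment. Consequently both the numerator integral over $\mathcal{S}_k$ and the normalizing integral over $\mathcal{S}_j$ are computable from the data, so $D_{j\to k}^R$ is pinned down as a known functional of observables. Since the pair was arbitrary, all pairwise revenue diversion ratios among the merging products are identified.

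There is no genuine analytical obstacle once Lemma~\ref{lemma:revenue.diversion.ratio.under.ces.utility} is in place; the only point that deserves care is the bookkeeping of which shares the formula actually touches. I would note that a consumer with $j\notin\mathcal{C}_i$ has $\alpha_{ij}=0$ and hence weight $w_{ij}=0$, so only consumers in $\mathcal{S}_j\cap\mathcal{S}_k$ contribute, and for each of them both $\alpha_{ij}$ and $\alpha_{ik}$ are observed. In particular the term $1-\alpha_{ij}$ aggregates the consumer's remaining spending---on other inside goods, rival products, and the outside option---without requiring it to be disaggregated, which is exactly why Assumption~\ref{assumption:budget.and.consumer.expenditure.data}, supplying shares only on the merging products, is enough and the analyst never needs spending data for non-merging firms.
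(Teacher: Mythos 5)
Your proof is correct and follows essentially the same route as the paper's: both arguments simply observe that every quantity in equation~\eqref{equation:ces.revenue.diversion.ratio} of Lemma~\ref{lemma:revenue.diversion.ratio.under.ces.utility}---the expenditure shares on merging products, the budgets, and the shopper sets---is known under Assumption~\ref{assumption:budget.and.consumer.expenditure.data} and the maintained setup, so the diversion ratios are pinned down by plugging in. Your added bookkeeping (that $\eta$ and $\beta_{ij}$ have cancelled, and that consumers outside $\mathcal{S}_j \cap \mathcal{S}_k$ receive zero weight) is a welcome refinement of the paper's terser three-sentence argument, but it is elaboration rather than a different method.
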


\paragraph{Identification with Second-Choice Expenditure Data} 

An alternative approach to identifying diversion ratios is to rely on their mathematical definition and observe how consumers respond to (quasi-)experimental price changes. However, small price variations necessary for identification may be unavailable. In such cases, the following property of CES demand justifies using second-choice expenditure data or intertemporal variation in revenue following the removal of a product.

\begin{proposition}[Identification with second-choice data] \label{proposition:identification.with.second.choice.data} Under Assumption \ref{assumption:ces.utility}, the revenue diversion ratio from product $j$ to product $k$ following a removal of product $j$ is equal to the revenue diversion ratio derived from a marginal increase in price $j$'s price.
\end{proposition}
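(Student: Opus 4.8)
The plan is to exploit the fact that, under Assumption \ref{assumption:ces.utility}, each consumer's expenditure shares take the softmax form \eqref{equation:ces.expenditure.share}, so the CES expenditure system inherits the independence-of-irrelevant-alternatives (IIA) structure of logit choice. The substance of the claim is that, for this structure, the finite reallocation caused by deleting product $j$ reproduces \emph{exactly}---not merely to first order---the infinitesimal reallocation caused by a marginal rise in $p_j$. I would therefore argue at the level of a single consumer $i$ who considers both $j$ and $k$, establish the identity for the individual expenditure diversion ratio $D_{i,j\to k}^e$, and then pass to revenues, since both revenue diversion ratios are built from these individual objects through the aggregation in Lemma \ref{lemma:revenue.diversion.ratio.under.ces.utility}.

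For the marginal side I would reuse the computation behind \eqref{equation:ces.revenue.diversion.ratio}: differentiating the shares \eqref{equation:ces.expenditure.share} with respect to $p_j$ through $u_{ij}=\log\beta_{ij}+(1-\eta)\log p_j$ gives $\partial e_{ij}/\partial p_j \propto \alpha_{ij}(1-\alpha_{ij})$ and $\partial e_{ik}/\partial p_j \propto -\alpha_{ij}\alpha_{ik}$, with the common factor $(1-\eta)/p_j$ cancelling in the ratio, so the marginal individual expenditure diversion ratio equals $D_{i,j\to k}^e=\alpha_{ik}/(1-\alpha_{ij})$. For the removal side I would instead compute directly from the share formula. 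Deleting $j$ removes the summand $\exp(u_{ij})$ from the denominator $\sum_{l\in\mathcal{C}_i}\exp(u_{il})$, shrinking it by exactly the factor $(1-\alpha_{ij})$; hence every surviving product's share is rescaled by the \emph{same} constant $1/(1-\alpha_{ij})$, and the post-removal share of $k$ is $\alpha_{ik}/(1-\alpha_{ij})$. The expenditure that $k$ gains is then $\alpha_{ik}B_i\big(\tfrac{1}{1-\alpha_{ij}}-1\big)=\tfrac{\alpha_{ij}\alpha_{ik}}{1-\alpha_{ij}}B_i$, and dividing this by the freed-up expenditure $\alpha_{ij}B_i$ returns $\alpha_{ik}/(1-\alpha_{ij})$ again. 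The forced and marginal individual diversions thus coincide, the essential reason being that IIA makes the renormalization factor an exact constant independent of which surviving product one examines.

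Aggregation is where I expect the only real difficulty to lie. Because $R_l=\int_{\mathcal{S}_l}e_{il}\,d\mu$, both revenue diversion ratios are weighted averages of the common per-consumer quantity $\alpha_{ik}/(1-\alpha_{ij})$, and in the single-representative-consumer case of the footnote to Lemma \ref{lemma:revenue.diversion.ratio.under.ces.utility} they collapse immediately to $\alpha_k/(1-\alpha_j)$. With heterogeneous consumers the cleanest route is to establish the equality at the level of the individual diversions $D_{i,j\to k}^e$ that feed \eqref{equation:ces.revenue.diversion.ratio}: second-choice data identify exactly these objects, which are the same regardless of whether they are read off a marginal price perturbation or a product removal, so the market revenue diversion ratio reconstructed from them through the fixed aggregation weights is unchanged. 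I would be careful here not to equate the ``raw'' finite ratio $-\Delta R_k/\Delta R_j$ with the marginal ratio $-(\partial R_k/\partial p_j)/(\partial R_j/\partial p_j)$ term by term, since the two experiments weight consumers by $\alpha_{ij}B_i$ and by $\alpha_{ij}(1-\alpha_{ij})B_i$ respectively; the coincidence is a statement about the shared individual building blocks, which is precisely what the identification argument requires.
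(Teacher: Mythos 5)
Your individual-consumer argument is exactly the paper's proof: the paper sets $v_{ij}=\exp(u_{ij})$, observes that deleting $j$ shrinks the denominator $\sum_{l\in\mathcal{C}_i}v_{il}$ by exactly the factor $(1-\alpha_{ij}^{\text{pre}})$, and verifies the identity $\frac{\alpha_{ik}^{\text{post}}-\alpha_{ik}^{\text{pre}}}{\alpha_{ij}^{\text{pre}}}=\frac{\alpha_{ik}^{\text{pre}}}{1-\alpha_{ij}^{\text{pre}}}$ --- the same IIA-renormalization computation you give, showing that the forced and marginal \emph{individual} expenditure diversion ratios coincide exactly rather than to first order.

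Where you and the paper part ways is the aggregation step, and your caution there is warranted: it points at a gap in the paper's own proof, not in yours. The paper writes the removal-based market ratio as the weighted average $\int_{i\in\mathcal{S}_k}\bigl(\alpha_{ij}^{\text{pre}}B_i/\int_{\tilde{i}\in\mathcal{S}_j}\alpha_{\tilde{i}j}^{\text{pre}}B_{\tilde{i}}\,d\mu\bigr)\,\bigl(\alpha_{ik}^{\text{pre}}/(1-\alpha_{ij}^{\text{pre}})\bigr)\,d\mu$ and then declares the individual identity ``sufficient.'' But the marginal market ratio in \eqref{equation:ces.revenue.diversion.ratio} averages the same individual terms under the \emph{different} weights $\alpha_{ij}(1-\alpha_{ij})B_i$ --- precisely the mismatch you flag. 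With heterogeneous consumers the two averages need not agree: take two consumers with unit budgets and shares $(\alpha_{1j},\alpha_{1k})=(0.5,0.25)$ and $(\alpha_{2j},\alpha_{2k})=(0.1,0.8)$; the marginal ratio is $0.205/0.34\approx 0.60$ while the removal-based ratio is $0.339/0.6\approx 0.56$. So the market-level equality asserted by the proposition is exact only in the representative-consumer case (as in the footnote to Lemma \ref{lemma:revenue.diversion.ratio.under.ces.utility} and as used in Section \ref{section:6.empirical.application}) or at the level of the individual building blocks $D^{e}_{i,j\to k}$, which is the version you actually prove. Your refusal to equate the two aggregates term by term is the correct call; the paper's sufficiency claim glosses over exactly this point.
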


Proposition \ref{proposition:identification.with.second.choice.data} parallels the result in \citet{conlon2021empirical}, who show that quantity diversion ratios can be inferred from second-choice information under logit demand. While their notation of second-choice data reflects changes in discrete choice probabilities after a product's removal, the second-choice data in the present context capture how consumers reallocate their expenditures in response to a product's removal.

\section{Merger Simulation \label{section:5.merger.simulation}}
This section shows that a merger simulation is feasible without price data under CES preferences. Specifically, I show that although it is not possible to predict the post-merger prices $p_j^\text{post}$ or absolute price changes $\Delta p_j = p_j^\text{post} - p_j^\text{pre}$, it is possible to predict the percentage change relative to the pre-merger equilibrium $\ddot{p}_j = (p_j^\text{post} - p_j^\text{pre})/p_j^\text{pre}$. Thus, the proposed methodology allows the analyst to make statements such as ``In the post-merger equilibrium, the prices of products 1, 2, and 3 will be 3\%, 6\%, and 10\% higher, respectively, relative to the pre-merger equilibrium.'' I extend the argument to a class of discrete-continuous demand in Appendix \ref{subsection:merger.simulation.under.discrete.continuous.demand}. 

\begin{assumptionD}[Merger simulation] \label{assumption:merger.simulation.under.ces}
For all consumer $i \in \mathcal{I}$ and product $j \in \mathcal{J}$, the analyst observes the expenditure share $(\alpha_{ij})_{i \in \mathcal{I}, j \in \mathcal{J}}$, margins $(m_j)_{j \in \mathcal{J}}$, and percentage reductions in marginal costs $(\ddot{c}_j)_{j \in \mathcal{J}}$.
\end{assumptionD}

Assumption \ref{assumption:merger.simulation.under.ces} describes the data required for merger simulation. It is considerably more demanding than earlier assumptions, as it requires the analyst to observe expenditure shares, margins, and potential cost efficiencies for \emph{all} products in the market, including those of non-merging firms. Though stringent, these requirements clarify the distinction between merger simulations and the intermediate steps (e.g., demand estimation) that may precede them.\footnote{In practice, intermediate steps and assumptions are likely necessary to satisfy Assumption \ref{assumption:merger.simulation.under.ces}. A practical approach to satisfying Assumption \ref{assumption:merger.simulation.under.ces} is to estimate expenditure shares by combining a parametric utility specification \citep{ellickson2020measuring}, firm-level margin data \citep{nocke2018multiproduct}, and assumed efficiency credits \citep{farrell2010antitrust}.}

To build intuition for how Assumption \ref{assumption:merger.simulation.under.ces} enables merger simulation without price data, consider the post-merger profit-maximization conditions. As shown in equation \eqref{equation:first.order.conditions.rewritten}, these conditions can be written in terms of own-price elasticities, margins, and revenue diversion ratios. If the post-merger values of these objects can be expressed as known functions of the percentage price changes $\ddot{p}_j$ and observed pre-merger data, the first-order conditions can be written as a system $f(\ddot{p})=0$. The analyst can then solve for the equilibrium vector of relative price changes $(\ddot{p}_j^\text{*})_{j \in \mathcal{J}}$. The following lemma shows how the post-merger terms relate to an arbitrary vector $(\ddot{p}_j)_{j \in \mathcal{J}}$.

\begin{lemma}[Merger simulation under CES preferences] \label{lemma:merger.simulation.under.ces}
    Suppose Assumption \ref{assumption:ces.utility} holds. For each $j \in \mathcal{J}$, let $\ddot{p}_j = (p_j^\text{post} - p_j^\text{pre})/p_j^\text{pre}$, where $p_j$ is an arbitrary (possibly non-equilibrium) candidate of post-merger price. Then, the corresponding utility indices, consumer expenditure shares, own-price elasticities of demand, revenue diversion ratios, and margins induced by $\ddot{p} = (\ddot{p}_j)_{j \in \mathcal{J}}$ can be expressed as 
    \begin{align*}
        u_{ij}^\text{post} & = u_{ij}^\text{pre} + (1 - \eta) \log(1 + \ddot{p}_j), \\
        \alpha_{ij}^\text{post} & = \frac{\exp(u_{ij}^\text{post})}{\sum_{l \in \mathcal{C}_i} \exp (u_{il}^\text{post})}, \\
        \epsilon_{jj}^\text{post} & = (1 - \eta) \int_{i \in \mathcal{S}_j} \overline{w}_{ij}^\text{post} (1- \alpha_{ij}^\text{post}) d\mu - 1,  \\
        D_{j \to k}^{R,\text{post}} & = \int_{i \in \mathcal{S}_k} w_{ij}^\text{post} D_{i,j\to k}^{e, \text{post}} d\mu, \\
        m_j^\text{post} & = 1 - (1 - m_j^\text{pre}) \left( \frac{1 + \ddot{c}_j}{1 + \ddot{p}_j} \right),
    \end{align*}
    where $\overline{w}_{ij}^\text{post} = \frac{\alpha_{ij}^\text{post} B_i }{\int_{\tilde{i} \in \mathcal{S}_j} \alpha_{\tilde{i}j}^\text{post} B_{\tilde{i}} d\mu}$, $w_{ij}^\text{post} = \frac{\alpha_{ij}^\text{post}(1-\alpha_{ij}^\text{post}) B_i}{\int_{\tilde{i} \in \mathcal{S}_j} \alpha_{\tilde{i}j}^\text{post}(1-\alpha_{\tilde{i}j}^\text{post})B_{\tilde{i}} d\mu }$, and $D_{i,j \to k}^{e,\text{post}} = \frac{\alpha_{ik}^\text{post}}{1 - \alpha_{ij}^\text{post}}$.
\end{lemma}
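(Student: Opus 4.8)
The plan is to treat the quality parameters $\beta_{ij}$, the elasticity $\eta$, the budgets $B_i$, and the pre-merger prices $p_j^\text{pre}$ as fixed structural primitives, and to channel the entire effect of the merger through the single relation $p_j^\text{post} = p_j^\text{pre}(1 + \ddot{p}_j)$. Each post-merger object is then obtained by substituting this price into the corresponding pre-merger characterization from Lemma~\ref{lemma:revenue.diversion.ratio.under.ces.utility}, after which I eliminate the unobservables $\beta_{ij}$ and $p_j^\text{pre}$ in favor of the \emph{observed} pre-merger expenditure shares. I would establish the five identities in the order in which they depend on one another: first the utility index, then the expenditure share, then the elasticity and diversion ratio (both functions of the shares), and finally the margin (a separate cost-accounting identity).

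For the utility index I start from $u_{ij} = \log\beta_{ij} + (1-\eta)\log p_j$ in \eqref{equation:ces.expenditure.share} and substitute $\log p_j^\text{post} = \log p_j^\text{pre} + \log(1+\ddot{p}_j)$. Because $\beta_{ij}$ and $\eta$ are unchanged by the merger, the terms $\log\beta_{ij} + (1-\eta)\log p_j^\text{pre}$ collapse to $u_{ij}^\text{pre}$, giving $u_{ij}^\text{post} = u_{ij}^\text{pre} + (1-\eta)\log(1+\ddot{p}_j)$. For the expenditure share I insert this into the softmax \eqref{equation:ces.expenditure.share}: the factor $\exp(u_{ij}^\text{pre})$ appears in both numerator and denominator and, after dividing through by $\sum_{k\in\mathcal{C}_i}\exp(u_{ik}^\text{pre})$, converts into the observed pre-merger share $\alpha_{ij}^\text{pre}$. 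This yields $\alpha_{ij}^\text{post} = \frac{\alpha_{ij}^\text{pre}(1+\ddot{p}_j)^{1-\eta}}{\sum_{l\in\mathcal{C}_i}\alpha_{il}^\text{pre}(1+\ddot{p}_l)^{1-\eta}}$, which is exactly the stated softmax in $u^\text{post}$ and depends only on pre-merger shares and $\ddot{p}$. The conceptual crux of the lemma is concentrated here: invariance of the softmax to additive level shifts is what lets the unidentified primitives $\beta_{ij}$ and $p_j^\text{pre}$ disappear into the single observed quantity $\alpha_{ij}^\text{pre}$.

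The remaining three identities are then consequences. For the own-price elasticity I combine Lemma~\ref{lemma:relationship}.\ref{lemma:relationship.item.2}, which gives $\epsilon_{jj} = \epsilon_{jj}^R - 1$, with the revenue-elasticity formula \eqref{equation:ces.own.price.elasticity.of.revenue}; evaluating the latter at the post-merger shares and weights produces $\epsilon_{jj}^\text{post} = (1-\eta)\int_{i\in\mathcal{S}_j}\overline{w}_{ij}^\text{post}(1-\alpha_{ij}^\text{post})\,d\mu - 1$. For the revenue diversion ratio I re-read formula \eqref{equation:ces.revenue.diversion.ratio} at the post-merger shares, so that the weights $w_{ij}^\text{post}$ and the individual expenditure diversion ratios $D_{i,j\to k}^{e,\text{post}} = \frac{\alpha_{ik}^\text{post}}{1-\alpha_{ij}^\text{post}}$ are built from $\alpha^\text{post}$. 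For the margin I use the accounting identity $m_j = 1 - c_j/p_j$ together with $c_j^\text{post} = c_j^\text{pre}(1+\ddot{c}_j)$ and $p_j^\text{post} = p_j^\text{pre}(1+\ddot{p}_j)$; since $c_j^\text{pre}/p_j^\text{pre} = 1 - m_j^\text{pre}$, this delivers $m_j^\text{post} = 1 - (1-m_j^\text{pre})\frac{1+\ddot{c}_j}{1+\ddot{p}_j}$.

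I expect no hard analytic obstacle; the main discipline is careful bookkeeping of what is held fixed (demand and cost primitives, budgets, and $\eta$) versus what moves (prices, and through them shares and margins). The only step requiring genuine care is verifying that every post-merger object can be written \emph{without} reference to the unobserved $\beta_{ij}$ or $p_j^\text{pre}$; this is guaranteed once the share identity of the second paragraph is in hand, because all subsequent objects are functions of the shares alone together with the observed budgets and cost reductions.
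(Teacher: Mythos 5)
Your proposal is correct and follows essentially the same route as the paper's proof: the utility-index shift follows from the log-linear CES specification with $\beta_{ij}$ and $\eta$ held fixed, the shares/elasticities/diversion ratios are obtained by re-evaluating the characterizations in Lemma \ref{lemma:revenue.diversion.ratio.under.ces.utility} (together with $\epsilon_{jj} = \epsilon_{jj}^R - 1$) at post-merger values, and the margin identity is the same cost-accounting algebra. Your extra observation that the post-merger shares can be rewritten purely in terms of pre-merger shares and $\ddot{p}$ is a valid refinement, though it belongs to the identification argument of Proposition \ref{proposition:merger.simulation.under.ces} rather than to this lemma.
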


Lemma \ref{lemma:merger.simulation.under.ces} shows how the relative price changes $(\ddot{p}_j)_{j \in \mathcal{J}}$ translate into the components of the post-merger first-order conditions. To complete the identification, the analyst must recover the pre-merger utility indices $(u_{ij}^\text{pre})_{i \in \mathcal{I}, j \in \mathcal{J}}$ and the elasticity of substitution parameter $\eta$. The pre-merger utility indices can be retrieved by inverting the softmax expression in equation \eqref{equation:ces.expenditure.share}, while $\eta$ can be (over-)identified using the own-price elasticity formula in equation \eqref{equation:ces.own.price.elasticity.of.revenue}, as elasticities are identified when margins and revenue diversion ratios are known (Lemma \ref{lemma:own.price.elasticity}). Under Assumption \ref{assumption:merger.simulation.under.ces}, then, the post-merger first-order conditions are fully determined by $\ddot{p}$, and the predicted relative price changes can be obtained by solving the resulting system of equations.

\begin{proposition}[Merger simulation under CES preferences] \label{proposition:merger.simulation.under.ces} Under Assumptions \ref{assumption:ces.utility} and \ref{assumption:merger.simulation.under.ces}, the percentage price changes from the pre-merger equilibrium to the post-merger equilibrium for all products in the market are identified.
\end{proposition}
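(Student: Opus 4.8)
The plan is to reduce the identification of the equilibrium percentage price changes to solving a system of equations whose coefficients are all known functions of the data in Assumption \ref{assumption:merger.simulation.under.ces}. The post-merger equilibrium is characterized by the rewritten first-order conditions \eqref{equation:first.order.conditions.rewritten}, now evaluated under the merged ownership partition and at the post-merger values of own-price elasticities, margins, and revenue diversion ratios. Lemma \ref{lemma:merger.simulation.under.ces} already expresses each of these post-merger objects as an explicit function of a candidate vector $\ddot{p}$ together with the pre-merger utility indices $(u_{ij}^\text{pre})$ and the elasticity of substitution $\eta$. Hence the crux of the proof is to show that these two ``deep'' primitives are themselves identified from observables, after which the first-order conditions collapse to a system $f(\ddot{p}) = 0$ with $f$ fully determined by data.

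First I would recover $\eta$. By Proposition \ref{proposition:identification.of.diversion.ratios.under.ces.utility} the revenue diversion ratios $D_{j \to k}^R$ are computable from the observed expenditure shares via \eqref{equation:ces.revenue.diversion.ratio}, whose formulas involve only shares and budgets and not $\eta$. Feeding these ratios and the observed margins into Lemma \ref{lemma:own.price.elasticity} identifies each own-price elasticity $\epsilon_{jj}$, and Lemma \ref{lemma:relationship}.\ref{lemma:relationship.item.2} gives $\epsilon_{jj}^R = \epsilon_{jj} + 1$. Matching this to the CES own-price elasticity formula \eqref{equation:ces.own.price.elasticity.of.revenue}, which is linear in $(1-\eta)$ with the known coefficient $\int_{i \in \mathcal{S}_j} \overline{w}_{ij}(1 - \alpha_{ij})\, d\mu$, solves for $\eta$; the solution is over-identified, since one such equation holds for each product. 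Next I would recover the pre-merger utility indices by inverting the softmax in \eqref{equation:ces.expenditure.share}: normalizing the outside option to $u_{i0}^\text{pre} = 0$, each inside index is pinned down by $u_{ij}^\text{pre} = \log(\alpha_{ij}/\alpha_{i0})$. This normalization is innocuous because the outside price is held fixed, so the post-merger shares in Lemma \ref{lemma:merger.simulation.under.ces} depend only on utility differences.

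With $\eta$ and $(u_{ij}^\text{pre})$ in hand, Lemma \ref{lemma:merger.simulation.under.ces} renders every term of the post-merger first-order conditions an explicit function of $\ddot{p}$: the updated indices $u_{ij}^\text{post}$, shares $\alpha_{ij}^\text{post}$, elasticities $\epsilon_{jj}^\text{post}$, diversion ratios $D_{j \to k}^{R,\text{post}}$, and margins $m_j^\text{post}$, the last built from the observed pre-merger margins and cost reductions $\ddot{c}_j$. Substituting these into \eqref{equation:first.order.conditions.rewritten} under the post-merger ownership structure yields the system $f(\ddot{p}) = 0$, and the equilibrium percentage price changes $\ddot{p}^*$ are identified as its solution.

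The main obstacle is this last step: because $f$ is highly nonlinear in $\ddot{p}$ through the softmax shares and the share-dependent weights, there is no closed-form solution, and a careful identification argument must confront existence and uniqueness. For identification it suffices that the true equilibrium satisfies $f(\ddot{p}^*) = 0$ and that $f$ is a data-determined map; uniqueness of the post-merger Bertrand-Nash equilibrium, or an explicit equilibrium-selection convention, is what guarantees that the solution returned is the object of interest. I would therefore either invoke standard CES/logit equilibrium-uniqueness results or phrase the identification as holding at any equilibrium characterized by the first-order conditions, rather than attempt to establish global uniqueness from scratch.
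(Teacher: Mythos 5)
Your proposal is correct and follows essentially the same route as the paper's proof: invert the softmax expenditure shares in \eqref{equation:ces.expenditure.share} to recover $u_{ij}^\text{pre}$, identify $\eta$ by matching the own-price elasticities obtained from Lemma \ref{lemma:own.price.elasticity} (via Lemma \ref{lemma:relationship}.\ref{lemma:relationship.item.2}) to the CES formula \eqref{equation:ces.own.price.elasticity.of.revenue}, and then use Lemma \ref{lemma:merger.simulation.under.ces} to write the post-merger first-order conditions as a data-determined system $f(\ddot{p})=0$ whose solution gives the percentage price changes. Your closing discussion of existence and uniqueness of the post-merger equilibrium is a caveat the paper's proof passes over in silence, but it supplements rather than alters the argument.
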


\section{Empirical Application I: Staples/Office Depot (2016) \label{section:6.empirical.application}}
I apply my framework to evaluate the proposed merger of Staples and Office Depot, which was eventually blocked in 2016. I use this empirical example to illustrate the simplicity of my approach.

\subsection{Background}
Staples and Office Depot are the two largest suppliers of office supplies and services in the United States. On February 4, 2015, Staples entered into a \$6.3 billion merger agreement with Office Depot. On December 9, 2015, the Federal Trade Commission (FTC) sued to block the merger over concerns that it would significantly reduce competition in the national market for consumable office supplies sold to ``large business customers,'' defined as businesses that purchased at least \$500,000 worth of consumable office supplies during 2014. The FTC defined the market as a cluster of non-substitutable office supplies (e.g., pens, folders, Post-it notes)---which the court accepted---citing similarity in competitive conditions across products and analytic simplicity \citep{shapiro2016testimony}. On May 10, 2016, a federal judge granted the FTC's preliminary injunction to block the merger \citep{dccourt2016}. The parties subsequently announced that they would abandon the deal.

\subsection{Empirical Framework}
To focus on illustrating my methodology, I make several simplifying assumptions. First, I assume that the competition in the relevant market can be approximated by the standard Bertrand-Nash framework.\footnote{In practice, the U.S. market for consumable office supplies to large B-to-B customers involves a mix of auctions, bargaining, and flat discounts. Core products are typically procured through requests for proposals, while non-core items are often purchased at flat discounts off list prices. In this analysis, however, I abstract from these complexities and assume firms compete by posting prices.} Second, I assume that Staples and Office Depot are the main competitors in the market and group the rest of the (fringe) competitors as the outside option.\footnote{This is consistent with Veritiv, the third-largest firm in the FTC's proposed market, holding only a 5.2\% share, well behind Staples and Office Depot \citep{shapiro2016testimony}. Although my framework does not require grouping fringe firms as an outside option, doing so simplifies the merger simulation by avoiding the need for margin data on fringe firms.} Third, I assume that each firm is a seller of a single product, which is in line with the FTC's conceptualization of the relevant market as those offering a ``cluster'' of office products. Fourth, I assume zero merger-specific marginal cost savings because they are difficult to estimate. Finally, I assume a single representative consumer with CES preference. The above assumptions facilitate the evaluation of unilateral effects via revenue diversion ratios using data on total market size and the merging firms' revenues and margins.
\subsection{Data}
Since I cannot access the confidential data used in the investigation, I use publicly available documents to infer the parties' revenues, margins, and market shares. Based on the presentation material prepared by the FTC's expert Carl Shapiro, I infer the total market size in 2014 to be $B=\$2.05$ billion. The market shares of Staples and Office Depot (in terms of revenue) were $\alpha_{\text{SP}} = 47.3\%$ and $\alpha_{\text{OD}} = 31.6\%$, respectively \citep{shapiro2016testimony}. Based on the companies' 10-K documents for fiscal year 2014, I infer Staples' and Office Depot's margins to be $m_{\text{SP}} = 25.8\%$ and $m_{\text{OD}} = 23.4\%$, respectively.\footnote{Margins in the market for consumable office supplies to large business customers may be quite different from the overall margin reported in 10-K due to a variety of factors specific to the market, e.g., discounts and rebates. } 
%According to the companies' 10-K documents, Staples' and Office Depot's 2014 net revenues were \$5.6 billion and \$4.7 billion in the business-to-business channel.
\subsection{Results}

\paragraph*{Gross Upward Pricing Pressure Indices and Consumer Harm}
I estimate annual consumer harm as follows. From equation \eqref{equation:own.price.elasticity}, own-price elasticities implied by firms' margins (with $\epsilon_{jj} = -1/m_j$) are $\epsilon_{\text{SP},\text{SP}} = - 3.875$ and $\epsilon_{\text{OD}, \text{OD}} = - 4.273$. Equation \eqref{equation:ces.revenue.diversion.ratio} implies that the revenue diversion ratios under the single-agent CES assumption is $D_{j \to k}^R = \frac{\alpha_k}{1 - \alpha_j}$, so the revenue diversion ratios between the two firms are $D_{\text{SP} \to \text{OD}}^R = 59.9\%$ and $D_{\text{OD} \to \text{SP}}^R = 69.1\%$. The gross upward pricing pressure index formula \eqref{equation:guppi.rewritten} assuming zero merger-specific efficiency gives $\mathit{GUPPI}_j = (1 + \epsilon_{jj}^{-1}) m_k D_{j \to k}^R$, so I get $\mathit{GUPPI}_{\text{SP}} = 10.4\%$ and $\mathit{GUPPI}_{\text{OD}}= 13.7\%$. In Online Appendix \ref{section:pass.through.matrix.under.ces.preference}, I show that I can calculate the merger pass-through matrix using data on merging firms' revenues and margins under the CES preference assumption; the estimated merger pass-through matrix is $M= \left[ \begin{smallmatrix} 1.005 & 0.345 \\ 0.347 & 1.098 \end{smallmatrix} \right]$. The first-order merger price effects, calculated as $\ddot{p} \approx M \cdot \mathit{GUPPI}$, are $\ddot{p}_{\text{SP}} = 15.2\%$ and $\ddot{p}_{\text{OD}} = 18.7\%$.\footnote{Using GUPPI to approximate merger price effects (i.e., $\ddot{p}_j \approx \mathit{GUPPI}_j$) \`{a} la \citet{miller2017upward} would produce conservative estimates of merger price effects.} The estimated annual consumer harm, calculated using $\Delta \mathit{CS} = \Delta \mathit{CS}_\text{SP} + \Delta \mathit{CS}_\text{OD}$ with $\Delta \mathit{CS}_j \approx -\ddot{p}_j \alpha_j B$ (Lemma \ref{lemma:approximation.of.welfare.effects}), is $\$268.2$ million. 

\paragraph*{Compensating Marginal Cost Reductions}
Holding prices fixed at the pre-merger level, the post-merger margins required to generate marginal cost reductions necessary to offset upward pricing incentives can be obtained via equation \eqref{equation:post.merger.foc.for.cmcr}; solving the system of equations for post-merger margins gives $m_{\text{SP}}^1 = 47.3\%$ and $m_{\text{OD}}^1 = 48.5\%$. By plugging these estimates into equation \eqref{equation:cmcr.calculation.from.margin}, I obtain estimates of compensating marginal cost reductions as $\mathit{CMCR}_\text{SP} = 29.1\%$ and $\mathit{CMCR}_\text{OD} = 32.7\%$.

\paragraph*{Merger Simulation}
Suppose that Staples and Office Depot are the only players in the market and that all other products are included in the outside option. The mean utilities of a representative consumer that rationalize the pre-merger market shares of Staples and Office Depot can be computed by the softmax inversion $\log \alpha_k - \log \alpha_0 = u_k$ for $k=1,2$, which gives $u_{\text{SP}}^\text{pre} = 0.807$ and $u_{\text{OD}}^\text{pre} = 0.404$. Using equation \eqref{equation:ces.own.price.elasticity.of.revenue}, I estimate the elasticity of the substitution parameter to be $\eta = 6.121$.\footnote{Since $\epsilon_{jj}^R = (1-\alpha_j)(1-\eta)$ and $\epsilon_{jj} = \epsilon_{jj}^R - 1$, I can use $\eta = 1 - \frac{1 + \epsilon_{jj}}{1 - \alpha_j}$, which gives two equations to (over-)identify $\eta$. I calculate $\eta_{\text{SP}} = 6.457$ and $\eta_{\text{OD}} = 5.786$ using each equation. I take their average to arrive at $\eta = 6.121$.} Using Lemma \ref{lemma:merger.simulation.under.ces}, I set up the post-merger first-order condition as known functions of relative price changes to get $f(\ddot{p})=0$. Solving the system of equations for $\ddot{p}$ gives $\ddot{p}_{\text{SP}}^* = 14.3\%$ and $\ddot{p}_{\text{OD}}^* = 18.0\%$, which turns out to be close to those of the first-order approach. The corresponding annual consumer harm, again estimated with the expression in Lemma \ref{lemma:approximation.of.welfare.effects} for comparison, is $\ddot{p}_\text{SP}^* \alpha_\text{SP} B + \ddot{p}_\text{OD}^* \alpha_\text{OD} B = \$255.7$ million.

\paragraph{Conclusion} Analyses using gross upward pricing pressure indices, compensating marginal cost reductions, and merger simulation all lead to the same conclusion: the merger between Staples and Office Depot would result in substantial anticompetitive harm, supporting the FTC’s decision to challenge it. In this case, the unilateral effects measures are straightforward to compute and require only basic calculations.

\subsection{Comparison to Standard Approach}

As shown above, my approach allows for straightforward estimation of unilateral effects statistics using revenue diversion ratios derived from data on revenues and margins—information typically accessible to antitrust authorities through firms’ ordinary-course financial records (as was the case for the FTC in this matter). By contrast, estimating quantity diversion ratios would have required constructing price and quantity indices, a more complex process involving additional data and a series of intermediate assumptions. Even applying the proportional-to-share formula under the standard logit demand model poses challenges, as it depends on knowing total market size in quantity terms, which is arguably more difficult to obtain than market size in dollar terms.\footnote{The FTC's economic expert, Carl Shapiro, did not construct price or quantity indices but relied on win-loss data---tracking whether a B-to-B customer chose Staples when Office Depot lost a bid, or vice versa---to approximate the diversion ratio between the two firms as inputs to his hypothetical monopolist test.} 

When quantity data are unavailable, analysts often approximate quantity diversion ratios by applying the logit demand model but substituting revenue shares for quantity shares in the proportional-to-share formula. However, this can introduce substantial bias. Consider GUPPI as an example. A naive approach without price or quantity data would approximate $\mathit{GUPPI}_j = m_k D_{j \to k} \frac{p_k}{p_j} \approx m_k D_{j \to k}^R$, arguing revenue proxies for quantity and $p_k/p_j \approx 1$. This yields $\mathit{GUPPI}_\text{SP}^\text{naive} = 14.0\%$ and $\mathit{GUPPI}_\text{OD}^\text{naive} = 17.8\%$, both noticeably higher than the true values of $\mathit{GUPPI}_\text{SP} = 10.4\%$ and $\mathit{GUPPI}_\text{OD} = 13.7\%$, respectively. The upward bias stems from omitting the adjustment factor $(1 + \epsilon_{jj}^{-1})$ in the correct formula $\mathit{GUPPI}_j = (1 + \epsilon_{jj}^{-1})m_k D_{j \to k}^R$.

Next, consider CMCR. For a merger between single-product firms $j$ and $k$, the CMCR formula is $\mathit{CMCR}_j = \frac{m_j D_{j \to k} D_{k \to j} + m_k D_{j \to k} \frac{p_k}{p_j}}{(1 - m_j)(1 - D_{j \to k} D_{k \to j})}$ \citep{werden1996robust}. As before, substituting estimated revenue diversion ratios for quantity diversion ratios and assuming $p_k / p_j \approx 1$ lead to inflated estimates: $\mathit{CMCR}_\text{SP}^\text{naive} = 56.9\% $ and $\mathit{CMCR}_{\text{OD}}^\text{naive} = 61.4\%$, compared to the true values of $\mathit{CMCR}_\text{SP} = 29.1\%$ and $\mathit{CMCR}_\text{OD} = 32.7\%$, respectively.
\section{Empirical Application II: Albertsons/Safeway (2015) \label{section:7.empirical.application.II}}

As a second empirical application, I examine the 2015 merger between Albertsons and Safeway, which the FTC approved subject to the divestiture of 168 stores. Analyzing grocery mergers is often challenging due to limited access to price and quantity data and the difficulty of adapting them to fit standard tools. In contrast, store-level revenues and margins are typically available to antitrust authorities, as firms track them routinely in the course of standard accounting and management practices. I use the \emph{Albertsons/Safeway} merger to illustrate how my framework can be used to evaluate a large-scale retail merger in a highly tractable way, relying on a first-order approach that uses revenue and margin data, which are typically available to antitrust authorities.\footnote{I do not run a merger simulation because solving the first-order conditions for thousands of stores is computationally demanding.}\footnote{The same approach can be readily applied to the recent Kroger/Albertsons merger, which represents the largest proposed supermarket merger in U.S. history.} 

\subsection{Background}
In March 2014, AB Acquisition LLC, the parent of the Albertsons supermarket chain, agreed to terms to purchase Safeway for \$9 billion. Albertsons operated 1,075 stores in 28 states under the banners Albertsons, United, Amigos, and Market Street, among others. Safeway owned 1,332 stores in 18 states under the banners Safeway, Vons, Pavilions, Tom Thumb, and Randall's, among others. Upon consummation, the merger was expected to create the second-largest traditional grocery chain (next to Kroger) by store count and sales in the US, with approximately 2,400 stores \citep{ftc2015ceberus}. Figure \ref{figure:footprint} shows Albertsons and Safeway's footprint in the contiguous US in 2009.

\begin{figure}[hbt!]
\centering
\includegraphics[scale = 0.6]{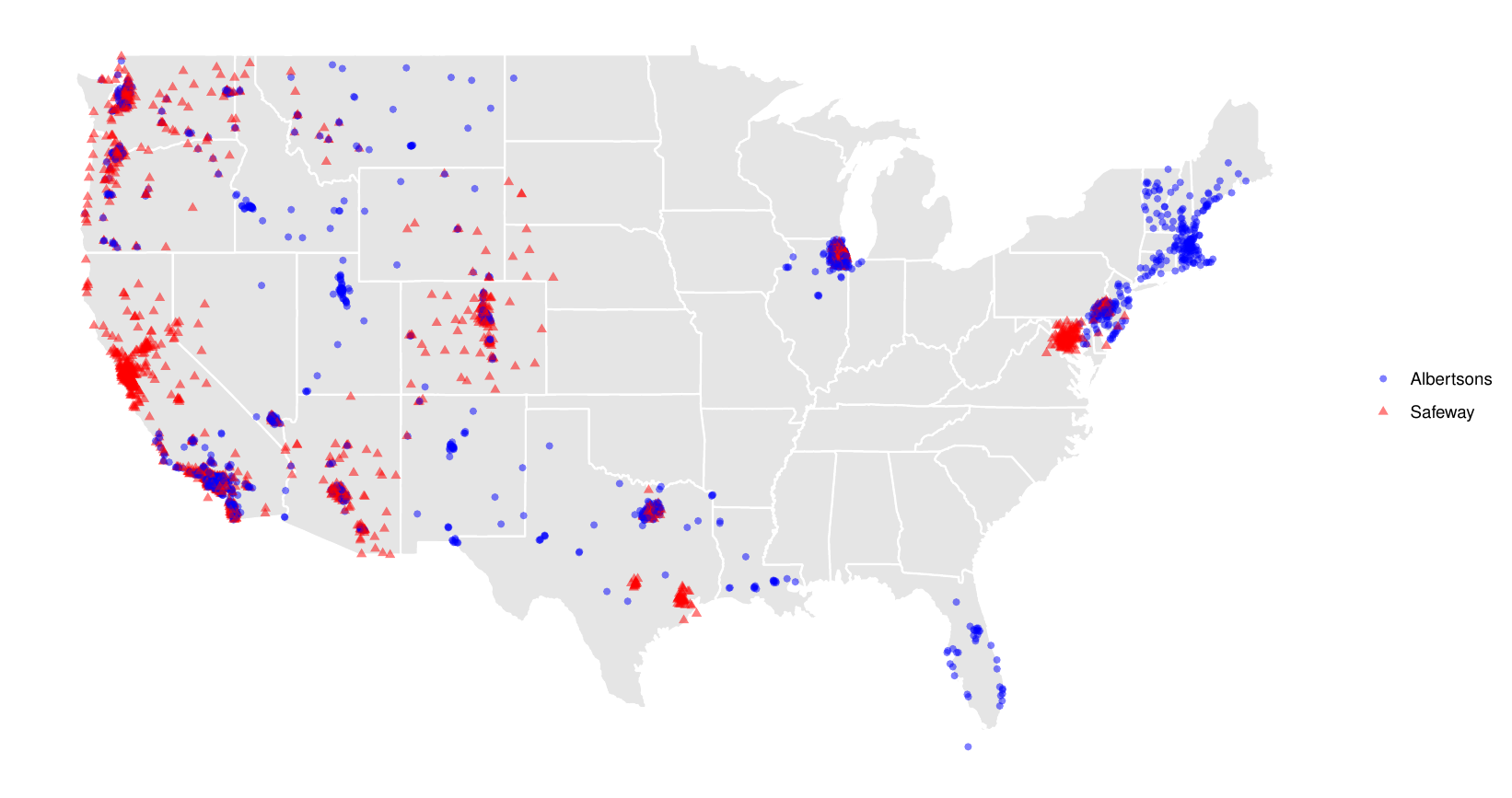}
\caption{Albertsons and Safeway footprint in 2009 \label{figure:footprint}}
\end{figure}

The FTC defined the relevant product market as supermarkets within ``hypermarkets.'' Supermarkets refer to ``traditional full-line retail grocery stores that sell, on a large-scale basis, food and non-food products that customers regularly consume at home---including, but not limited to, fresh meat, dairy products, frozen foods, beverages, bakery goods, dry groceries, detergents, and health and beauty products.'' Hypermarkets include chains such as Walmart Supercenters that sell an array of products not found in traditional supermarkets but also offer goods and services available at conventional supermarkets.

The FTC defined the relevant geographic markets as areas that range from a two- to ten-mile radius around each party's supermarkets, where the radius depends on factors such as population density, traffic, and unique market characteristics. The agency identified overlapping territories in Arizona, California, Colorado, Montana, Nevada, Oregon, Texas, Washington, and Wyoming. In late 2014, the FTC settled with the parties with a mandate to divest 168 stores in the overlap markets \citep{ftc2015ceberus}. Figure \ref{figure:divested} shows the locations of the divested stores.

\begin{figure}[hbt!]
\centering
\includegraphics[width = 10cm, height=5cm]{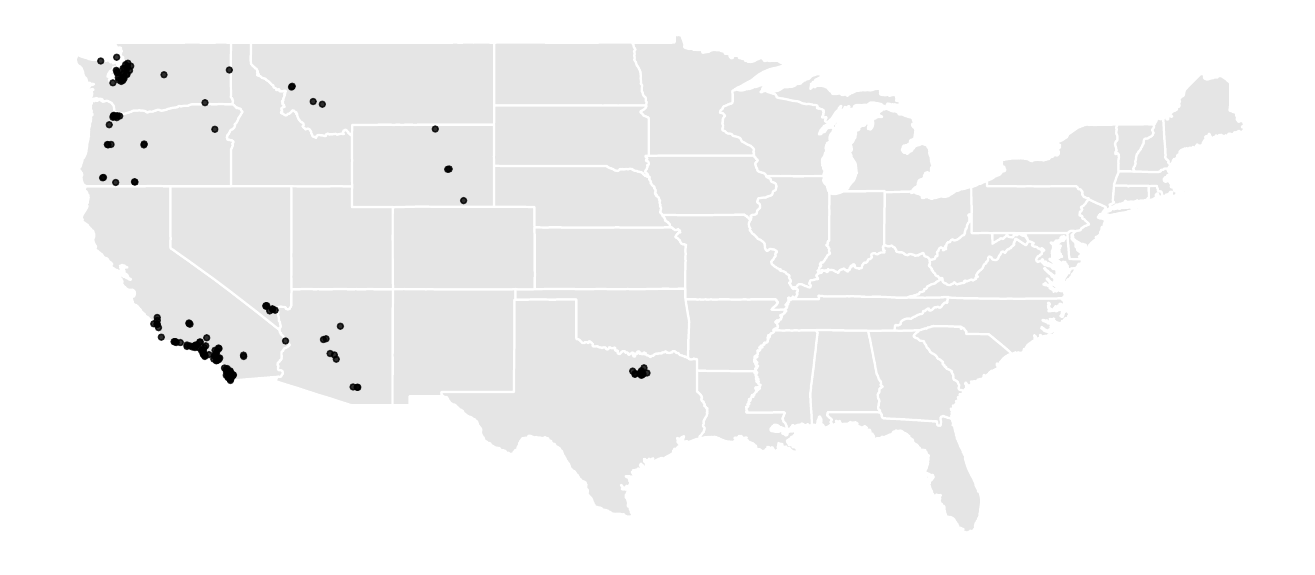}
\caption{Location of the 168 divested stores \label{figure:divested}}
\end{figure}

\subsection{Data}
I use the 2009 cross-section of AC Nielsen's (currently known as The Nielsen Company) Trade Dimensions TDLinx data for information on grocery stores' locations, sales, and characteristics in the US.\footnote{Year 2009 is the closest year before 2014 for which I can access the Trade Dimensions data. In 2013, SuperValue Inc. sold Albertsons, Acme, Jewel-Osco, Shaw's and Start Market banners to Cerberus' AB Acquision, the parent company of Albertsons Inc. To capture the Albertsons/Safeway merger investigation started in 2014, I assume Albertsons owns all the banners mentioned above, which my 2009 Trade Dimensions data encodes as SuperValu-owned. Note that AC Nielsen's Trade Dimensions data estimates weekly sales volume using a proprietary algorithm.} To generate conservative competitive effects estimates, I include a wide range of firms as potential competitors. Specifically, I include all grocery stores with selling space above 7,000 square feet but exclude military commissaries; my sample includes traditional supermarkets, supercenters, wholesale clubs, natural/gourmet stores, limited assortment stores, and warehouses. Since I cannot access confidential store-level margin data, I uniformly apply a relative margin of 0.27 to all stores based on the merging parties' 10-K reports around the time of the proposed merger. Finally, I obtain census tract-level demographic information from the IPUMS NHGIS database \citep{manson2023ipums}, including income, the proportion of the population with a college degree or higher, black population, and urbanicity (fraction of people living in census-designated urban areas). Table \ref{table:summary.statistics} reports the summary statistics.\footnote{I refer the readers to \citet{ellickson2020measuring} for more details on the grocery industry landscape.}

\begin{table}[htbp!]
\centering
\caption{Summary Statistics \label{table:summary.statistics}}
\scalebox{0.8}{
\begin{threeparttable}
\begin{tabular}{lccccc} 
\toprule
 & Mean & St. Dev. & 1st Quartile & Median & 3rd Quartile \\ 
\midrule
\emph{Tract Characteristics\tnote{1}} \\ [1ex]
Population (1,000) & 4.716 & 2.191 & 3.287 & 4.436 & 5.784\\
Median Household Income (\$1,000) & 60.926 & 29.741 & 39.925 & 54.314 & 75.108\\
College & 0.288 & 0.194 & 0.134 & 0.241 & 0.411\\
Black & 0.069 & 0.114 & 0.007 & 0.028 & 0.079\\
Urbanicity & 0.873 & 0.294 & 0.986 & 1.000 & 1.000\\ [1ex]
\emph{Store Characteristics} \\ [1ex]
Annual Revenue (\$1,000,000) & 18.123 & 18.301 & 6.518 & 11.732 & 22.161\\
Store Size (1,000 Sq. Ft.) & 30.359 & 17.875 & 16.000 & 28.000 & 38.000\\
Supermarket & 0.737 & 0.440 & 0.000 & 1.000 & 1.000\\
Supercenter & 0.101 & 0.301 & 0.000 & 0.000 & 0.000\\
Wholesale Club & 0.036 & 0.187 & 0.000 & 0.000 & 0.000\\
\addlinespace
Natural/Gourmet & 0.040 & 0.195 & 0.000 & 0.000 & 0.000\\
Limited Assortment & 0.072 & 0.259 & 0.000 & 0.000 & 0.000\\
Warehouse & 0.013 & 0.115 & 0.000 & 0.000 & 0.000\\
Big Chain & 0.789 & 0.408 & 1.000 & 1.000 & 1.000\\
Medium Chain & 0.131 & 0.338 & 0.000 & 0.000 & 0.000\\
\addlinespace
Small Chain & 0.080 & 0.271 & 0.000 & 0.000 & 0.000\\
\bottomrule
\end{tabular}
\begin{tablenotes}
\footnotesize
    \item[1] Sample consists of census tracts in overlap states.
    \item[2] I define big chains as those with over 100 stores. Medium chains are those with 10 to 100 stores. Small chains are those with 10 or fewer stores.
\end{tablenotes}
\end{threeparttable}
}
\end{table}

\subsection{Empirical Specification}
I assume that each firm is an owner of multiple stores and competes by setting a uni-dimensional price at each store. Each store corresponds to a ``product'' in the standard Bertrand-Nash framework. Price $p_j$ of store $j$ is interpreted as the unobserved price index. I estimate the gross upward pricing pressure indices of the merging parties' stores before and after the merger. I do not credit merger-specific efficiency (i.e., I set $\ddot{c}_j = 0$). Thus, the GUPPI at each store $j$ is given by $\mathit{GUPPI}_j = (1 + \epsilon_{jj}^{-1}) \sum_{k \in \mathcal{J}_B} m_k D_{j \to k}^R$, where $B$ represents the merger counterparty, and the own-price elasticity $\epsilon_{jj}$ is a known function of margins and revenue diversion ratios as shown in \eqref{equation:own.price.elasticity}. Given store-level margin data, the estimation of revenue diversion ratios completes the GUPPI calculation.

I estimate revenue diversion ratios using a parametric model based on \citet{ellickson2020measuring}, making the following assumptions. I define the relevant consumer unit as the census tract, effectively assuming a representative consumer per tract. Each representative consumer forms a consideration set $\mathcal{C}_i$ consisting of all stores within a 10-mile radius and has nested CES preferences. The set of nests $\mathcal{B}$ includes six categories: supermarket, supercenter, wholesale club, natural/gourmet, limited assortment, and the outside option. The budget share that consumer $i$ allocates to store $j$ in nest $b$ is
\[
\alpha_{ij} = s_b^i s_{j \vert b}^i, \; \text{ where } s_b^i = \frac{\exp(\mu_b I_{i,b})}{\sum_{q \in \mathcal{B}} \exp (\mu_q I_{i,q})} , \; s_{j \vert b}^i =  \frac{\exp(u_{ij} / \mu_b)}{\sum_{l \in \mathcal{C}_{i,b} } \exp(u_{il}/\mu_b) } ,
\]
$\mu_b \in [0,1]$ is the nesting parameter, $\mathcal{C}_{i,b}$ is the subset of stores in nest $b$, and $I_{i,b} \equiv \log \sum_{l \in \mathcal{C}_{i,b}} \exp(u_{il} / \mu_b)$ is the inclusive value of nest $b$.\footnote{If $\mu_b \to 1$ for all nests, the model collapses to the standard logit case. Consumers substitute only within each nest if $\mu_b \to 0$ for all nests.} The outside option forms its own nest with $\mu_0 = 1$. Following \citet{ellickson2020measuring}, who find that nesting parameters are similar across nests, I assume a common value $\mu_b = \mu$ for all $b \in B \backslash \{0\}$ to simplify estimation.

Latent utility $u_{ij}$ is specified as a linear projection onto observed covariates: $u_{ij} = x_{ij}^\top \theta$, where $x_{ij}$ includes the distance from tract to store (in miles), tract characteristics, and store characteristics. As in \citet{ellickson2020measuring}, I assume the control variables absorb unobserved price variation, though I adopt a more parsimonious set of controls for tractability. Finally, I assume that each tract allocates 13\% of its income to grocery spending based on \citet{ellickson2020measuring}'s estimates.\footnote{The grocery budget estimate is likely to be conservative. The 2014 Bureau of Labor Statistics Consumer Expenditure Survey reports that consumers spend approximately 12.6\% of their pre-tax income on food but only 7.4\% on food at home, which may be more relevant for estimating the grocery budget.} I estimate the utility parameters $\theta$ via nonlinear least squares by minimizing the distance between observed and model-implied store revenues.

\subsection{Estimation Results}

\paragraph{Utility Parameters}
Table \ref{table:utility.parameter.estimates} reports the utility parameter estimates from the nonlinear least squares problem. All coefficients have expected signs. For example, consumers dislike traveling far, especially more when they are in an urban area. Urban residents or richer/higher-educated people have lower grocery expenditures because they can substitute more for outside options such as restaurants or food delivery. Black consumers also spend less on groceries.\footnote{One explanation is that black neighborhoods have fewer supermarkets; see, e.g., \citet{bower2014intersection} and \citet{charron2017race}. The current framework does not capture how grocery store entries endogenously depend on neighborhood characteristics.} Consumers spend more in larger stores and at supercenters. Consumers value major banners such as Walmart, Costco, H-E-B, Whole Foods, and Trader Joe's. Finally, the nesting parameter $\mu$ is estimated to be 0.46, which indicates consumers perceive different grocery formats as highly differentiated.\footnote{My estimate of the nesting parameter is smaller than those found in \citet{ellickson2020measuring}, which reports $\mu \approx 0.75$ using 2006 TDLinx data. The differences may be attributed to multiple factors such as data years, model specification, and classification of nests.}

% \begin{table}[htbp!]
% \centering
% \caption{Utility parameter estimates \label{table:utility.parameter.estimates}}
% \scalebox{0.60}{
% \begin{tabular}{lS[table-format=3.2]S[table-format=2.2]} 
% \toprule
% & \text{Coef.} & \text{S.E.} \\ 
% \midrule
% Constant & 24.99 & (1.00)\\
% Distance & -0.03 & (0.01)\\
% Distance * Urbanicity & -0.05 & (0.01)\\
% Urbanicity & -0.24 & (0.10)\\
% log(Median HH Income) & -2.57 & (0.10)\\
% \addlinespace
% College & -1.71 & (0.16)\\
% Black & -4.87 & (0.27)\\
% log(Store Size) & 0.31 & (0.01)\\
% Supermarket & 0.03 & (0.02)\\
% Supercenter & 0.16 & (0.03)\\
% \addlinespace
% Wholesale Club & -0.22 & (0.04)\\
% Natural/Gourmet & -1.01 & (0.04)\\
% Limited Assortment & -1.76 & (0.05)\\
% Big Chain & -0.03 & (0.01)\\
% Medium Chain & 0.05 & (0.01)\\
% \addlinespace
% Albertsons & 0.23 & (0.01)\\
% Safeway & 0.32 & (0.01)\\
% Walmart & 0.46 & (0.02)\\
% Costco & 0.64 & (0.04)\\
% Kroger & 0.28 & (0.01)\\
% \addlinespace
% H-E-B & 0.52 & (0.03)\\
% Whole Foods & 0.53 & (0.03)\\
% Trader Joes & 0.63 & (0.03)\\
% Save Mart & 0.20 & (0.02)\\
% Winco & 0.44 & (0.03)\\
% \addlinespace
% Stater Bros & 0.47 & (0.02)\\
% Raleys & 0.29 & (0.02)\\
% Target & 0.08 & (0.03)\\
% $\mu$ & 0.46 & (0.01)\\
% \midrule
% Residual S.E. & \multicolumn{2}{c}{0.476} \\
% \bottomrule
% \end{tabular}
% }
% \end{table}

\begin{table}[htbp!]
\centering
\caption{Utility parameter estimates \label{table:utility.parameter.estimates}}
\scalebox{0.90}{
\begin{tabular}{lS[table-format=3.2]S[table-format=2.2] @{\hspace{3em}} lS[table-format=3.2]S[table-format=2.2]} 
\toprule
& \text{Coef.} & \text{S.E.} & & \text{Coef.} & \text{S.E.} \\ 
\midrule
Constant & 24.99 & (1.00) & Albertsons & 0.23 & (0.01) \\
Distance & -0.03 & (0.01) & Safeway & 0.32 & (0.01)\\
Distance * Urbanicity & -0.05 & (0.01) & Walmart & 0.46 & (0.02)\\
Urbanicity & -0.24 & (0.10) & Costco & 0.64 & (0.04)\\
log(Median HH Income) & -2.57 & (0.10) & Kroger & 0.28 & (0.01)\\
\addlinespace
College & -1.71 & (0.16) & H-E-B & 0.52 & (0.03)\\
Black & -4.87 & (0.27) & Whole Foods & 0.53 & (0.03)\\
log(Store Size) & 0.31 & (0.01) & Trader Joes & 0.63 & (0.03)\\
Supermarket & 0.03 & (0.02) & Save Mart & 0.20 & (0.02)\\
Supercenter & 0.16 & (0.03) & Winco & 0.44 & (0.03)\\
\addlinespace
Wholesale Club & -0.22 & (0.04) & Stater Bros & 0.47 & (0.02)\\
Natural/Gourmet & -1.01 & (0.04) & Raleys & 0.29 & (0.02)\\
Limited Assortment & -1.76 & (0.05) & Target & 0.08 & (0.03)\\
Big Chain & -0.03 & (0.01) & $\mu$ & 0.46 & (0.01)\\
Medium Chain & 0.05 & (0.01)\\
\midrule
Residual S.E. & 0.476 \\
\bottomrule
\end{tabular}
}
\end{table}

\paragraph{Revenue Diversion Ratios}
Figure \ref{figure:revenue.diversion.ratio} shows the distribution of revenue diversion ratios implied at the estimated utility parameter.\footnote{The figures' x-axes are truncated at 0.25 for presentation.} Figure \ref{figure:revenue.diversion.ratio.3.miles} reports the distribution of revenue diversion ratios for pairs of stores within 3 miles of each other. The revenue diversion ratios from one store to another tend to be small. However, Figure \ref{figure:aggregate.revenue.diversion.ratio} illustrates the total revenue diversion ratio from one store to \emph{all surrounding merger counterparty stores} can be significantly larger, indicating the importance of accounting for multi-store ownership in retail merger analysis. GUPPI statistics enable researchers to summarize how diversions to a network of surrounding stores aggregate, which is crucial for analyzing competition in the grocery industry.

\begin{figure}[hbt!]
\centering
\begin{subfigure}{0.9\textwidth}
\centering
\includegraphics[width = 0.8\textwidth]{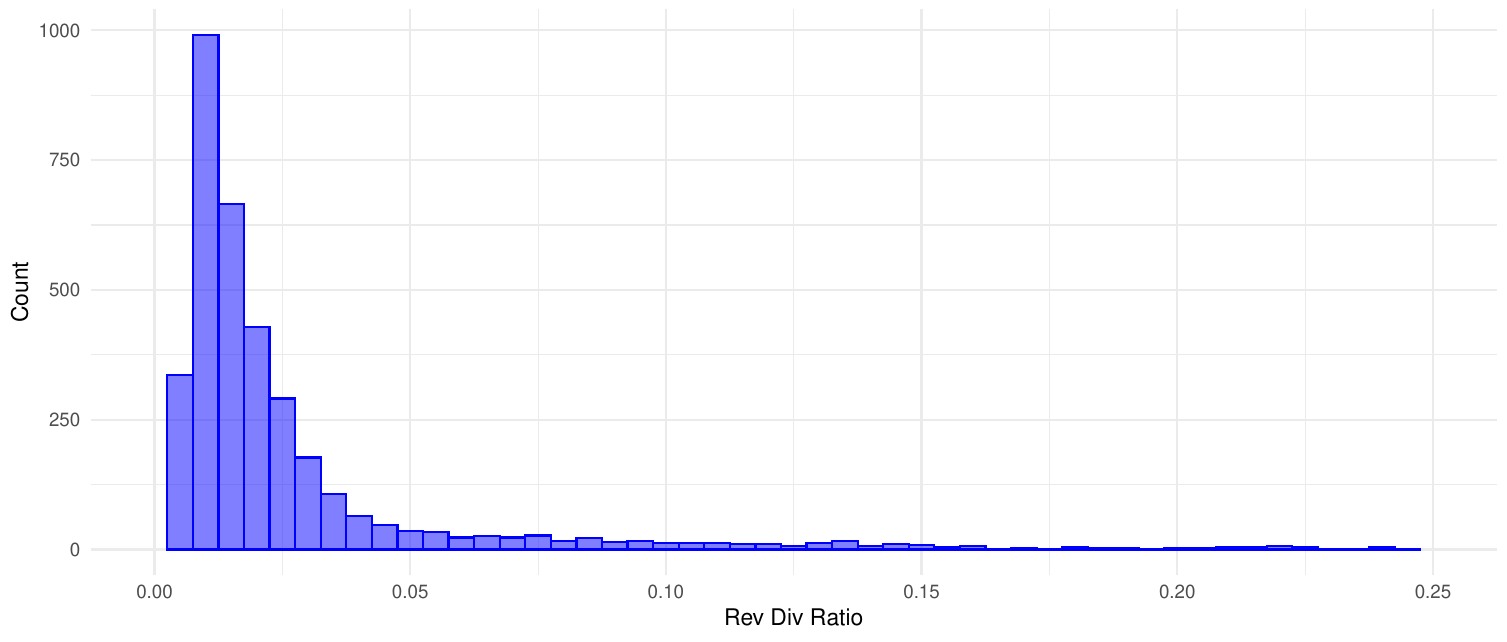}
\caption{Revenue diversion ratios to stores within 3 miles \label{figure:revenue.diversion.ratio.3.miles}}
\end{subfigure}
\begin{subfigure}{0.9\textwidth}
\centering
    \includegraphics[width = 0.8\textwidth]{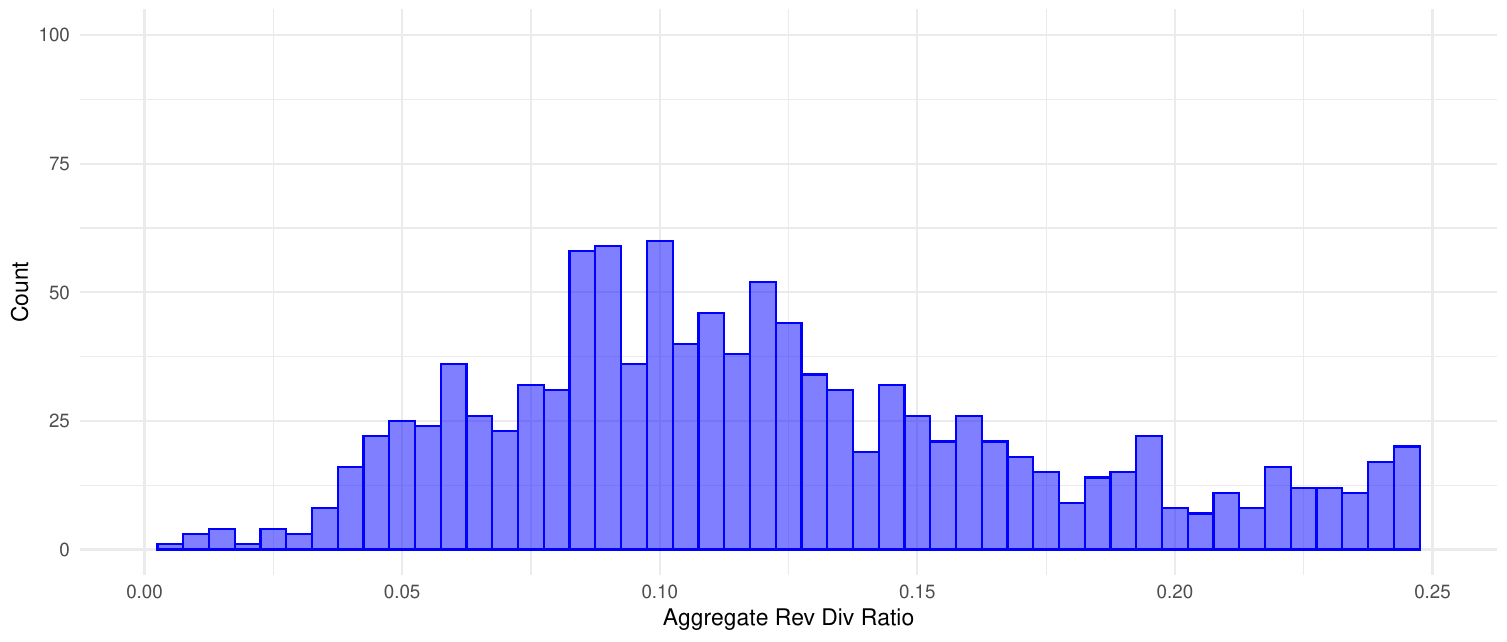}
    \caption{Aggregate revenue diversion ratios to merger counterparty stores \label{figure:aggregate.revenue.diversion.ratio}}
\end{subfigure}
\caption{Distribution of revenue diversion ratios\label{figure:revenue.diversion.ratio}}

\end{figure}

\paragraph{Gross Upward Pricing Pressure Indices}
The FTC-mandated divestiture contributed to a substantial reduction in the GUPPIs. Table \ref{table:guppi.store.count} reports the distribution of GUPPIs. Column ``Pre'' reports the store-level GUPPIs in the pre-divestiture regime. Column ``Post'' reports the GUPPIs of the remaining stores after the divestiture.\footnote{The table lists the divestiture of 165 stores, which is slightly less than the actual divestiture of 168 stores. This discrepancy of three stores arises from the time gap between my data, collected in 2009, and the year of the merger proposal in 2014.} Finally, Column ``Divested'' reports the GUPPIs of the divested stores prior to the merger. Overall, the divestiture significantly decreased the upward pricing pressures at many problematic stores; Figure \ref{figure:guppi.distribution} shows that the divestiture induced a leftward shift in the GUPPI distribution. The divested stores had relatively high GUPPIs in proportion compared to the overall distribution reported in the first column. Post divestiture, the number of high-GUPPI stores decreases substantially.

\begin{table}[htbp!]
\centering
\caption{Distribution of GUPPIs before and after the divestiture \label{table:guppi.store.count}}
\begin{tabular}{lccc} 
\toprule
GUPPI & Pre & Post & Divested \\ 
\midrule
0--1\% & 430 & 555 & 0\\
1--2\% & 348 & 440 & 9\\
2--3\% & 380 & 229 & 35\\
3--4\% & 195 & 113 & 32\\
4--5\% & 117 & 66 & 28\\
5\%-- & 168 & 63 & 61\\
\midrule
Total & 1,638 &  1,466 & 165 \\
\bottomrule
\end{tabular}
\end{table}

\begin{figure}[hbt!]
\centering
\includegraphics[width = 0.8\textwidth]{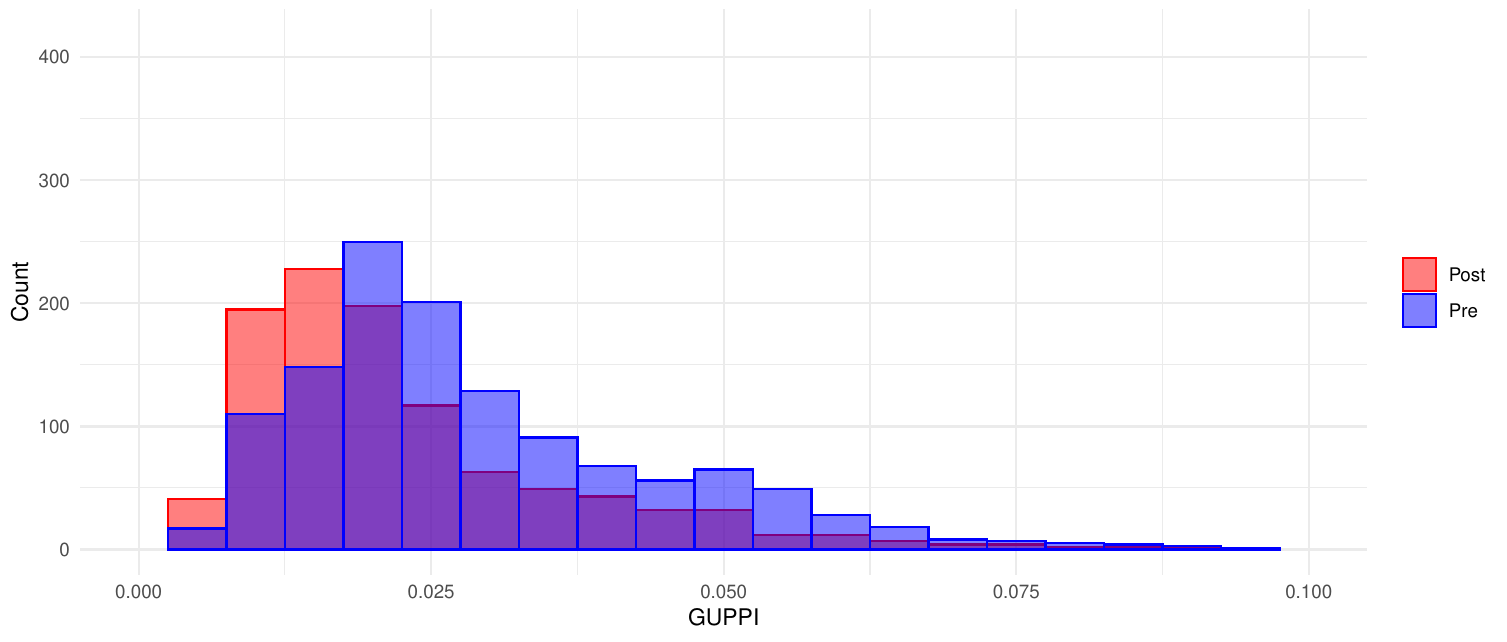}
\caption{Distribution of GUPPIs pre and post divestiture\label{figure:guppi.distribution}}
\end{figure}

\paragraph{Welfare Effects of Merger}
Based on \citet{miller2017upward}'s results, I approximate the merger pass-through matrix with an identity matrix, which amounts to predicting the merger price effects as $\ddot{p}_j \approx \mathit{GUPPI}_j$.\footnote{Although it may be possible to calculate the merger pass-through matrix using the nested CES demand assumption, I use \citet{miller2017upward}'s approach to avoid calculating a high-dimensional matrix and generate conservative estimates of merger price effects.} To calculate the total merger harm to consumers, I use equation \eqref{equation:total.welfare.effect}, where each $\Delta \mathit{CS}_j \approx - \ddot{p}_j R_j$ (Lemma \ref{lemma:approximation.of.welfare.effects}) represents the predicted consumer harm at merging parties' store $j$. The estimated annual consumer harm before the divestiture is \$621 million. Post-merger, the annual consumer harm reduces to \$383 million. 

\paragraph{Compensating Cost Efficiencies to Offset Upward Pricing Pressures}
The consumer harm remains substantial because I assume no efficiency credit (i.e., $\ddot{c}_j = 0$). If the merger induces reductions in marginal costs, then the consumer harm will be smaller. In practice, it is common for antitrust authorities to evaluate merger effects after crediting some degree of marginal cost efficiencies. Like \citet{nocke2022concentration}, I study the antitrust authorities' beliefs in firms' merger-induced marginal cost reductions by estimating the percentage change in marginal costs $\ddot{c}_j$'s that offset the upward pricing pressures.\footnote{
From \eqref{equation:gross.upward.pricing.pressure}, the net gross upward pricing pressure index is $\mathit{GUPPI}_j = \mathit{GUPPI}_j^\text{no credit} + \ddot{c}_j ( 1- m_j)$. Thus, the percentage reduction in marginal cost to offset upward pricing pressure at store $j$ is $-\ddot{c}_j = \mathit{GUPPI}_j^\text{no credit} / (1-m_j)$. This measure is easier to calculate than \citet{werden1996robust}'s compensating marginal cost reductions, which account for all firms' optimal pricing equations simultaneously.}

Figure \ref{figure:cost.efficiency.credit.distribution} reports the distribution of cost-efficiency credits required to offset upward pricing pressures at the merging firms' stores before and after the divestiture. The 90th, 95th, and 99th quantiles of the post-divestiture compensating cost efficiencies are 5.2\%, 6.6\%, and 9.9\%, respectively. If the FTC hypothetically were to credit cost efficiencies of 5\%, 6\%, or 7\% to all stores, it would predict no price increase at more than 90\%, 95\%, or 99\% of the remaining stores, respectively.

\begin{figure}[hbt!]
\centering
\includegraphics[width = 0.8\textwidth]{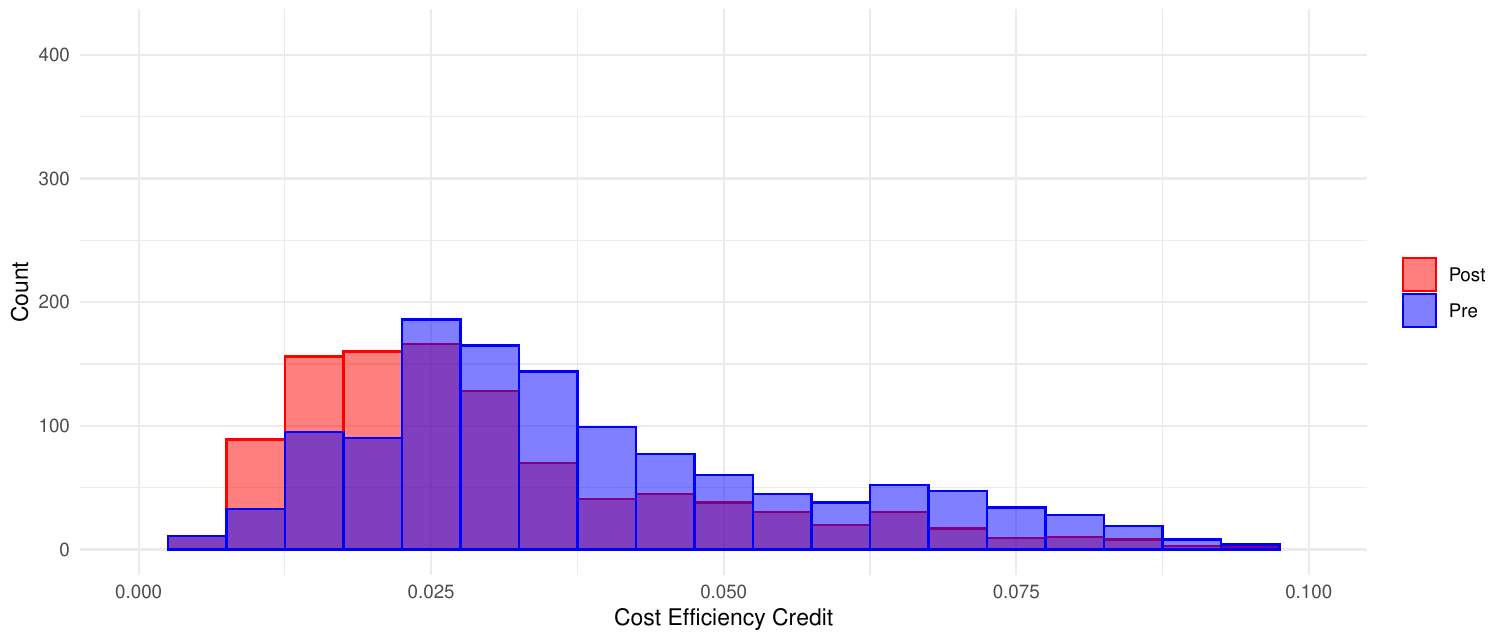}
\caption{Distribution of compensating cost efficiencies pre and post divestiture\label{figure:cost.efficiency.credit.distribution}}
\end{figure}
\subsection{Limitations}

My analysis faces several limitations. First, the data used in this application are more limited than what is typically available in actual merger reviews. For instance, TDLinx relies on a proprietary and undisclosed imputation method to estimate revenues. By contrast, antitrust authorities can obtain actual store-level sales and margin data or detailed consumer-level transaction data from firms operating in the relevant markets \citep{hosken2016horizontal}. Access to such data would improve the accuracy of both revenue diversion ratio estimates and the consumer budget share parameter.

Second, the GUPPI estimates can be sensitive to modeling assumptions, particularly the specification of the nesting structure, which affects the estimated diversion ratios. In practice, disputes between antitrust authorities and merging parties often hinge on market definition, which in turn shapes assumptions about consumers’ budgets and substitution patterns. For example, in \emph{Whole Foods/Wild Oats}, the FTC defined the market as ``premium, natural, and organic supermarkets,'' which is narrower than the broader market definition used in this analysis. While a flexible random coefficients model could produce more robust estimates under weaker assumptions, such an approach typically requires access to price data, which is not available here.

Finally, I do not explicitly address the tension between the stylized fact that grocery chains often use uniform pricing within zones and my modeling assumption of store-level pricing. Nevertheless, the store-level pricing assumption provides a reasonable baseline for several reasons. First, although prices are generally more uniform within chains than across them, prior research documents substantial price variation across local markets \citep{dellavigna2019uniform, hitsch2021prices, aparicio2024pricing}.\footnote{If there is large within-chain price variation across local markets, the researcher may want to add extra variables that can control for unobserved prices (e.g., regional fixed effects) in the demand estimation stage.} Second, uniform pricing may reflect a strategic simplification rather than an inability to price locally, especially when the gains from granular pricing are modest. Third, chains appear capable of highly localized pricing; for example, Kroger reportedly employs ``micro-zones'' encompassing as few as one to five stores for staple items such as milk, bananas, and eggs \citep{ftc2024kroger, FTC_v_Kroger_PFF_2024}. While it is theoretically possible to model uniform pricing within a Bertrand-Nash framework, applying the idea for empirical analysis is challenging in practice because pricing zone definitions are typically unobserved.

\section{Conclusion \label{section:99.conclusion}}
Throughout this paper, I have illustrated that a lack of price data is not a barrier to conducting a thorough unilateral effects analysis for horizontal mergers. Data on revenues, profit margins, and revenue diversion ratios are sufficient to identify the price and welfare implications of mergers. When supplemented with additional assumptions about consumer demand, revenue diversion ratios become identifiable from cross-sectional data on consumers' expenditures, and merger simulations are feasible. The approach detailed in this study offers broad applicability to various industries in scenarios where access to price data is limited.

% \todo{Aviv has a paper that criticizes the CES preference as imposing a priori assumption on substitution pattern. This is a challenge.}

I see several avenues for future research. First, finding alternative methods for estimating the key statistics used in my framework---diversion ratios, margins, and pass-through rates---will be interesting as these statistics are not always easy to obtain. Econometric approaches that can relax the homogeneous price responsiveness assumption to estimate diversion ratios will be helpful. Furthermore, developing alternative methods to obtain reliable pass-through rates will increase the attractiveness of the first-order approach to merger analysis.

Second, developing empirical and simulation results under various data scenarios would be valuable. In many applications, researchers have access only to partial information, such as prices, margins, or diversion ratios aggregated across multiple products. While I have abstracted from these complications to focus on identification, important questions remain about how to make use of partial price and margin data and how aggregation error may affect empirical findings.

Finally, it is important to develop empirical merger analysis frameworks tailored to settings beyond standard Bertrand-Nash competition, such as bargaining, auctions, nonlinear or dynamic pricing, and collusion. These alternative competitive environments often arise in real-world markets and may require fundamentally different modeling approaches to capture the strategic interactions and merger effects.

%%% REFERENCE %%%
% \clearpage
\singlespacing
\bibliographystyle{apalike}
\bibliography{references}
\doublespacing

\clearpage
\appendix
\part*{Appendix}
\section{Proofs \label{section:proof}}
\subsection{Proof of Lemma \ref{lemma:relationship}}

\paragraph{Proof of Lemma \ref{lemma:relationship}.\ref{lemma:relationship.item.1}}
Recall $\epsilon_{jj} = \frac{\partial q_j}{\partial p_j} \frac{p_j}{q_j}$, $\epsilon_{kj} = \frac{\partial q_k}{\partial p_j} \frac{p_j}{q_k}$, $\epsilon_{jj}^R = \frac{\partial R_j}{\partial p_j} \frac{p_j}{R_j}$, $\epsilon_{kj}^R = \frac{\partial R_k}{\partial p_j} \frac{p_j}{R_k}$. Using these objects, it is straightforward to verify $D_{j \to k} = - \frac{\epsilon_{kj}}{\epsilon_{jj}} \frac{q_k}{q_j}$ and $D_{j \to k}^R = - \frac{\epsilon_{kj}^R}{\epsilon_{jj}^R} \frac{R_k}{R_j}$. \qed

\paragraph{Proof of Lemma \ref{lemma:relationship}.\ref{lemma:relationship.item.2}}
From $R_k = p_k q_k$, $\frac{\partial R_k}{\partial p_j} = q_j \mathbb{I}_{j=k} + p_k \frac{\partial q_k}{\partial p_j}$ for arbitrary $j$ and $k$. Then for arbitrary $j$ and $k$, $\epsilon_{kj}^R = \frac{\partial R_k}{\partial p_j} \frac{p_j}{R_k}
  = \mathbb{I}_{j = k} q_j \frac{p_j}{R_k} + p_k \frac{\partial q_k}{\partial p_j} \frac{p_j}{R_k} = \mathbb{I}_{j = k} + \frac{\partial q_k}{\partial p_j} \frac{p_j}{q_k} = \mathbb{I}_{j = k} + \epsilon_{kj}$. \qed

\paragraph{Proof of Lemma \ref{lemma:relationship}.\ref{lemma:relationship.item.3}}

    Finally, by definition, $R_k = p_k q_k$, so 
$\frac{\partial R_k}{\partial p_j} = \mathbb{I}_{j = k} q_k + p_k \frac{\partial q_k}{\partial p_j}$ for arbitrary $j$ and $k$. Then
\begin{align*}
D_{j \to k}^R & = - \frac{\mathbb{I}_{j = k}q_k + p_k \frac{\partial q_k}{\partial p_j}}{q_j + p_j \frac{\partial q_j}{\partial p_j}} \\
& = - \frac{ \mathbb{I}_{j = k} (\frac{q_k}{p_j})/ (\frac{\partial q_k}{\partial p_j})  +  \frac{p_k}{p_j}(\frac{\partial q_k}{\partial p_j})/( \frac{\partial q_j}{\partial p_j}) }{(\frac{q_j}{p_j})/( \frac{\partial q_j}{\partial p_j}) + 1} \\
& = -\frac{\mathbb{I}_{j = k} \epsilon_{kj}^{-1} - (\frac{p_k}{p_j}) D_{j \to k}}{\epsilon_{jj}^{-1} + 1}.
\end{align*}
Rewriting the last line gives $(1 + \epsilon_{jj}^{-1}) D_{j \to k}^R + \mathbb{I}_{j = k} \epsilon_{kj}^{-1} = D_{j \to k} \frac{p_k}{p_j}$. But as $\mathbb{I}_{j = k} \epsilon_{kj}^{-1} = \mathbb{I}_{j = k} \epsilon_{jj}^{-1}$, we have $(1 + \epsilon_{jj}^{-1})D_{j \to k}^R + \mathbb{I}_{j = k} \epsilon_{jj}^{-1} = D_{j \to k} \frac{p_k}{p_j}$.
\qed

\subsection{Proof of Lemma \ref{lemma:own.price.elasticity}}
Solving equation \eqref{equation:first.order.conditions.rewritten} for $\epsilon_{jj}$ gives the desired expression \eqref{equation:own.price.elasticity}. \qed

\subsection{Proof of Proposition \ref{proposition:guppis.are.identified}}

The statement follows because cost efficiencies $\ddot{c}_j$, margins $m_j$, revenue diversion ratios $D_{j\to k}^R$ are observed, and the own-price elasticity $\epsilon_{jj}$ is a known function of margins and revenue diversion ratios.
\qed
\subsection{Proof of Proposition \ref{proposition:upp.and.merger.price.effects}}

    The statement follows because the first three assumptions (Assumptions \ref{assumption:data}, \ref{assumption:revenue.diversion.ratios.data}, and \ref{assumption:cost.savings}) ensure the identification of GUPPIs, and the last assumption (Assumption \ref{assumption:upp.approximates.true.price.effect}) ensures that GUPPIs can be translated to first-order approximation of merger price effects. \qed

\subsection{Proof of Lemma \ref{lemma:approximation.of.welfare.effects}}
First, the consumer surplus variation is approximated as $\Delta \mathit{CS}_j \approx - \Delta p_j q_j^0 = -(\Delta p_j/p_j^0) \times (p_j^0 q_j^0) = - \ddot{p}_j R_j^0 $. Next, consider the change in producer surplus $\Delta \mathit{PS}_j = \Delta p_j \times q_j^1 + \Delta q_j \times (p_j^0 - c_j^0) - \Delta c_j\times q_j^1$. The post-merger quantity demanded can be approximated as $q_j^1 \approx q_j^0 + (\partial q_j / \partial p_j) (p_j^1 - p_j^0) = q_j^0 (1 + \epsilon_{jj} \ddot{p}_j)$, which gives $\Delta q_j \equiv q_j^1 - q_j^0 \approx \epsilon_{jj} q_j^0 \ddot{p}_j$. Plugging them into $\Delta \mathit{PS}$ above gives $\Delta \mathit{PS}_j \approx \ddot{p}_j R_j^0 (1+\epsilon_{jj} \ddot{p}_j) + \epsilon_{jj} R_j^0 \ddot{p}_j m_j^0 - \frac{\Delta c_j}{p_j^0} R_j^0 (1+\epsilon_{jj} \ddot{p}_j)$. Plugging in 
$\Delta c_j/p_j = \frac{\Delta c_j}{c_j^0} \frac{c_j^0}{p_j^0} = \ddot{c}_j (1-m_j^0)$ and rearranging gives the desired expression $\Delta \mathit{PS}_j \approx (\ddot{p}_j - \ddot{c}_j (1-m_j)) R_j ( 1 + \epsilon_{jj} \ddot{p}_j) + \epsilon_{jj} R_j \ddot{p}_j m_j$. \qed
\subsection{Proof of Proposition \ref{proposition:consumer.welfare}}

The assumptions ensure that the own-price elasticities of demand and merger price effects are identified. Combining them with data on revenues, margins, and cost efficiencies, $\Delta \mathit{CS}_j$'s and $\Delta \mathit{PS}_j$'s are identified. \qed

\subsection{Proof of Lemma \ref{lemma:cmcr}}

First, using the fact that $m_j = 1 - \frac{c_j}{p_j}$, the compensating marginal cost reductions can be expressed in terms of margins as $\ddot{c}_j = \frac{c_j^1 - c_j^0}{c_j^0} = \frac{(c_j^1 / p_j^1) (p_j^1 / p_j^0) - c_j^0 / p_j^0}{c_j^0 / p_j^0} = \frac{(1 - m_j^1) - (1 - m_j^0)}{1 - m_j^0} = \frac{m_j^0 - m_j^1}{1 - m_j^0}$; note that I have used $p_j^1 = p_j^0$ as pre- and post-merger prices should be equal by the definition of CMCR. Next, expressing the post-merger FOCs in terms of own-price elasticities of demand, margins, and revenue diversion ratios, but evaluating the own-price elasticities of and revenue diversion ratios at their pre-merger values, gives \eqref{equation:post.merger.foc.for.cmcr}, which implicitly defines the post-merger margins. \qed
\subsection{Proof of Proposition \ref{proposition:cmcr}}

If Assumptions \ref{assumption:data} and \ref{assumption:revenue.diversion.ratios.data} hold, then the only unknowns in equation \eqref{equation:post.merger.foc.for.cmcr}. Solving \eqref{equation:post.merger.foc.for.cmcr} for post-merger margins $m_j^1$'s and plugging them into \eqref{equation:cmcr.calculation.from.margin} gives the CMCRs, as desired. \qed

\subsection{Proof of Lemma \ref{lemma:revenue.diversion.ratio.under.ces.utility}}

I omit the derivation of the multinomial logit functional form for $\alpha_{ij}$ as it is standard.

Next, to obtain \eqref{equation:ces.own.price.elasticity.of.revenue}, consider the revenue equation $R_j = \int_{i \in \mathcal{S}_j} \alpha_{ij} B_i d\mu$. Differentiating both sides with respect to $p_j$ gives $\frac{\partial R_j}{\partial p_j} = \int_{i \in \mathcal{S}_j} \frac{\partial \alpha_{ij}}{\partial u_{ij}} \frac{\partial u_{ij}}{\partial p_j} B_i d\mu = \int_{i \in \mathcal{S}_j} \alpha_{ij}(1 - \alpha_{ij}) \frac{1-\eta}{p_j}B_i d\mu $. Thus, $\epsilon_{jj}^R = \frac{\partial R_j}{\partial p_j} \frac{p_j}{R_j} = \frac{1-\eta}{R_j} \int_{i \in \mathcal{S}_j} \alpha_{ij} ( 1-\alpha_{ij}) B_i d\mu $. Noting that $R_j = \int_{\tilde{i} \in \mathcal{S}_j} \alpha_{\tilde{i}j} B_{\tilde{i}} d\mu$ and $\epsilon_{i,jj}^e = \frac{\partial e_{ij}}{\partial p_j} \frac{p_j}{e_{ij}} = (1-\eta)(1-\alpha_{ij})$, the own-price elasticity of revenue can be rewritten as $\epsilon_{jj}^R = \int_{i \in \mathcal{S}_j} \overline{w}_{ij} \epsilon_{i,jj}^e d\mu$, where $\overline{w}_{ij} = \frac{\alpha_{ij} B_i}{\int_{\tilde{i} \in \mathcal{S}_j} \alpha_{\tilde{i}j} B_{\tilde{i}} d\mu}$.

Finally, to obtain \eqref{equation:ces.revenue.diversion.ratio}, note that the revenue diversion ratio from product $j$ to $k$, assuming that Leibniz's rule applies, is
\begin{align*}
D_{jk}^R & = - \frac{
\int_{i \in \mathcal{S}_k} \partial e_{ik} / \partial p_j d\mu
}{
\int_{\tilde{i} \in \mathcal{S}_j} \partial e_{\tilde{i}j} / \partial p_j d\mu
} \\ 
& = - \int_{i \in \mathcal{S}_k} \left( \frac{\partial e_{ik} / \partial p_j}{\int_{i \in \mathcal{S}_j} \partial e_{ij} / \partial p_j d\mu} \right) d\mu \\ 
& = \int_{i \in \mathcal{S}_k} \left( \frac{\partial e_{ij} / \partial p_j}{\int_{i \in \mathcal{S}_j} \partial e_{ij} / \partial p_j d\mu} \right) \left(- \frac{\partial e_{ik} / \partial p_j}{\partial e_{ij} / \partial p_j} \right) d\mu.
\end{align*}
Plugging in $\frac{\partial e_{ij}}{\partial p_j} = \frac{\partial \alpha_{ij}}{\partial u_{ij}} \frac{\partial u_{ij}}{\partial p_j} B_i = \alpha_{ij}(1-\alpha_{ij})\frac{1-\eta}{p_j}B_i$ and $\frac{\partial e_{ik}}{\partial p_j} = \frac{\partial \alpha_{ik}}{\partial u_{ij}} \frac{\partial u_{ij}}{\partial p_j} B_i = -\alpha_{ij}\alpha_{ik} \frac{1-\eta}{p_j} B_i$ and simplifying (canceling out $\frac{1-\eta}{p_j}$ from the numerators and denominators) gives the desired expression. \qed

\subsection{Proof of Proposition \ref{proposition:identification.of.diversion.ratios.under.ces.utility}}

Consumers' budgets are assumed to be known. Then, suppose consumers' expenditure shares for the merging firms' products are known. In that case, these shares can be plugged into \eqref{equation:ces.revenue.diversion.ratio} to calculate the revenue diversion ratios. \qed

\subsection{Proof of Proposition \ref{proposition:identification.with.second.choice.data}}

Let $v_{ij} \equiv \exp(u_{ij})$. The revenue of product $k$ before the removal of product $j$ is
$R_k^\text{pre} = \int_{i \in \mathcal{S}_k} \alpha_{ik}^\text{pre} B_i d\mu$
where $\alpha_{ij}^\text{pre} = \frac{v_{ik}}{\sum_{l \in \mathcal{C}_i} v_{il}}$.
After removing product $j$, the revenue of product $k$ becomes $R_k^\text{post} = \int_{i \in \mathcal{S}_k} \alpha_{ik}^\text{post} B_i d\mu$ with $\alpha_{ik}^\text{post} = \frac{v_{ik}}{\sum_{l \in \mathcal{C}_i \backslash j} v_{il}}$. The revenue diversion ratio associated with the removal of product $j$ is $D_{jk}^{\text{2nd}} = - \frac{R_k^\text{post} - R_k^\text{pre}}{0 - R_j^\text{pre}}  = \frac{\int_{i \in \mathcal{S}_k} (\alpha_{ik}^\text{post} - \alpha_{ik}^\text{pre}) B_i d\mu}{\int_{i \in \mathcal{S}_j} \alpha_{ij}^\text{pre} B_i d\mu} = \int_{i \in \mathcal{S}_k} \left(\frac{\alpha_{ij}^\text{pre} B_i}{\int_{i \in \mathcal{S}_j} \alpha_{ij}^\text{pre} B_i d\mu}\right) \left( \frac{\alpha_{ik}^\text{post} - \alpha_{ik}^\text{pre}}{\alpha_{ij}^\text{pre}} \right) d\mu$. Thus, it is sufficient to verify that $\frac{\alpha_{ik}^\text{post} - \alpha_{ik}^\text{pre}}{\alpha_{ij}^\text{pre}} = \frac{\alpha_{ik}^\text{pre}}{1 - \alpha_{ij}^\text{pre}}$.
Observe
\begin{align*}
    \frac{\alpha_{ik}^\text{post} - \alpha_{ik}^\text{pre}}{\alpha_{ij}^\text{pre}} & = \frac{\frac{v_{ik}}{\sum_{l \in \mathcal{C}_i \backslash j} v_{il}} - \frac{v_{ik}}{\sum_{l \in \mathcal{C}_i } v_{il}}}{ \frac{v_{ij}}{\sum_{l \in \mathcal{C}_i} v_{il}}} = \frac{ \left( \frac{\sum_{l \in \mathcal{C}_i } v_{il}}{\sum_{l \in \mathcal{C}_i \backslash j} v_{il}} \right)v_{ik} - v_{ik}}{v_{ij}} \\
    & = \frac{v_{ik}}{v_{ij}} \left( \frac{v_{ij}}{\sum_{l \in \mathcal{C}_i \backslash j} v_{il} } \right) = \frac{ \frac{v_{ik}}{\sum_{l \in \mathcal{C}_i} v_{il}}}{ 1 - \frac{ v_{ij}}{\sum_{l \in \mathcal{C}_i} v_{il}}},
\end{align*}
which is what I wanted to show. \qed

\subsection{Proof of Lemma \ref{lemma:merger.simulation.under.ces}}

Recall that the latent utility function is specified as $u_{ij} = \log \beta_{ij} + (1- \eta) \log p_j$. Since $\beta_{ij}$ is fixed (unaffected by price change) and $p_j^\text{post} \equiv p_j^\text{pre} ( 1 + \ddot{p}_j)$, the post-merger utility indices can be written as $u_{ij}^\text{post} = \log \beta_{ij}^\text{pre} + (1-\eta) \log p_j^\text{post} = \log \beta_{ij}^\text{pre} + (1-\eta) \log p_j^\text{pre} + (1-\eta) \log (1+ \ddot{p}_j) = u_{ij}^\text{pre} + (1-\eta) \log (1 + \ddot{p}_j)$. The post-merger choice probabilities $\alpha_{ij}^\text{post}$ are soft-max functions of the post-merger utility indices $(u_{ij}^\text{post})_{j \in \mathcal{C}_i}$. Next, the post-merger own-price elasticities of demand and the revenue diversion ratios use the characterizations \eqref{equation:ces.own.price.elasticity.of.revenue} (with the fact that $\epsilon_{jj} = \epsilon_{jj}^R - 1)$ and \eqref{equation:ces.revenue.diversion.ratio}. Finally, $m_j^\text{post} = \frac{p_j^\text{post} - p_j^\text{pre}}{p_j^\text{pre}} = 1 - \frac{c_j^\text{post}}{p_j^\text{post}} = 1 - \frac{c_j^\text{post} - c_j^\text{pre} + c_j^\text{pre}}{p_j^\text{post}} = 1 - \frac{c_j^\text{post} - c_j^\text{pre}}{c_j^\text{pre}} \frac{c_j^\text{pre}}{p_j^\text{pre}} \frac{p_j^\text{pre}}{p_j^\text{post}} - \frac{c_j^\text{pre}}{p_j^\text{pre}} \frac{p_j^\text{pre}}{p_j^\text{post}} = 1 - \frac{c_j^\text{pre}}{p_j^\text{pre}} \frac{p_j^\text{pre}}{p_j^\text{post}} (1 + \ddot{c}_j) = 1 - (1-m_j^\text{pre}) \frac{1 + \ddot{c}_j}{1 + \ddot{p}_j}$. \qed
\subsection{Proof of Proposition \ref{proposition:merger.simulation.under.ces}}

To make the components of the post-merger first-order conditions (consumer expenditure shares, own-price elasticities of demand, and revenue diversion ratios) known functions of arbitrary relative price change $\ddot{p}$ under Lemma \ref{lemma:merger.simulation.under.ces}, the analyst must know the pre-merger utility indices $u_{ij}^\text{pre}$ and the elasticity of substitution parameter $\eta$. $u_{ij}^\text{pre}$'s can be identified by inverting the soft-max expenditure share formula $\alpha_{ij}^\text{pre} = \frac{\exp(u_{ij}^\text{pre})}{\sum_{l \in \mathcal{C}_i} \exp(u_{il}^\text{pre})}$. Next, the parameter $\eta$ can be identified by leveraging \eqref{equation:ces.own.price.elasticity.of.revenue}, $\epsilon_{jj} = \epsilon_{jj}^R - 1$, and the fact that the own-price elasticities of demand can be identified. \qed

\section{Identification with Discrete-Continuous Demand \label{section:identification.with.discrete.continuous.demand}}

The main body of the paper demonstrates how the CES preference assumption simplifies the estimation of revenue diversion ratios and facilitates merger simulation. In this section, I extend the analysis by establishing identification conditions for a broader class of discrete-continuous demand models, of which CES is a special case. I rely on the following assumption. 

\begin{assumptionM}[Discrete-continuous demand] \label{assumption:discrete.continuous.demand} Let $u_{ij}^* = u_{ij} + \varepsilon_{ij}$ be the latent utility of consumer $i$ from consuming product $j$ per unit of expenditure, where $u_{ij}$ is the deterministic component, and $\varepsilon_{ij}$ is the stochastic component of latent utility with absolutely continuous full support distribution $\mathbb{R}^{\dim (\mathcal{C}_i)}$. Assume that the latent utility from the outside option is normalized to $u_{i0} = 0$. 
\begin{enumerate}
    \item \label{assumption:discrete.continuous.demand.1} For each consumer $i$ and product $j$, $\alpha_{ij} = \mathbb{E}[\mathbb{I} \{ j = \arg \max_{l \in \mathcal{C}_i} u_{il}^*\}]$. 
    \item \label{assumption:discrete.continuous.demand.2} For each consumer $i$ and product $j$, $u_{ij} = \delta_{ij} + (1-\eta) \log p_j$, where $\delta_{ij}$ is the latent utility from non-price characteristics of product $j$, and $\eta > 1$ is the price responsiveness parameter. 
\end{enumerate}
\end{assumptionM}

Recall that consumer $i$'s expenditure on product $j$ is given by $e_{ij} = \alpha_{ij} B_i$. Assumption \ref{assumption:discrete.continuous.demand}.\ref{assumption:discrete.continuous.demand.1} allows the analyst to model consumer behavior as if each unit of the budget is allocated through a discrete choice governed by an additive random utility model. CES utility is a special case in which expenditure shares take the familiar softmax form $\alpha_{ij} = \frac{u_{ij}}{\sum_{l \in \mathcal{C}_i} u_{il}}$. Assumption \ref{assumption:discrete.continuous.demand}.\ref{assumption:discrete.continuous.demand.2} serves three purposes: (i) it imposes an exclusion restriction, allowing price $p_j$ to affect expenditure only through $u_{ij}$; (ii) it assumes homogeneous price responsiveness to simplify identification; and (iii) it imposes log-linearity in price to facilitate merger simulation.\footnote{Log-linearity is not strictly necessary for establishing the identification arguments for revenue diversion ratios. However, such specification is standard. Furthermore, \citet{dube2022discrete} shows that log-linearity is crucial for establishing the Hurwicz-Uzawa integrability of the demand system.} In sum, Assumption \ref{assumption:discrete.continuous.demand} encompasses a broad class of discrete-continuous additive random utility models, such as nested CES, but excludes models with heterogeneous price sensitivity, such as random coefficient models.\footnote{I refer the reader to \citet{dube2022discrete} for a microeconomic foundation of Assumption \ref{assumption:discrete.continuous.demand}.}

\subsection{Identification of Revenue Diversion Ratios \label{subsection:revenue.diversion.ratios.under.discrete.continuous.demand}} 

For an arbitrary demand function, the identity $R_j = \int_{i \in \mathcal{S}_j} e_{ij} d\mu$ implies that the revenue diversion ratio can be expressed as a weighted average of individual-level expenditure diversion ratios.

\begin{lemma}[Revenue diversion ratio as a weighted sum of individual expenditure diversion ratio] \label{lemma:revenue.diversion.ratio.as.a.weighted.sum.general.case} Assuming that Leibniz's rule applies, the revenue diversion ratio for $j \neq k$ can be written as
\[
D_{j \to k}^R = \int_{i \in \mathcal{S}_k} w_{ij} D_{i,j \to k}^e d\mu, \; \text{ where } w_{ij} = \frac{\partial e_{ij} / \partial p_j}{\int_{\tilde{i} \in \mathcal{S}_j} \partial e_{\tilde{i} j} / \partial p_j d\mu}, \; D_{i, j \to k}^e = - \frac{\partial e_{ik} / \partial p_j}{\partial e_{ij} / \partial p_j}.  \]
\end{lemma}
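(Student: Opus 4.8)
The plan is to prove the identity by direct computation, following exactly the logic already used in the last step of the proof of Lemma~\ref{lemma:revenue.diversion.ratio.under.ces.utility}, but without invoking the CES functional form. First I would start from the definition $D_{j \to k}^R = - \frac{\partial R_k / \partial p_j}{\partial R_j / \partial p_j}$ and substitute the revenue aggregation identities $R_k = \int_{i \in \mathcal{S}_k} e_{ik}\, d\mu$ and $R_j = \int_{\tilde{i} \in \mathcal{S}_j} e_{\tilde{i}j}\, d\mu$. Invoking the stated hypothesis that Leibniz's rule applies, I would differentiate under the integral sign to obtain
\[
D_{j \to k}^R = - \frac{\int_{i \in \mathcal{S}_k} \partial e_{ik}/\partial p_j\, d\mu}{\int_{\tilde{i} \in \mathcal{S}_j} \partial e_{\tilde{i}j}/\partial p_j\, d\mu}.
\]

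The second step is a multiply-and-divide manipulation to surface the individual-level objects. For each consumer $i$ in the numerator integrand, I would write $-\partial e_{ik}/\partial p_j = (\partial e_{ij}/\partial p_j)\cdot\bigl(-\frac{\partial e_{ik}/\partial p_j}{\partial e_{ij}/\partial p_j}\bigr) = (\partial e_{ij}/\partial p_j)\, D_{i,j\to k}^e$, so the numerator becomes $\int_{i \in \mathcal{S}_k} (\partial e_{ij}/\partial p_j)\, D_{i,j\to k}^e\, d\mu$. Dividing through by the (consumer-independent) denominator and pushing it inside the integral converts the ratio $\frac{\partial e_{ij}/\partial p_j}{\int_{\tilde{i}\in\mathcal{S}_j}\partial e_{\tilde{i}j}/\partial p_j\, d\mu}$ into exactly the weight $w_{ij}$, yielding the claimed expression $D_{j\to k}^R = \int_{i\in\mathcal{S}_k} w_{ij} D_{i,j\to k}^e\, d\mu$.

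The one point requiring care---and the only place where the argument is not purely mechanical---is that the multiply-and-divide step is ill-defined for consumers $i \in \mathcal{S}_k$ who do not consider product $j$ (i.e., $j \notin \mathcal{C}_i$), since then $\partial e_{ij}/\partial p_j = 0$. I would handle this by observing that for such consumers $p_j$ has no effect on any of their expenditures, so $\partial e_{ik}/\partial p_j = 0$ as well; they contribute nothing to the numerator and may be dropped, effectively restricting the integral to $\mathcal{S}_j \cap \mathcal{S}_k$, where $D_{i,j\to k}^e$ and $w_{ij}$ are well-defined. With this caveat, the rearrangement is exact and the lemma follows; no appeal to any particular demand model is needed, which is precisely the point of stating it at this level of generality.
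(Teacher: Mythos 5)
Your proposal is correct and follows essentially the same route as the paper's own proof: start from the definition $D_{j \to k}^R = -\frac{\partial R_k/\partial p_j}{\partial R_j/\partial p_j}$, differentiate under the integral sign via Leibniz's rule, then multiply and divide the numerator integrand by $\partial e_{ij}/\partial p_j$ to produce the weights $w_{ij}$ and the individual expenditure diversion ratios $D_{i,j\to k}^e$. Your additional care with consumers $i \in \mathcal{S}_k$ for whom $j \notin \mathcal{C}_i$ (where $\partial e_{ij}/\partial p_j = 0$ makes the multiply-and-divide step formally ill-posed) is a point the paper glosses over, and your resolution---such consumers have $\partial e_{ik}/\partial p_j = 0$ and contribute nothing---is the right one.
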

\begin{proof}
    The revenue diversion ratio from product $j$ to $k$, assuming that Leibniz's rule applies, is $D_{j \to k}^R = - \frac{\partial R_k / \partial p_j}{\partial R_j / \partial p_j} = - \frac{\int_{i \in \mathcal{S}_k} \partial e_{ik} / \partial p_j d\mu }{\int_{\tilde{i} \in \mathcal{S}_j} \partial e_{\tilde{i}j} / \partial p_j d\mu}  = - \int_{i \in \mathcal{S}_k} \left( \frac{\partial e_{ij} / \partial p_j}{\int_{\tilde{i} \in \mathcal{S}_j} \partial e_{\tilde{i}j} / \partial p_j d\mu}  \right) \left( - \frac{\partial e_{ik}/ \partial p_j}{\partial e_{ij} / \partial p_j}  \right) d\mu $, which is what I wanted to show.
\end{proof}

Assumption \ref{assumption:discrete.continuous.demand} yields the following characterization of revenue diversion ratios.

\begin{lemma}[Revenue diversion ratios with discrete-continuous demand] \label{lemma:revenue.diversion.ratio.with.discrete.continuous.demand} Under Assumption \ref{assumption:discrete.continuous.demand}, the revenue diversion ratios for $j \neq k$ can be written as
\[
D_{j \to k}^R = \int_{i \in \mathcal{S}_k} w_{ij}D_{i,j \to k}^e d\mu, \; \text{ where } w_{ij} =  \frac{\partial \alpha_{ij} / \partial u_{ij} B_i}{\int_{\tilde{i} \in \mathcal{S}_j} \partial \alpha_{\tilde{i}j}/\partial u_{\tilde{i}j} B_{\tilde{i}} d\mu}, \; D_{i, j \to k}^R = - \frac{\partial \alpha_{ik} / \partial u_{ij}}{\partial \alpha_{ij} / \partial u_{ij}}.
\]
\end{lemma}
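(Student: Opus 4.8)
The plan is to start from the general decomposition already established in Lemma~\ref{lemma:revenue.diversion.ratio.as.a.weighted.sum.general.case}, which writes $D_{j \to k}^R$ as a $\mu$-weighted average over shoppers $i \in \mathcal{S}_k$ of individual expenditure diversion ratios $D_{i, j \to k}^e = -(\partial e_{ik}/\partial p_j)/(\partial e_{ij}/\partial p_j)$, with weights $w_{ij} = (\partial e_{ij}/\partial p_j)/\int_{\tilde{i} \in \mathcal{S}_j} (\partial e_{\tilde{i} j}/\partial p_j)\, d\mu$. The goal is then to re-express each $p_j$-derivative of expenditure in terms of a $u_{ij}$-derivative of the expenditure share $\alpha_{ij}$, and to verify that the common scalar factor introduced by the chain rule cancels in both the weights and the individual ratios.

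First I would invoke the exclusion restriction in Assumption~\ref{assumption:discrete.continuous.demand}.\ref{assumption:discrete.continuous.demand.2}: since $p_j$ enters the model only through $u_{ij} = \delta_{ij} + (1-\eta)\log p_j$, the price $p_j$ affects the share $\alpha_{il}$ of any product $l$ solely via $u_{ij}$. Writing $e_{il} = \alpha_{il} B_i$ and applying the chain rule gives, for $l \in \{j,k\}$,
\[
\frac{\partial e_{il}}{\partial p_j} = B_i \, \frac{\partial \alpha_{il}}{\partial u_{ij}} \, \frac{\partial u_{ij}}{\partial p_j} = B_i \, \frac{\partial \alpha_{il}}{\partial u_{ij}} \, \frac{1-\eta}{p_j}.
\]
The crucial observation is that the factor $(1-\eta)/p_j$ depends only on the index $j$, not on the consumer $i$ or the target product $l$, so it is a common multiplier across numerator and denominator of every ratio in which it appears.

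Next I would substitute this expression into $D_{i, j \to k}^e$ and into $w_{ij}$. In the individual diversion ratio the factors $B_i$ and $(1-\eta)/p_j$ occur in both numerator and denominator and cancel, leaving $D_{i, j \to k}^e = -(\partial \alpha_{ik}/\partial u_{ij})/(\partial \alpha_{ij}/\partial u_{ij})$. In the weights the scalar $(1-\eta)/p_j$ is common to the integrand and to the integral in the denominator, so it cancels there as well, leaving $w_{ij} = B_i\,(\partial \alpha_{ij}/\partial u_{ij}) / \int_{\tilde{i} \in \mathcal{S}_j} B_{\tilde{i}}\,(\partial \alpha_{\tilde{i} j}/\partial u_{\tilde{i} j})\, d\mu$. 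Collecting these two substitutions into the decomposition of Lemma~\ref{lemma:revenue.diversion.ratio.as.a.weighted.sum.general.case} yields precisely the claimed formula.

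There is no deep obstacle here; the argument is essentially one application of the chain rule followed by a cancellation. The single point requiring genuine care is the exclusion restriction: I would make explicit that under Assumption~\ref{assumption:discrete.continuous.demand} the only dependence of $\alpha_{ik}$ on $p_j$ runs through $u_{ij}$, so that $\partial \alpha_{ik}/\partial p_j = (\partial \alpha_{ik}/\partial u_{ij})(\partial u_{ij}/\partial p_j)$ with no additional channels. Without this restriction the reduction would fail and the common factor would not cancel cleanly. I would also note that the Leibniz-rule regularity already assumed in Lemma~\ref{lemma:revenue.diversion.ratio.as.a.weighted.sum.general.case} is inherited here, so differentiation under the integral sign remains justified.
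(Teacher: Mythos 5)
Your proposal is correct and follows essentially the same route as the paper's proof: apply the chain rule via the exclusion restriction to get $\partial e_{il}/\partial p_j = B_i\,(\partial \alpha_{il}/\partial u_{ij})\,\frac{1-\eta}{p_j}$, substitute into the decomposition of Lemma~\ref{lemma:revenue.diversion.ratio.as.a.weighted.sum.general.case}, and cancel the common factor $\frac{1-\eta}{p_j}$ (and $B_i$ in the individual ratios). Your write-up is in fact slightly more careful than the paper's, since you note explicitly that $B_i$ cancels only in $D_{i,j\to k}^e$ and not in the weights, and that the cancelling factor is consumer-independent.
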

\begin{proof}
    The results follow by observing that Assumption \ref{assumption:discrete.continuous.demand} implies, for an arbitrary $j$ and $k$, $\frac{\partial e_{ik}}{\partial p_j} = \frac{\partial \alpha_{ik}}{\partial u_{ij}} \frac{\partial u_{ij}}{\partial p_j} B_i = \frac{\partial \alpha_{ik}}{\partial u_{ij}} \frac{1-\eta}{p_j} B_i$. Plugging in the expressions into the equation in Lemma \ref{lemma:revenue.diversion.ratio.as.a.weighted.sum.general.case} and canceling out $\frac{1-\eta}{p_j}$ and $B_i$ gives the desired expressions.
\end{proof}

Lemma \ref{lemma:revenue.diversion.ratio.with.discrete.continuous.demand} shows that estimating the revenue diversion ratios requires only the sensitivity of consumers' expenditure shares with respect to the mean utility of product $j$ at the margin. Specifically, if the analyst can observe the partial derivatives $\partial \alpha_{ik} / \partial u_{ij}$ at the pre-merger equilibrium, the revenue diversion ratios can be estimated. This estimation relies on the classical conditions for applying the inversion theorem of \citet{hotz1993conditional}, which ensure that the mapping from utility indices $u_i = (u_{ij})_{j \in \mathcal{C}_i}$ to expenditure shares $\alpha_i = (\alpha_{ij})_{j \in \mathcal{C}_i}$ is invertible.\footnote{I consider \citet{hotz1993conditional}'s classical assumptions since they hold in most applications. However, recent developments allow for weaker assumption; see, e.g., \citet{chiong2016duality}, \citet{galichon2018optimal}, and \citet{sorensen2022mcfadden}.} Given that the analyst knows this inverse mapping and observes pre-merger values of $\alpha_i$, the relevant partial derivatives in Lemma \ref{lemma:revenue.diversion.ratio.with.discrete.continuous.demand} can be evaluated.

\begin{proposition}[Identification of revenue diversion ratios under discrete-continuous demand] \label{proposition:identification.of.revenue.diversion.ratios.under.discrete.continuous.demand} Suppose that Assumption \ref{assumption:discrete.continuous.demand} holds. If the analyst observes $\alpha_i = (\alpha_{ij})_{j \in \mathcal{C}_i}$ for all consumer $i$ and knows the distribution $H_i$ of the idiosyncratic utility shocks $\varepsilon_{ij}$, then the revenue diversion ratios for all product pairs in the market are identified.
\end{proposition}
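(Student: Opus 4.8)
The plan is to build directly on Lemma~\ref{lemma:revenue.diversion.ratio.with.discrete.continuous.demand}, which already expresses each revenue diversion ratio $D_{j\to k}^R$ as an integral over consumers of the known budgets $B_i$ against the partial derivatives $\partial \alpha_{ik}/\partial u_{ij}$ evaluated at the pre-merger configuration. Because the budgets are assumed known and the factor $(1-\eta)/p_j$ cancels in forming both the weights $w_{ij}$ and the expenditure diversion ratios $D_{i,j\to k}^e$, the entire identification problem collapses to recovering, for each consumer $i$, the Jacobian entries $\partial \alpha_{ik}/\partial u_{ij}$ of the expenditure-share map. Notably, this requires neither the price-responsiveness parameter $\eta$ nor the latent prices, so identification must go through purely from shares, budgets, and the known shock distribution.

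First I would observe that, under Assumption~\ref{assumption:discrete.continuous.demand}.\ref{assumption:discrete.continuous.demand.1}, the expenditure share is the additive random-utility choice probability $\alpha_{ij} = \int \mathbb{I}\{u_{ij}+\varepsilon_{ij} \geq u_{il}+\varepsilon_{il}\ \forall\, l \in \mathcal{C}_i\}\, dH_i(\varepsilon)$. Since the analyst knows $H_i$, this defines a \emph{known} map $\alpha_i(\cdot)\colon u_i \mapsto (\alpha_{ij})_{j\in\mathcal{C}_i}$ from the vector of utility indices into the interior of the simplex, and the absolute continuity with full support assumed in Assumption~\ref{assumption:discrete.continuous.demand} guarantees that this map is continuously differentiable (differentiability follows from dominated convergence applied to the integral representation).

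Next I would invoke the inversion theorem of \citet{hotz1993conditional}: under these classical conditions the map $u_i \mapsto \alpha_i$ is a bijection once a location normalization is fixed, and that normalization, $u_{i0}=0$, is imposed in Assumption~\ref{assumption:discrete.continuous.demand}. Hence observing the pre-merger shares $\alpha_i$ pins down the utility vector $u_i = (u_{ij})_{j\in\mathcal{C}_i}$ uniquely. Having recovered $u_i$, I would then differentiate the known choice-probability map and evaluate its Jacobian at the recovered $u_i$, which delivers every $\partial \alpha_{ik}/\partial u_{ij}$ as a known quantity. Substituting these into the formula of Lemma~\ref{lemma:revenue.diversion.ratio.with.discrete.continuous.demand} and integrating against the known budget weights identifies $D_{j\to k}^R$ for every merging pair $(j,k)$, completing the argument.

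The main obstacle is establishing invertibility of the share map together with the requisite smoothness of its Jacobian, but both are exactly what the full-support, absolutely continuous shock assumption buys: invertibility is the content of \citet{hotz1993conditional}, and the differentiability needed to evaluate the Jacobian at the recovered $u_i$ is standard for absolutely continuous shocks. The only other point worth flagging explicitly is the cancellation of $(1-\eta)/p_j$ already used in the proof of Lemma~\ref{lemma:revenue.diversion.ratio.with.discrete.continuous.demand}, since that is precisely what renders $\eta$ and the unobserved prices irrelevant to the identification of the revenue diversion ratios.
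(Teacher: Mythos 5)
Your proposal is correct and takes essentially the same approach as the paper's proof: reduce the problem to identifying the Jacobian entries $\partial \alpha_{ik}/\partial u_{ij}$ via Lemma \ref{lemma:revenue.diversion.ratio.with.discrete.continuous.demand}, use the \citet{hotz1993conditional} inversion of the known share map (known because $H_i$ is known) to recover $u_i$ from the observed pre-merger shares, and then evaluate the Jacobian at the recovered $u_i$. The additional details you flag---differentiability of the choice-probability map, the role of the normalization $u_{i0}=0$, and the cancellation of $(1-\eta)/p_j$ that renders $\eta$ and latent prices irrelevant---are correct refinements of steps the paper leaves implicit.
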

\begin{proof}
    Since consumer budgets $B_i$ are known by assumption, it suffices to identify the values of $\partial \alpha_{ik}/\partial u_{ij}$ for all consumers $i$ and product pairs $(j,k)$. Given knowledge of the distribution $H_i$, which is assumed to be absolutely continuous, the mapping from utility indices $u_i$ to expenditure shares $\alpha_i$ is invertible \citep{hotz1993conditional}. Therefore, $\alpha_i$ can be expressed as a known function of $u_i$. Because the analyst observes the pre-merger expenditure shares $\alpha_{i}$, the corresponding utility indices $u_i$ can be recovered through inversion. As a result, the analyst can compute all relevant partial derivatives $\partial \alpha_{ik} / \partial u_{ij}$.
\end{proof}

While the above proposition assumes knowledge of budgets, expenditure shares, and preference shock distributions, these can also be estimated parametrically by specifying $B_i = B_i^\theta$, $\alpha_i = \alpha_i^\theta$, and $H_i = H_i^\theta$, and estimating $\theta$ using moment conditions. This approach requires rich covariates to control for unobserved prices. For example, \citet{ellickson2020measuring} assume: (i) grocery budgets are a fixed share of income; (ii) utility is a linear function of distance, tract, and store characteristics; and (iii) expenditure shares follow a nested logit model with nests defined by store formats. They estimate parameters using store-level revenue data and control for unobserved prices with chain indicators, citing evidence of within-chain price uniformity \citep{dellavigna2019uniform, hitsch2021prices}.

\subsection{Merger Simulation \label{subsection:merger.simulation.under.discrete.continuous.demand}}

The discrete-continuous demand assumption described in Assumption  \ref{assumption:discrete.continuous.demand} also facilitates merger simulation. The following lemma is the discrete-continuous demand analog of Lemma \ref{lemma:merger.simulation.under.ces}. 

\begin{lemma}[Merger simulation under discrete-continuous demand] \label{lemma:merger.simulation.under.discrete.continuous.demand} Suppose Assumption \ref{assumption:discrete.continuous.demand} holds. For each $j \in \mathcal{J}$, let $\ddot{p}_j = (p_j^\text{post} - p_j^\text{pre})/p_j^\text{pre}$, where $p_j$ is an arbitrary (possibly non-equilibrium) candidate of post-merger price. Then, the corresponding utility indices, consumer expenditure shares, own-price elasticities of demand, revenue diversion ratios, and margins induced by $\ddot{p} = (\ddot{p}_j)_{j \in \mathcal{J}}$ can be expressed as
\begin{align*}
    u_{ij}^\text{post} & = u_{ij}^\text{pre} + (1-\eta) \log (1 + \ddot{p}_j), \\
    \alpha_{ij}^\text{post} & = \mathbb{E}[\{j = \arg \max_{l \in \mathcal{C}_i} (u_{il}^\text{post} + \varepsilon_{il})\}], \\
    \epsilon_{jj}^{\text{post}} & = \frac{1-\eta}{R_j^\text{post}} \int_{i \in \mathcal{S}_j} (\partial \alpha_{ij}^\text{post} / \partial u_{ij}^\text{post}) B_i d\mu - 1, \\
    D_{j \to k}^{R, \text{post}} & = \int_{i \in \mathcal{S}_k} w_{ij}^\text{post} D_{i,j \to k}^{e, \text{post}} d\mu, \\
    m_j^\text{post} & = 1 - (1 - m_j^\text{pre})\left( \frac{1 + \ddot{c}_j}{1 + \ddot{p}_j} \right),
\end{align*}
where $w_{ij}^\text{post} = \frac{\partial \alpha_{ij}^\text{post} / \partial u_{ij}^\text{post} }{\int_{\tilde{i} \in \mathcal{S}_j} \partial \alpha_{\tilde{i}j}^\text{post} / \partial u_{\tilde{i}j}^\text{post}d\mu}$ and $D_{i,j \to k}^{e,\text{post}} = - \frac{\partial \alpha_{ik}^\text{post} / \partial u_{ij}^\text{post}}{\partial \alpha_{ij}^\text{post} / \partial u_{ij}^\text{post}}$. 
\end{lemma}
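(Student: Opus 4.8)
The plan is to verify the five displayed expressions one at a time by direct substitution, mirroring the proof of Lemma \ref{lemma:merger.simulation.under.ces} but replacing the explicit CES/softmax forms with the generic additive-random-utility objects allowed under Assumption \ref{assumption:discrete.continuous.demand}. The engine throughout is the log-linear price specification of Assumption \ref{assumption:discrete.continuous.demand}.\ref{assumption:discrete.continuous.demand.2}, which gives $\partial u_{ij}/\partial p_j = (1-\eta)/p_j$ and, because the non-price component $\delta_{ij}$ is unaffected by price, determines how a candidate price vector maps into utility indices.

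First I would handle the utility indices. Writing $p_j^\text{post} = p_j^\text{pre}(1+\ddot{p}_j)$ and substituting into $u_{ij} = \delta_{ij} + (1-\eta)\log p_j$ gives $u_{ij}^\text{post} = \delta_{ij} + (1-\eta)\log p_j^\text{pre} + (1-\eta)\log(1+\ddot{p}_j) = u_{ij}^\text{pre} + (1-\eta)\log(1+\ddot{p}_j)$. The expenditure-share formula is then immediate from Assumption \ref{assumption:discrete.continuous.demand}.\ref{assumption:discrete.continuous.demand.1}: $\alpha_{ij}^\text{post}$ is the choice probability generated by the post-merger indices $(u_{il}^\text{post})_{l \in \mathcal{C}_i}$ together with the shocks $\varepsilon_{il}$. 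The margin identity is purely accounting and transfers verbatim from the CES proof: with $c_j^\text{post} = c_j^\text{pre}(1+\ddot{c}_j)$, $p_j^\text{post} = p_j^\text{pre}(1+\ddot{p}_j)$, and $c_j^\text{pre}/p_j^\text{pre} = 1 - m_j^\text{pre}$, one obtains $m_j^\text{post} = 1 - (1-m_j^\text{pre})\frac{1+\ddot{c}_j}{1+\ddot{p}_j}$.

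Next I would derive the elasticity and diversion expressions. Starting from $R_j = \int_{i \in \mathcal{S}_j} \alpha_{ij} B_i\, d\mu$ and differentiating under the integral sign (the Leibniz caveat already carried in Lemma \ref{lemma:revenue.diversion.ratio.with.discrete.continuous.demand}), the chain rule through $u_{ij}$ gives $\partial R_j/\partial p_j = \int_{i \in \mathcal{S}_j} (\partial\alpha_{ij}/\partial u_{ij})\frac{1-\eta}{p_j}B_i\, d\mu$, hence $\epsilon_{jj}^R = \frac{1-\eta}{R_j}\int_{i \in \mathcal{S}_j}(\partial\alpha_{ij}/\partial u_{ij})B_i\, d\mu$; invoking $\epsilon_{jj} = \epsilon_{jj}^R - 1$ from Lemma \ref{lemma:relationship}.\ref{lemma:relationship.item.2} and evaluating at the post-merger shares yields the stated $\epsilon_{jj}^\text{post}$. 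The revenue-diversion expression is Lemma \ref{lemma:revenue.diversion.ratio.with.discrete.continuous.demand} read off at the post-merger indices: substituting $\partial e_{ik}/\partial p_j = (\partial\alpha_{ik}/\partial u_{ij})\frac{1-\eta}{p_j}B_i$ and cancelling the common $\frac{1-\eta}{p_j}$ factor reproduces $w_{ij}^\text{post}$ and $D_{i,j\to k}^{e,\text{post}}$ in terms of the share sensitivities alone.

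There is no deep obstacle here; the content is verification rather than discovery. The one point demanding care is the elasticity step, where one differentiates through $u_{ij}$ rather than an explicit closed-form share and must correctly pass from the revenue elasticity to the demand elasticity via $\epsilon_{jj} = \epsilon_{jj}^R - 1$. The implicit regularity requirements---that $\partial\alpha_{ik}/\partial u_{ij}$ exists and that differentiation under the integral is valid---are furnished by the absolute continuity of the shock distribution assumed in Assumption \ref{assumption:discrete.continuous.demand} together with the Leibniz proviso inherited from Lemma \ref{lemma:revenue.diversion.ratio.with.discrete.continuous.demand}.
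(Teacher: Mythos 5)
Your proof is correct in structure and substance, and it follows exactly the route the paper intends: the paper in fact states this lemma \emph{without} an explicit proof, treating it as the direct analog of Lemma \ref{lemma:merger.simulation.under.ces} (whose proof it does give) combined with Lemmas \ref{lemma:revenue.diversion.ratio.as.a.weighted.sum.general.case} and \ref{lemma:revenue.diversion.ratio.with.discrete.continuous.demand}. Your handling of the utility-index update, the expenditure-share formula, the margin identity, and the elasticity step via $\epsilon_{jj} = \epsilon_{jj}^R - 1$ reproduces that argument faithfully.

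One point needs more care, and it is precisely where your write-up is too quick. When you read Lemma \ref{lemma:revenue.diversion.ratio.with.discrete.continuous.demand} off at the post-merger indices, the weight you actually obtain is
\[
w_{ij}^\text{post} = \frac{\bigl(\partial \alpha_{ij}^\text{post}/\partial u_{ij}^\text{post}\bigr)\, B_i}{\int_{\tilde{i} \in \mathcal{S}_j} \bigl(\partial \alpha_{\tilde{i}j}^\text{post}/\partial u_{\tilde{i}j}^\text{post}\bigr)\, B_{\tilde{i}}\, d\mu},
\]
because only the factor $\tfrac{1-\eta}{p_j}$, which is common across consumers, cancels between numerator and denominator; the budgets $B_i$ vary with $i$ and do not cancel there (they cancel only in the individual ratio $D_{i,j\to k}^{e,\text{post}}$, whose numerator and denominator refer to the same consumer). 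So your derivation does not ``reproduce $w_{ij}^\text{post}$ in terms of the share sensitivities alone''---it reproduces the budget-weighted version, which is what Lemma \ref{lemma:revenue.diversion.ratio.with.discrete.continuous.demand}, the CES analog in Lemma \ref{lemma:merger.simulation.under.ces}, and even the elasticity line of the present lemma (which does carry $B_i$) all contain. The weight as printed in the lemma statement, with $B_i$ omitted, is evidently a typo and agrees with the correct expression only under homogeneous budgets. Your proof is the right one; you should simply state that it establishes the lemma with the budget-weighted $w_{ij}^\text{post}$, rather than asserting literal agreement with the printed formula.
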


To express the objects in Lemma \ref{lemma:merger.simulation.under.discrete.continuous.demand} as known functions of $\ddot{p}$, it suffices to establish the conditions under which the Hotz-Miller inversion theorem applies. These data requirements are the same as those in the CES case (Proposition \ref{proposition:merger.simulation.under.ces}), with the additional condition that the analyst must know the distribution $H_i$ to implement the inversion.

\begin{proposition}[Merger simulation under discrete-continuous demand] \label{proposition:merger.simulation.under.discrete.continuous.demand}

Suppose Assumption \ref{assumption:discrete.continuous.demand} holds. If the analyst observes consumer expenditure shares $(\alpha_{ij})_{i \in \mathcal{I}, j \in \mathcal{J}}$, product margins $(m_j)_{j \in \mathcal{J}}$, and percentage reductions in marginal costs $(\ddot{c}_j)_{j \in \mathcal{J}}$, and knows the distribution $H_i$ of the idiosyncratic utility shocks $\varepsilon_{ij}$, then the percentage price changes from the pre-merger to the post-merger equilibrium are identified for all products in the market.
\end{proposition}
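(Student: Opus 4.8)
The plan is to mirror the argument used for the CES case (Proposition \ref{proposition:merger.simulation.under.ces}), replacing the explicit softmax inversion with the general Hotz-Miller inversion permitted under Assumption \ref{assumption:discrete.continuous.demand}. By Lemma \ref{lemma:merger.simulation.under.discrete.continuous.demand}, every object entering the post-merger first-order conditions---utility indices, expenditure shares, own-price elasticities, revenue diversion ratios, and margins---can be written as a function of an arbitrary candidate vector $\ddot{p}$ together with the pre-merger utility indices $(u_{ij}^\text{pre})_{i \in \mathcal{I}, j \in \mathcal{J}}$ and the price-responsiveness parameter $\eta$. Hence it suffices to show that $(u_{ij}^\text{pre})$ and $\eta$ are identified from the data; once they are, the post-merger first-order conditions collapse to a system $f(\ddot{p}) = 0$ whose solution delivers the equilibrium percentage price changes.

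First I would recover the pre-merger utility indices. Under Assumption \ref{assumption:discrete.continuous.demand}, the expenditure shares $\alpha_i = (\alpha_{ij})_{j \in \mathcal{C}_i}$ are generated by an additive random utility model with shock distribution $H_i$ that is absolutely continuous with full support. These are precisely the classical conditions of \citet{hotz1993conditional}, under which the map $u_i \mapsto \alpha_i$ is invertible. Since the analyst observes $\alpha_i$ and knows $H_i$, inverting this map identifies $u_i^\text{pre}$ for every consumer---the same step already exploited in Proposition \ref{proposition:identification.of.revenue.diversion.ratios.under.discrete.continuous.demand}. Knowledge of the inverse map also yields the partial derivatives $\partial \alpha_{ik}/\partial u_{ij}$ at the pre-merger point, which I will need below.

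Next I would identify $\eta$. By Proposition \ref{proposition:identification.of.revenue.diversion.ratios.under.discrete.continuous.demand}, the pre-merger revenue diversion ratios $D_{j \to k}^R$ are identified; combined with the observed margins, Lemma \ref{lemma:own.price.elasticity} then identifies the pre-merger own-price elasticities $\epsilon_{jj}^\text{pre}$. Evaluating the elasticity expression in Lemma \ref{lemma:merger.simulation.under.discrete.continuous.demand} at $\ddot{p} = 0$ gives $\epsilon_{jj}^\text{pre} = \frac{1-\eta}{R_j^\text{pre}} \int_{i \in \mathcal{S}_j} (\partial \alpha_{ij}^\text{pre}/\partial u_{ij}^\text{pre}) B_i\, d\mu - 1$, in which the left-hand side is now known, the revenue $R_j^\text{pre} = \int_{i \in \mathcal{S}_j} \alpha_{ij} B_i\, d\mu$ and the derivatives $\partial \alpha_{ij}^\text{pre}/\partial u_{ij}^\text{pre}$ are known from the previous step, and $B_i$ is known by maintained assumption. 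Solving this relation for $\eta$ identifies it, and indeed over-identifies it, since each product furnishes one such equation.

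With $(u_{ij}^\text{pre})$ and $\eta$ in hand, Lemma \ref{lemma:merger.simulation.under.discrete.continuous.demand} renders each post-merger component a known function of $\ddot{p}$. Substituting these into the post-merger first-order conditions \eqref{equation:first.order.conditions.rewritten}---taken with the merged firm's enlarged product set and the non-merging firms' unchanged product sets---yields the system $f(\ddot{p}) = 0$, so the equilibrium vector $\ddot{p}^*$ is identified. The main obstacle is the invertibility step: the entire argument hinges on the Hotz-Miller inversion being well-defined, which is exactly why Assumption \ref{assumption:discrete.continuous.demand} imposes absolute continuity and full support on $H_i$, and why knowledge of $H_i$ is included among the hypotheses. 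As in the CES case, I treat this as an identification result conditional on the existence of a post-merger equilibrium characterized by the first-order conditions, and do not separately establish existence or uniqueness of the solution to $f(\ddot{p})=0$.
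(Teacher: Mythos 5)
Your proposal is correct and follows essentially the same route as the paper's own proof: Hotz--Miller inversion of the observed expenditure shares to recover $(u_{ij}^\text{pre})$, identification of own-price elasticities from margins and revenue diversion ratios (via Lemma \ref{lemma:own.price.elasticity} and Proposition \ref{proposition:identification.of.revenue.diversion.ratios.under.discrete.continuous.demand}) to pin down $\eta$, and then Lemma \ref{lemma:merger.simulation.under.discrete.continuous.demand} to write the post-merger first-order conditions as a system $f(\ddot{p})=0$. Your added remarks---the over-identification of $\eta$ and the caveat that existence/uniqueness of the solution is not established---are consistent with, and slightly more explicit than, the paper's treatment.
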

\begin{proof}
The result follows from a sequence of observations. First, given the stated conditions, the pre-merger utility indices $(u_{ij}^\text{pre})_{i \in \mathcal{I}, j \in \mathcal{J}}$ can be identified by inverting the mapping from $u_i$ to $\alpha_i$ using the Hotz-Miller inversion theorem. Second, Proposition \ref{proposition:identification.of.revenue.diversion.ratios.under.discrete.continuous.demand} establishes that the revenue diversion ratios between all product pairs are identified. Third, $\eta$ can be identified after identifying the own-price elasticities of demand and using $\epsilon_{jj}^\text{pre} = \frac{1-\eta}{R_j^\text{pre}} \int_{i \in \mathcal{S}_j} (\partial \alpha_{ij}^\text{pre} / \partial u_{ij}^\text{pre}) B_i d\mu - 1$. Fourth, the post-merger utility indices $u_{ij}^\text{post}$ are known functions of $\ddot{p}_j$, since $u_{ij}^\text{pre}$ and the parameter $\eta$ are known. Fifth, given $u_i^\text{post}$, the post-merger expenditure shares $\alpha_{ij}^\text{post}$ become known functions of $\ddot{p}$. Sixth, the post-merger own-price elasticity $\epsilon_{jj}^\text{post}$ depends on $R_j^\text{post}$ and $\partial \alpha_{ij}^\text{post} / \partial u_{ij}^\text{post}$, both of which are known functions of $u_{ij}^\text{post}$, and thus of $\ddot{p}$. Seventh, the post-merger revenue diversion ratio $D_{j \to k}^{R, \text{post}}$ depends on $w_{ij}^\text{post}$ and $D_{i,j \to k}^{e,\text{post}}$, both of which are functions of $u_{ij}^\text{post}$, and therefore of $\ddot{p}$. Finally, the post-merger margin $m_j^\text{post}$ is a closed-form function of the pre-merger margin $m_j^\text{pre}$, the percentage cost reduction $\ddot{c}_j$, and the percentage price change $\ddot{p}_j$. Since $m_j^\text{pre}$ and $\ddot{c}_j$ are observed, $m_j^\text{post}$ is also a known function of $\ddot{p}_j$. Together, these results imply that the post-merger first-order conditions can be expressed as a system $f(\ddot{p}) = 0$, which can be solved for the vector of percentage price changes $\ddot{p}$.
\end{proof}

% \clearpage
% \input{S.Z.Misc/S.Z.Misc.0}

% \part*{Supplementary Material (Work in Progress)}
% \input{S.Y.Implementation.of.Empirical.Applications/S.Y.0.Implementation.of.Empirical.Application}

\clearpage

\setcounter{page}{1}  % Reset page numbering
\renewcommand{\thepage}{A\arabic{page}}  % Page numbers start with "A"
\setcounter{section}{0} % Reset section numbering

\begin{center}
    \LARGE Online Appendix for \\
    \LARGE{Merger Analysis with Unobserved Prices} \\
    \vspace{1cm}
    \normalsize Paul S. Koh \\
    \normalsize August 30, 2025
\end{center}

\section{Review of Unilateral Effects Analysis \label{section:B.review.of.unilateral.effects.analysis}}
This section reviews the basics of first-order unilateral effects analysis, specifically firms' profit maximization conditions, gross upward pricing pressure indices, and compensating marginal cost reductions.

\subsection{Firm's Problem \label{section:B.1.firms.problem}}
The multiproduct firms engage in a Bertrand-Nash pricing game. Each firm $F \in \mathcal{F}$ maximizes its total profit $\sum_{j \in \mathcal{J}_F} \pi_j$ with respect to a vector of prices $(p_j)_{j \in \mathcal{J}_F}$. The first-order condition with respect to $p_j$ is $q_j + (p_j - c_j) \frac{\partial q_j}{\partial p_j} + \sum_{l \in \mathcal{J}_F \backslash j} (p_l - c_l) \frac{\partial q_l}{\partial p_j} = 0$. Normalizing the FOC to be quasilinear in the marginal cost gives
\[
- p_j \epsilon_{jj}^{-1} - (p_j - c_j) + \sum_{l \in \mathcal{J}_F \backslash j } (p_l - c_l) D_{j \to l} = 0,
\]
where $\epsilon_{jj} \equiv \frac{\partial q_j}{\partial p_j} \frac{p_j}{q_j}$ is the own-price elasticity of demand, and $D_{j \to l} \equiv - \frac{\partial q_l / \partial p_j}{\partial q_j / \partial p_j}$ is the quantity diversion ratio from product $j$ to product $l$.\footnote{Rearranging the first-order conditions gives the optimal markup equation $p_j (1 + \epsilon_{jj}^{-1}) = c_j + \sum_{l \in \mathcal{J}_F \backslash j} (p_l - c_l)D_{j \to l}$, which in turn implies that $(1 + \epsilon_{jj}^{-1}) > 0$, or, equivalently, $\varepsilon_{jj} < -1$, i.e., in a Bertrand-Nash equilibrium, firms always price at the elastic region of demand.} Further dividing the FOC by $p_j$ normalizes the first-order conditions to be quasilinear in margins and gives \eqref{equation:optimal.pricing.equation.2}: 
\[
-\epsilon_{jj}^{-1} - m_j + \sum_{l \in \mathcal{J}_F \backslash j} m_l D_{j \to l} \frac{p_l}{p_j} = 0.
\]

\subsection{Upward Pricing Pressure \label{section:B.2.upward.pricing.pressure}}
Under unilateral effects theory, a merger raises the merging firms' pricing incentives by internalizing the diversion of consumers to each other's products. \citet{farrell2010antitrust} propose measuring this incentive using \emph{upward pricing pressure (UPP)}, defined as the difference between the pre- and post-merger first-order conditions, normalized to be quasilinear in marginal cost and evaluated at pre-merger prices.

Formally, consider a merger between two firms $A$ and $B$. The upward pricing pressure associated with product $j$ of firm $A$ is defined as
\begin{equation*}%\label{equation:upward.pricing.pressure}
    \mathit{UPP}_j \equiv \Delta c_j + \sum_{k \in \mathcal{J}_B} (p_k - c_k) D_{j \to k},
\end{equation*}
where $\Delta c_j = c_j^\text{post} - c_j^\text{pre}<0$ represents the reduction in marginal cost due to merger-specific efficiencies; those of firm $B$ are defined symmetrically.\footnote{I occasionally omit the superscript ``pre'' when it is clear the object is that of the pre-merger equilibrium.} 

The gross upward pricing pressure index is a unit-free measure of upward pricing pressure and is obtained by normalizing upward pricing pressure by price, i.e., $\mathit{GUPPI}_j \equiv \mathit{UPP}_j / p_j$. Using the expression for the UPP above, GUPPI can be rewritten as \eqref{equation:gross.upward.pricing.pressure}:
\[
\mathit{GUPPI}_j = \ddot{c}_j (1-m_j) + \sum_{k \in \mathcal{J}_B} m_k D_{j \to k} \frac{p_k}{p_j},
\]
where $\ddot{c}_j \equiv  \Delta c_j / c_j^\text{pre} \in (-1,0)$ represents the percentage decrease in marginal cost from the pre-merger equilibrium.

\subsection{Compensating Marginal Cost Reduction \label{section:B.3.cmcr}}
\emph{Compensating marginal cost reductions (CMCR)} are defined as the percentage decrease in the merging parties' costs that would leave the pre-merger prices unchanged after the merger. \citet{werden1996robust} shows that CMCRs are identified when prices, margins, and diversion ratios are observed with single-product firms.

In contrast to upward pricing pressures that assume other products' marginal costs are fixed, compensating marginal cost reductions capture the effects of simultaneous changes in marginal costs. Upward pricing pressure can be more conservative because the value of sales diverted to the merging counterparty is larger if profit margins increase due to a reduction in marginal costs. \citet{farrell2010antitrust} endorses upward pricing pressure on the grounds of simplicity and transparency while acknowledging that compensating marginal cost reductions can be more accurate.

Let $\ddot{c}_j \equiv \frac{c_j^\text{post} - c_j^\text{pre}}{c_j^\text{pre}}$ be the compensating marginal cost reductions, where $c_j^\text{post}$'s are pinned by finding a vector of post-merger marginal costs that offset any upward pricing incentives. Rewriting the CMCR in terms of pre- and post-merger margins gives
\begin{equation}\label{equation:cmcr.as.function.of.margins}
\ddot{c}_j = \frac{m_j^0 - m_j^1}{1 - m_j^0}.    
\end{equation}
To see that \eqref{equation:cmcr.as.function.of.margins} holds, note that, since $p_j^1 = p_j^0$, $m_j^0 - m_j^1 = \frac{1}{p_j^0} ((p_j^0 - c_j^0) - (p_j^0 - c_j^1)) = \frac{\Delta c_j}{p_j^0}$. Then
\[
\frac{m_j^0 - m_j^1}{1 - m_j^0} = \frac{\frac{\Delta c_j}{p_j}}{\frac{p_j^0}{p_j^0} - \frac{p_j^0 - c_j^0}{p_j^0}} = \frac{\Delta c_j / p_j^0}{c_j^0 / p_j^0} = \frac{\Delta c_j}{c_j^0} = \ddot{c}_j.
\]
The post-merger margins are defined by the post-merger first-order conditions
\begin{equation*}\label{equation:post.merger.foc.for.cmcr.original}
    - \epsilon_{jj}^{-1} - m_j^1 + \sum_{l \in \mathcal{J}_A \backslash j} m_l^1 D_{j \to l}\frac{p_l}{p_j} + \sum_{k \in \mathcal{J}_B } m_k^1 D_{j \to k} \frac{p_k}{p_j} = 0.
\end{equation*}
Finding post-merger margins amounts to solving a linear system of equations. After finding the vector of post-merger margins $(m_j^1)_{j \in \mathcal{J}_A \cup \mathcal{J}_B}$ that solve the system of equations, the analyst can plug them into \eqref{equation:cmcr.as.function.of.margins} to obtain the CMCR.

% \begin{remark}
% Setting $\mathit{GUPPI}_j = 0$ and solving for $\psi_j$ gives the critical value of cost-reduction for product $j$:
% \begin{equation}\label{equation:mc.reduction.to.offset.upp}
%     \psi_j^* = \left( 1 - m_j \right)^{-1} \left( \sum_{k \in \mathcal{J}_g} m_k D_{jk} \frac{p_k}{p_j} \right).
% \end{equation}
% The analyst can report \eqref{equation:mc.reduction.to.offset.upp} to gauge the reduction in marginal costs required to offset the upward pricing pressure. Note, however, that \eqref{equation:mc.reduction.to.offset.upp} differs from CMCR. Upward pricing pressure assumes other firms' marginal costs remain fixed. In contrast, CMCR assumes the costs decrease simultaneously and solves simultaneous equations.\footnote{\citet{farrell2010antitrust} endorses upward pricing pressure on grounds of simplicity and transparency while acknowledging that compensating marginal cost reductions can be more accurate.}
% \end{remark}

\section{Identification of Compensating Variation Under CES Preferences \label{sec:identification.of.compensating.variation.under.ces.preferences}}

In this section, I show that the compensating variation associated with a vector of price increases is point-identified under the CES utility assumption. While the previous sections presented an approximation to consumer surplus variation, the CES structure allows for an exact expression for compensating variation in response to a vector of simultaneous price changes.

Let $V_i(p,B_i)$ be consumer $i$'s indirect utility at price $p$ and budget $B_i$. The compensating variation for consumer $i$ associated with a price increase from $p^0$ to $p^1$ is defined as $\mathit{CV}_i \in \mathbb{R}$ that solves $V_i(p^0,B_i) = V_i(p^1, B_i-\mathit{CV}_i)$. The total compensating variation is $\mathit{CV} = \int_{i \in \mathcal{I}} \mathit{CV}_i d\mu$. Under the CES utility assumption, the compensating variation admits a closed-form expression.

\begin{lemma}[Compensating variation under CES utility] \label{lemma:compensating.variation}
Suppose Assumption \ref{assumption:ces.utility} holds. Let $P_i(u_i) \equiv (\sum_{k \in \mathcal{C}_i} \exp (u_{ik}))^{\frac{1}{\eta-1}}$. For each consumer $i$, the compensating variation associated with a price change from $p^0$ to $p^1$ is
\begin{equation}
    \mathit{CV}_i = B_i \left(1 - \frac{P_i(u_i^{0})}{P_i(u_i^{1})} \right)
\end{equation}
where $u_i^0 = u_i(p^0)$ and $u_i^1 = u_i(p^1)$. Furthermore, $u_{ij}^1 = u_{ij}^0 + (1-\eta) \log(1 + \ddot{p}_j)$.
\end{lemma}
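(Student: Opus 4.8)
The plan is to derive a closed-form indirect utility function under the CES assumption, show that it factors as budget times the price aggregator $P_i$, and then solve the defining equation for compensating variation directly. The key observation is that CES preferences are homothetic, so the indirect utility is linear in the budget: $V_i(p, B_i) = A \cdot B_i / e_i(p)$, where $e_i(p)$ is the unit expenditure function (the minimal cost of attaining one unit of utility).

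First I would solve the expenditure-minimization dual (or equivalently the utility-maximization problem) to obtain the CES unit expenditure function $e_i(p) = \left( \sum_{k \in \mathcal{C}_i} \beta_{ik} p_k^{1-\eta} \right)^{1/(1-\eta)}$. Substituting into $V_i(p,B_i) = A \cdot B_i / e_i(p)$ gives $V_i(p,B_i) = A \cdot B_i \left( \sum_{k \in \mathcal{C}_i} \beta_{ik} p_k^{1-\eta} \right)^{1/(\eta-1)}$. Next I would connect this to the utility indices: since $u_{ij} = \log \beta_{ij} + (1-\eta) \log p_j$ from \eqref{equation:ces.expenditure.share}, we have $\exp(u_{ik}) = \beta_{ik} p_k^{1-\eta}$, so $\sum_{k \in \mathcal{C}_i} \exp(u_{ik}) = \sum_{k \in \mathcal{C}_i} \beta_{ik} p_k^{1-\eta}$ and therefore $V_i(p,B_i) = A \cdot B_i \cdot P_i(u_i)$ with $P_i$ exactly as defined in the statement.

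With this factorization in hand, the compensating variation follows by pure algebra. Plugging $V_i(p,B_i) = A \cdot B_i \cdot P_i(u_i)$ into the defining identity $V_i(p^0, B_i) = V_i(p^1, B_i - \mathit{CV}_i)$ yields $B_i P_i(u_i^0) = (B_i - \mathit{CV}_i) P_i(u_i^1)$, where I cancel the common factor $A$. Solving for $\mathit{CV}_i$ gives $\mathit{CV}_i = B_i \left( 1 - P_i(u_i^0)/P_i(u_i^1) \right)$, as claimed. The final display, $u_{ij}^1 = u_{ij}^0 + (1-\eta)\log(1+\ddot{p}_j)$, is immediate from the definition of $u_{ij}$ and $p_j^1 = p_j^0(1+\ddot{p}_j)$, and in fact was already established in Lemma \ref{lemma:merger.simulation.under.ces}.

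The only genuinely substantive step is deriving the CES unit expenditure function with the correct exponents; everything afterward is bookkeeping. The main obstacle is thus keeping the CES algebra straight---in particular, tracking the signs of the exponents $1-\eta$ versus $\eta-1$ (recall $\eta>1$, so $P_i$ is well-defined and the aggregator behaves monotonically in prices) and confirming that homotheticity really does make $V_i$ linear in $B_i$, which is what permits the clean cancellation in the final step.
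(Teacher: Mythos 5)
Your proposal is correct and follows essentially the same route as the paper: both rest on the CES indirect utility $V_i(p,B_i) = A\,B_i\,\bigl(\sum_{k \in \mathcal{C}_i}\beta_{ik}p_k^{1-\eta}\bigr)^{\frac{1}{\eta-1}} = A\,B_i\,P_i(u_i)$ and then solve $V_i(p^0,B_i)=V_i(p^1,B_i-\mathit{CV}_i)$ by cancelling $A$ and rearranging. The only difference is that you derive the indirect utility from homotheticity and the unit expenditure function, whereas the paper simply asserts it as a standard CES fact; your extra step of verifying $\exp(u_{ik})=\beta_{ik}p_k^{1-\eta}$ is the same identity the paper invokes in its discussion following the lemma.
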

\begin{proof}
Consumer $i$'s indirect utility function is $V_i(p,B_i) = AP_i^{\beta,\eta}(p)B_i$, where $P_i^{\beta,\eta}(p) \equiv \left( \sum_{k \in \mathcal{C}_i } \beta_{ik} p_k^{1-\eta} \right)^{\frac{1}{\eta-1}}$. Solving $V_i(p^0, B_i) = V_i(p^1, B_i - \mathit{CV}_i)$ for $\mathit{CV}_i$ gives the desired expression.
\end{proof}

The above lemma shows that Assumptions \ref{assumption:data}--\ref{assumption:budget.and.consumer.expenditure.data} are sufficient to identify the compensating variation. To see this, note that since $u_{ij} = \log \beta_{ij} + (1-\eta) \log p_j$, $\exp(u_{ij}) = \beta_{ij} p_j^{1-\eta}$, so the compensating variation for consumer $i$ reduces to:
\[
\mathit{CV}_i = B_i \left( 1 - \left( \frac{\sum_{k \in \mathcal{C}_i} \exp(u_{ik}^0)}{\sum_{k \in \mathcal{C}_i } \exp(u_{ik}^1)} \right)^{\frac{1}{1-\eta}} \right),
\]
where $u_{ij}^1 = u_{ij}^0 + (1-\eta) \log (1 + \ddot{p}_j)$. Thus, the compensating variation with respect to a given $\ddot{p}$ is identified if the pre-merger utility indices $u_{ij}^0$'s of the products associated with the price increases and the parameter $\eta$ are known. The pre-merger utility indices can be obtained using the inversion $\log \alpha_{ik} - \log \alpha_{i0} = u_{ik}$, and $\eta$ can be identified from estimated own-price elasticities and consumer expenditure shares (see equation \eqref{equation:ces.own.price.elasticity.of.revenue}).

\section{Pass-Through Matrix Under CES Preference \label{section:pass.through.matrix.under.ces.preference}}

In this section, I derive the merger pass-through matrix assuming CES preference to illustrate that the analyst can estimate pass-through rates using the merging firms' data used for GUPPI calculation. For simplicity, I assume a single representative consumer and single-product firms. I apply the results to calculate the merger pass-through matrix for the Staples/Office Depot example in Section \ref{section:6.empirical.application}.

\subsubsection*{Definition of Merger Pass-through Matrix}
Consider a merger between firms $j$ and $k$. Let
\begin{align*}
    h_j(\tilde{p}) & \equiv -\varepsilon_{jj}^{-1} - m_j + (1 + \varepsilon_{jj}^{-1}) m_k D_{j \to k}^R, \\
    h_k(\tilde{p}) & \equiv -\varepsilon_{kk}^{-1} - m_k + (1 + \varepsilon_{kk}^{-1}) m_j D_{k\to j}^R, 
\end{align*}
where $\tilde{p}_j \equiv \log p_j$ for each $j \in \mathcal{J}$ so that $h(\tilde{p}^{\text{post}}) = 0$ describe the merging firms' post-merger first-order conditions. The above first-order conditions are normalized to be quasilinear in margins and expressed as functions of log prices so that we can calculate how GUPPIs translate to merger price effects in percentage change terms.

A merger pass-through matrix describes the marginal effects of (normalized) tax rates on merged firms' prices at the pre-merger equilibrium. To formalize the definition, first evaluate $h(\tilde{p})$ at the pre-merger prices to get $h(\tilde{p}^\text{pre}) = c$ for some vector of constants $c$. Introduce tax rates $\tilde{t}$ and assume that $\tilde{p} = \tilde{p}(\tilde{t})$ such that $h(\tilde{p}) + \tilde{t} = c$ around $\tilde{p}^\text{pre}$. By the implicit function theorem,
\[
\frac{\partial \tilde{p}}{\partial \tilde{t}} \cdot \frac{\partial h(\tilde{p})}{\partial \tilde{p}}  + I = 0.
\]
Rearranging the above gives the merger pass-through matrix defined as 
\[
M \equiv \frac{\partial{\tilde{p}}}{\partial \tilde{t}} \bigg \vert_{\tilde{t} = 0} = - \left( \frac{\partial h(\tilde{p})}{\partial \tilde{p}} \right)^{-1} \bigg \vert_{\tilde{p} = \tilde{p}^\text{pre}}
\]
Then, the first-order approximation of merger price effects is $\ddot{p} \approx M \cdot \mathit{GUPPI}$.

\subsubsection*{Derivation}
Recall that under CES preference, $\varepsilon_{jj}^R = (1-\alpha_j) (1-\eta)$ and $D_{j \to k}^R = \frac{\alpha_k}{1 - \alpha_j}$. Then since $\varepsilon_{jj} = \varepsilon_{jj}^R - 1$, we have $\varepsilon_{jj} = (1-\alpha_j)(1-\eta) - 1$. In addition, $\partial \alpha_j / \partial u_j = \alpha_j (1-\alpha_j)$, and $\partial \alpha_j / \partial u_k = -\alpha_k \alpha_j$ if $j \neq k$. Moreover, $\partial u_j / \partial \tilde{p}_j = (1-\eta)$ for all product $j \in \mathcal{J}$.

Let us derive the $2 \times 2$ merger pass-through matrix $M$ given by
\[
M = - 
\begin{bmatrix}
    \frac{\partial h_j}{\partial \tilde{p}_j} & \frac{\partial h_j}{\partial \tilde{p}_k} \\
    \frac{\partial h_k}{\partial \tilde{p}_j} & 
    \frac{\partial h_k}{\partial \tilde{p}_k}
\end{bmatrix}^{-1}
.
\]
The derivatives of $h_j$ with respect to $\tilde{p}_j$ and $\tilde{p}_k$ are given by
\begin{align*}
    \frac{\partial h_j}{\partial \tilde{p}_j} &= - \frac{\partial \varepsilon_{jj}^{-1}}{\partial \tilde{p}_j} - \frac{\partial m_j}{\partial \tilde{p}_j} + \frac{\partial (1 + \varepsilon_{jj}^{-1})}{\partial \tilde{p}_j} m_k D_{j \to k}^R + (1 + \varepsilon_{jj}^{-1}) m_k \frac{\partial D_{j \to k}^R}{\partial \tilde{p}_j} \\ 
    \frac{\partial h_j}{\partial \tilde{p}_k} &= - \frac{\partial \varepsilon_{jj}^{-1}}{\partial \tilde{p}_k} + \frac{\partial (1 + \varepsilon_{jj}^{-1})}{\partial \tilde{p}_k} m_k D_{j \to k}^R + (1 + \varepsilon_{jj}^{-1}) \frac{\partial m_k}{\partial \tilde{p}_k} D_{j \to k}^R + (1 + \varepsilon_{jj}^{-1}) m_k \frac{\partial D_{j \to k}^R}{\partial \tilde{p}_k}.
\end{align*}
The partial derivatives $\frac{\partial h_k}{\partial \tilde{p}_j}$ and $\frac{\partial h_k}{\partial \tilde{p}_k}$ are obtained symmetrically.

First, since $\frac{\partial \varepsilon_{jj}}{\partial \tilde{p}_j} = ( - \frac{\partial \alpha_j}{\partial u_j} \frac{\partial u_j}{\partial \tilde{p}_j}) (1-\eta)$ and $\frac{\partial \varepsilon_{jj}}{\partial \tilde{p}_k} = ( - \frac{\partial \alpha_j}{\partial u_k} \frac{\partial u_k}{\partial \tilde{p}_k})(1-\eta)$,
\begin{align*}
    \frac{\partial \varepsilon_{jj}}{\partial \tilde{p}_j} &= (-1) \alpha_j (1-\alpha_j) (1-\eta) (1-\eta), \\
    \frac{\partial \varepsilon_{jj}}{\partial \tilde{p}_k} &=  \alpha_j \alpha_k (1-\eta)(1-\eta) .  
\end{align*}
Second, since $\frac{\partial m_j}{\partial \tilde{p}_j} = \frac{\partial (1 - c_j/\exp(\tilde{p}_j))}{\partial \tilde{p}_j} = c_j / \exp(\tilde{p}_j) = c_j / p_j$,
\begin{equation*}
    \frac{\partial m_j}{\partial \tilde{p}_j} = 1 - m_j.
\end{equation*}
Third, from $D_{j \to k}^R = \frac{\alpha_k}{1-\alpha_j}$, we have
\begin{align*}
    \frac{\partial D_{j \to k}^R}{\partial \tilde{p}_j} & = \frac{\partial \alpha_k}{\partial u_j} \frac{\partial u_j}{\partial \tilde{p}_j} \frac{1}{1-\alpha_j} + \alpha_k (-1) (1-\alpha_j)^{-2} (-1) \frac{\partial \alpha_j}{\partial u_j} \frac{\partial u_j}{\partial \tilde{p}_j} \\
    & = -\alpha_k \alpha_j (1-\eta) \frac{1}{1-\alpha_j} + \frac{\alpha_k}{(1-\alpha_j)^2} \alpha_j (1-\alpha_j) (1-\eta) \\
    & = -\alpha_j (1-\eta) D_{j \to k}^R + \alpha_j (1-\eta) D_{j \to k}^R \\ 
    & = 0.
\end{align*}
Next, 
\begin{align*}
    \frac{\partial D_{j \to k}^R}{\partial \tilde{p}_k} & = \frac{\partial \alpha_k}{\partial u_k} \frac{\partial u_k}{\partial \tilde{p}_k} \frac{1}{1-\alpha_j} + \alpha_k (-1) (1-\alpha_j)^{-2} (-1) \frac{\partial \alpha_j}{\partial u_k} \frac{\partial u_k}{\partial \tilde{p}_k} \\ 
    & = \alpha_k (1-\alpha_k) (1-\eta) \frac{1}{1-\alpha_j} + \frac{\alpha_k}{(1-\alpha_j)^2} (-\alpha_j \alpha_k) (1-\eta) \\ 
    & = (1-\alpha_k) (1-\eta) D_{j \to k}^R - \alpha_j (1-\eta) (D_{j \to k}^R)^2 \\
    & = \alpha_j  (1-\eta) D_{k \to j}^{R,-1} D_{j \to k}^R - \alpha_j (1-\eta) (D_{j \to k}^R)^2 \\
    & = \alpha_j (1-\eta) D_{j \to k}^R (D_{k \to j}^{R,-1} - D_{j \to k}^R)
\end{align*}
Thus,
\begin{align*}
    \frac{\partial \varepsilon_{jj}^{-1}}{\partial \tilde{p}_j} & = \varepsilon_{jj}^{-2} \alpha_j (1-\alpha_j) (1-\eta) (1-\eta), \\ 
    \frac{\partial \varepsilon_{jj}^{-1}}{\partial \tilde{p}_k} & = (-1) \varepsilon_{jj}^{-2} \alpha_j \alpha_k (1-\eta)(1-\eta), \\ 
    \frac{\partial m_j}{\partial \tilde{p}_j} &= 1-m_j, \\
    \frac{\partial D_{j \to k}^R}{\partial \tilde{p}_j} & = 0, \\
    \frac{\partial D_{j \to k}^R}{\partial \tilde{p}_k} & = \alpha_j (1-\eta) D_{j \to k}^R (D_{k \to j}^{R,-1} - D_{j \to k}^R).
\end{align*}
Plugging the above to $\partial h_j / \partial \tilde{p}_j$ and $\partial h_j / \partial \tilde{p}_k$ gives
\begin{align*}
    \frac{\partial h_j}{\partial \tilde{p}_j} & = (-1) \frac{1}{\varepsilon_{jj}^2} \alpha_j (1-\alpha_j)(1-\eta)^2(1 - m_k D_{j \to k}^R) - (1- m_j), \\
    \frac{\partial h_j}{\partial \tilde{p}_k} & = \frac{1}{\varepsilon_{jj}^2} \alpha_k \alpha_j (1-\eta)^2(1- m_k D_{j \to k}^R) + (1 + \varepsilon_{jj}^{-1})(1 - m_k) D_{j \to k}^R  \\ 
    & + (1 + \varepsilon_{jj}^{-1})m_k \alpha_j (1-\eta) D_{j \to k}^R (D_{k \to j}^{R,-1} - D_{j \to k}^R).
\end{align*}
The derivatives $\frac{\partial h_k}{\partial \tilde{p}_j}$ and $\frac{\partial h_k}{\partial p_k}$ are obtained symmetrically.

\subsubsection*{Application to the Staples/Office Depot Example}
In the Staples/Office Depot example, we had $\alpha_1 = 0.473$, $\alpha_2 = 0.316$, $m_1 = 0.258$, $m_2 = 0.234$, $\varepsilon_{11} = - 3.875$, $\varepsilon_{22} = -4.273$, $D_{1 \to 2}^R = 0.599$, $D_{2 \to 1}^R = 0.691$, and $\eta = 6.121$. The GUPPIs were $\mathit{GUPPI}_1 = 0.104$ and $\mathit{GUPPI}_2 = 0.137$. 

The merger pass-through matrix calculated with the above formula is
\[
M = 
\begin{bmatrix}
1.005 & 0.345 \\
0.347 & 1.098
\end{bmatrix}.
\]
Thus, the first-order approximation gives
\[
\ddot{p} \approx M \cdot \mathit{GUPPI} = 
\begin{bmatrix}
    0.152 \\ 0.187
\end{bmatrix}
.
\]

\section{Accuracy of Upward Pricing Pressure \label{section:accuracy.of.upward.pricing.pressure}}

In one of my empirical applications, I have used \begin{equation}\label{equation:upward.pricing.pressure.predicts.merger.price.effects}
    \Delta p_j \approx \mathit{UPP}_j.
\end{equation} 
to predict merger price effects without separately estimating merger pass-through rates. In a companion paper \citep{koh2024concentration}, I repeat the simulation exercise in \citet{miller2017upward} and examine whether \eqref{equation:upward.pricing.pressure.predicts.merger.price.effects} is reasonable. \citet{miller2017upward} does not study the case of CES demand. 

\begin{figure}[htbp!]
\centering
\begin{subfigure}{.5\textwidth}
  \centering
  \includegraphics[width=.6\linewidth]{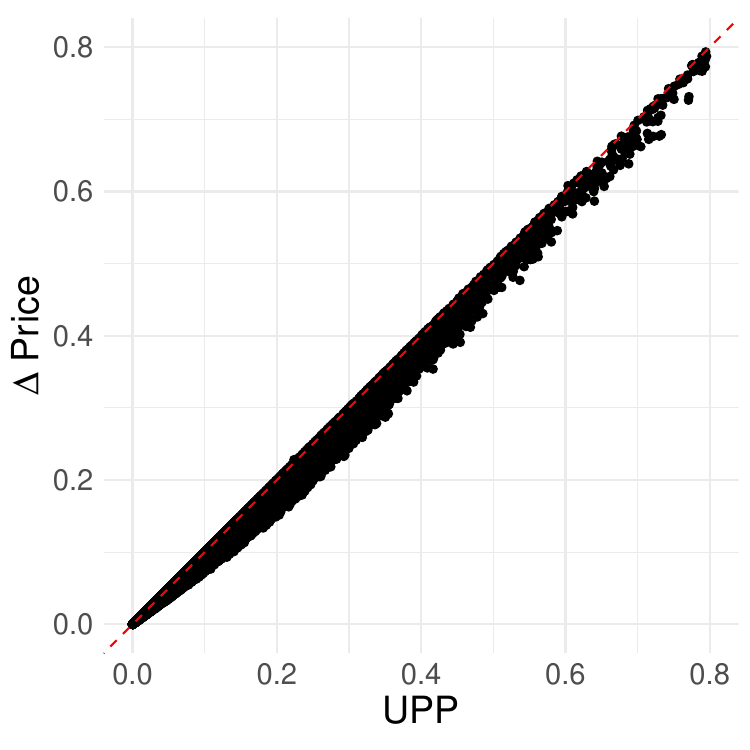}
  \caption{Logit}
  \label{fig:sub1}
\end{subfigure}%
\begin{subfigure}{.5\textwidth}
  \centering
  \includegraphics[width=.6\linewidth]{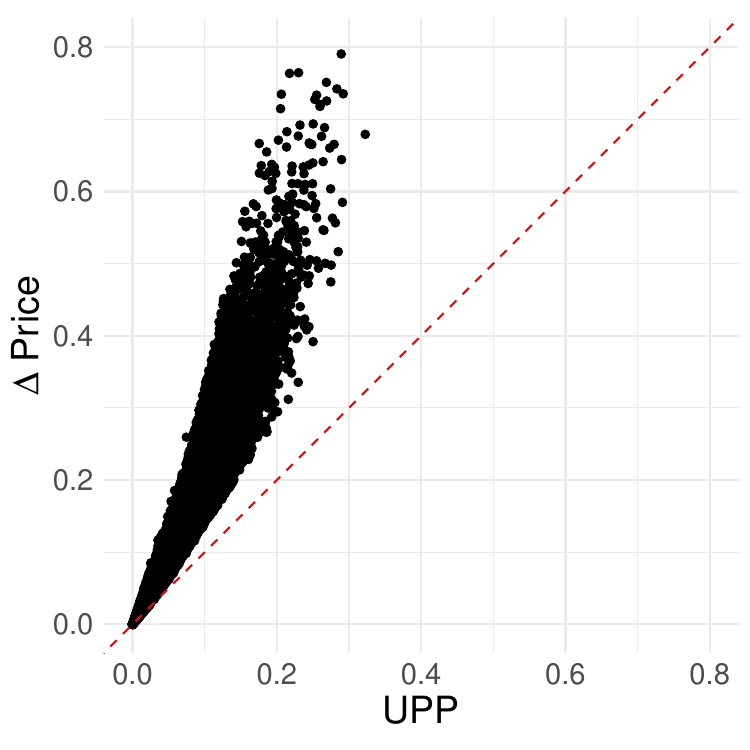}
  \caption{CES}
  \label{fig:sub2}
\end{subfigure}
\caption{Accuracy of UPP}
\label{figure:accuracy.of.upp}
\end{figure}

I report the simulation results in Figure \ref{figure:accuracy.of.upp}. First, Figure \ref{figure:accuracy.of.upp}-(a) shows that upward pricing pressure accurately predicts merger price effects in the case of logit demand. The result is also consistent with \citet{miller2017upward} (see their Figure 2). Next, Figure \ref{figure:accuracy.of.upp}-(b) shows the simulation results with CES demand. It shows that upward pricing pressure underpredicts merger price effects. Thus, using upward pricing pressures as proxies for merger price effects would yield conservative predictions.

\end{document}